\providecommand{\U}[1]{\protect\rule{.1in}{.1in}}
\newtheorem{theorem}{Theorem}
\theoremstyle{plain}
\newtheorem{corollary}{Corollary}
\newtheorem{definition}{Definition}
\newtheorem{notation}{Notation}
\newtheorem{property}{Property}
\newtheorem{proposition}{Proposition}
\numberwithin{equation}{section}
\begin{document}
\title[Entity-oriented spatial coding]{Entity-oriented spatial coding and discrete topological spatial relations}
\keywords{Spatial coding, spatial topology, spatial chromatic model}

\begin{abstract}
Based on a newly proposed spatial data model -- spatial chromatic model (SCM),
we developed a spatial coding scheme, called full-coded ordinary arranged
chromatic diagram (full-OACD). Full-OACD is a type of spatial tessellation,
where space is partitioned into a number of subspaces such as cells, edges,
and vertexes. These subspaces are called spatial particles and assigned with
unique codes -- chromatic codes. The generation, structures, computations, and
properties of full-OACD are introduced and relations between chromatic codes
and particle spatial topology are investigated, indicating that chromatic
codes provide a potential useful and meaningful tool not only for spatial
analysis in geographical information science, but also for other relevant
disciplines such as discrete mathematics, topology, and computer science.

\end{abstract}
\author{Weining Zhu}
\address[Zhu, W. N.]{ Laboratory of Geospatial Information, Zhejiang University}
\email{zhuwn@zju.edu.cn}
\maketitle

\section{Introduction}

Coding the objects has been widely used in many scientific and technological
fields, such as telecommunications, bioinformatics, and computer cryptography,
in which information has been expressed, transferred, and interpreted by
various codes in numbers, strings, or symbols. In geographic information
science (GIS), there are also some relevant applications of coding. For
example, a geographical coordinate systems provides a coding scheme using a
single or a series of coordinates to represent a spatial entity or region
\cite{Longley2001}, \cite{Hardisty2010}. Spatial index assigns codes (indexes)
to spatial objects so that they can be rapidly retrieved from spatial
databases \cite{Schubert2013}, \cite{Halaoui2008}. In geocoding systems, land
lots and zip codes allow spatial locations and postal addresses to be readily
memorized and exclusively identified \cite{Curry1998}, \cite{Davis2007}.

The objective of this study is to do the similar work for coding the pure
space itself. Actually a plannar Cartesian coordinate system is also a coding
scheme where a point in space is coded by such as a coordinate ($x,y$). Based
on a newly proposed GIS data model -- spatial chromatic model (SCM)
\cite{Zhu2015}, we suggest a spatial coding scheme, called full-coded ordinary
arranged chromatic diagram (full-OACD). Full-OACD can be taken as an extension
of OACD, which is a standard pattern of SCM. SCM has demonstrated its
significant potentials for GIS theories and applications in diverse aspects:
the first law of geography, reasoning spatial topology, point pattern
recognition, and generalized Voronoi algorithms, etc. \cite{Zhu2010},
\cite{Zhu2015}.

Space in SCM is defined as the object-oriented space where the elementary unit
is a cell. A cell is characterized by its chromatic code, typically a string
of natural numbers. One problem of OACDs is that only cells are coded, but
cellular boundaries and feature nodes, such as edges and vertexes generated
from half-plane partitions, have not been coded, and hence we may lose some
particular spatial information, for example, the subspaces somewhere that are
unable to be assigned to any cell. To solve this problem, we therefore
extended OACD to full-OACD, a full-space coding scheme. In full-OACD, all
spatial components, including cells, edges and vertexes, are coded in a
spatially and mathematically consistent way.

The below sections will introduce, analyze, and discuss the procedures of
generating full-OACDs, some important definitions, notations, properties, and
theorems (Section 2), topological relations among cells, edges, vertexes, and
complexes (Section 3), as well as their spatial implications, notes, and
suggested future work (Section 4).

\section{Full-coded ordinary arranged chromatic diagram}

Let $\boldsymbol{P}=\{p_{1},p_{2},\ldots,p_{n}\}$ is a point set containing
$n$ points associated with an index set $\boldsymbol{I}=\{1,2,\ldots,n\}$. The
point set is also called the \emph{generator set} and points in
$\boldsymbol{P}$ are \emph{generators}, which can be treated as geographical
entities or just any general objects. The set $\boldsymbol{Q}$ is a family of
subsets of $\boldsymbol{P}$ consisting of all unordered point-pairs in
$\boldsymbol{P}$, that is, $\boldsymbol{Q}=\{\{p_{i},p_{j}\}|p_{i},p_{j}%
\in\boldsymbol{P}$, $i\neq j$, $i,j\in\boldsymbol{I}\}$. The generation of a
full-OACD follows the below steps.

\begin{description}
\item[Step (1)] With respect to a point-pair $q=\{p_{i},p_{j}\}\in
\boldsymbol{Q}$, using their perpendicular bisector $pb\langle i,j\rangle$ to
partition the space into two half-planes $hp(i,j)$ and $hp(j,i)$, where a
point $p$ in $hp(i,j)$ is with Euclidean distance $d(p,p_{i})<d(p,p_{j})$, in
$hp(j,i)$ with $d(p,p_{j})<d(p,p_{i})$, and in $pb\langle i,j\rangle$ with
$d(p,p_{i})=d(p,p_{j})$.

\item[Step (2)] Assign two half-planes $hp(i,j)$ and $hp(j,i)$ the codes
$(p_{1}^{0}$, $p_{2}^{0}$, $\ldots$, $p_{i}^{1}$, $\ldots$, $p_{n}^{0})$ and
$(p_{1}^{0}$, $p_{2}^{0}$, $\ldots$, $p_{j}^{1}$, $\ldots$, $p_{n}^{0}) $,
respectively, in which the subscript number corresponds to the index of each
point, and the superscript number is the assigned numerical variable $t(q)$.
In this way, only for points $p_{i}$ or $p_{j}$, $t(q)=1$, but for the others,
$t(q)=0$. Similarly, assign the bisector $pb\langle i,j\rangle$ with code
$(p_{1}^{0}$, $p_{2}^{0}$, $\ldots$, $p_{i}^{\frac{1}{2}}$, $\ldots$,
$p_{j}^{\frac{1}{2}}$, $\ldots$, $p_{n}^{0})$, that is, for both $p_{i}$ and
$p_{j}$, $t(q)=\frac{1}{2}$, but for the others, $t(q)=0 $. See the simplest
full-OACD generated from two entities in Fig.1.

\item[Step (3)] Repeat steps (1) and (2) for all $k=\frac{1}{2}n(n-1)$
point-pairs in $\boldsymbol{Q}$, and then overlay the $2k$ half-planes so that
they generate a spatial tessellation, containing a number of faces, edges, and vertexes.

\item[Step (4)] The chromatic code of each face, edge, and vertex is the sum
of the values $t(q)$ that are acquired from each half-plane partition, that
is,
\begin{equation}
\left(  p_{1}^{%
{\textstyle\sum\limits_{q\in\boldsymbol{Q}}}
t(q)},p_{2}^{%
{\textstyle\sum\limits_{q\in\boldsymbol{Q}}}
t(q)},\ldots,p_{i}^{%
{\textstyle\sum\limits_{q\in\boldsymbol{Q}}}
t(q)},\ldots,p_{n}^{%
{\textstyle\sum\limits_{q\in\boldsymbol{Q}}}
t(q)}\right)
\end{equation}

\end{description}

\begin{figure}
\centerline{\includegraphics[width=2in]{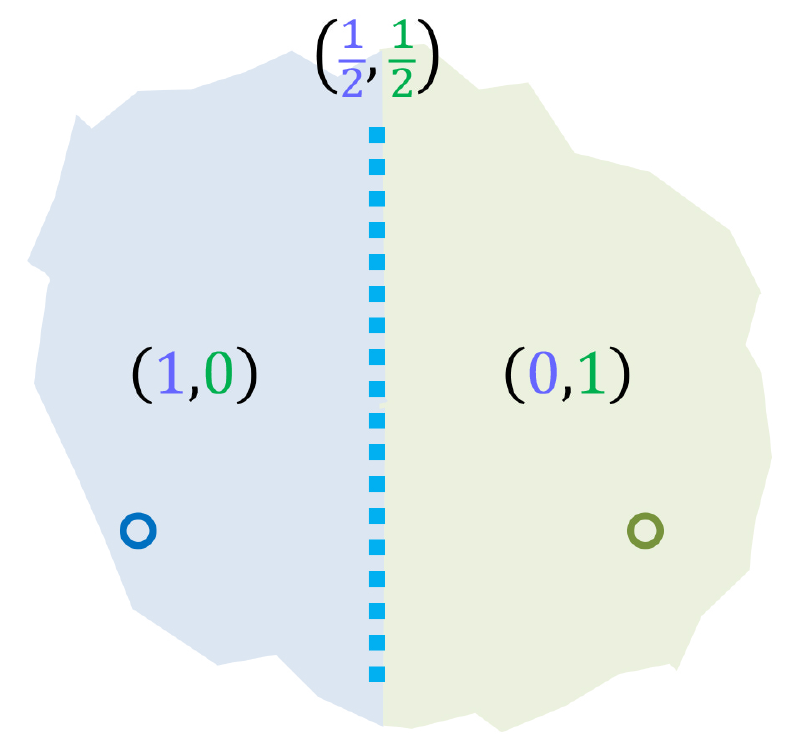}}
\caption{A full-coded OACD generated from 2 entities.}
\label{Fig1}
\end{figure}

Note that the point set $\boldsymbol{P}$ could be in any dimensional space $%
\mathbb{R}
^{m}$, and hence each partition divides the space into two half-spaces rather
than half-planes. This study mainly focuses on the planar full-OACDs in space
$%
\mathbb{R}
^{2}$. Fig.2 shows the procedure of generating a full-OACD from 3 points
(Fig.2a) in plane, denoted by $OACD(3,%
\mathbb{R}
^{2})$. Through half-plane partitions, we get 6 half-planes in Fig.2b-2d, then
we overlap them together into a diagram such that in Fig.2e, and finally we
sum the $t(q)$'s to compute chromatic code for each subspace in the diagram Fig.2f.

\begin{figure*}
\centerline{\includegraphics[width=5in]{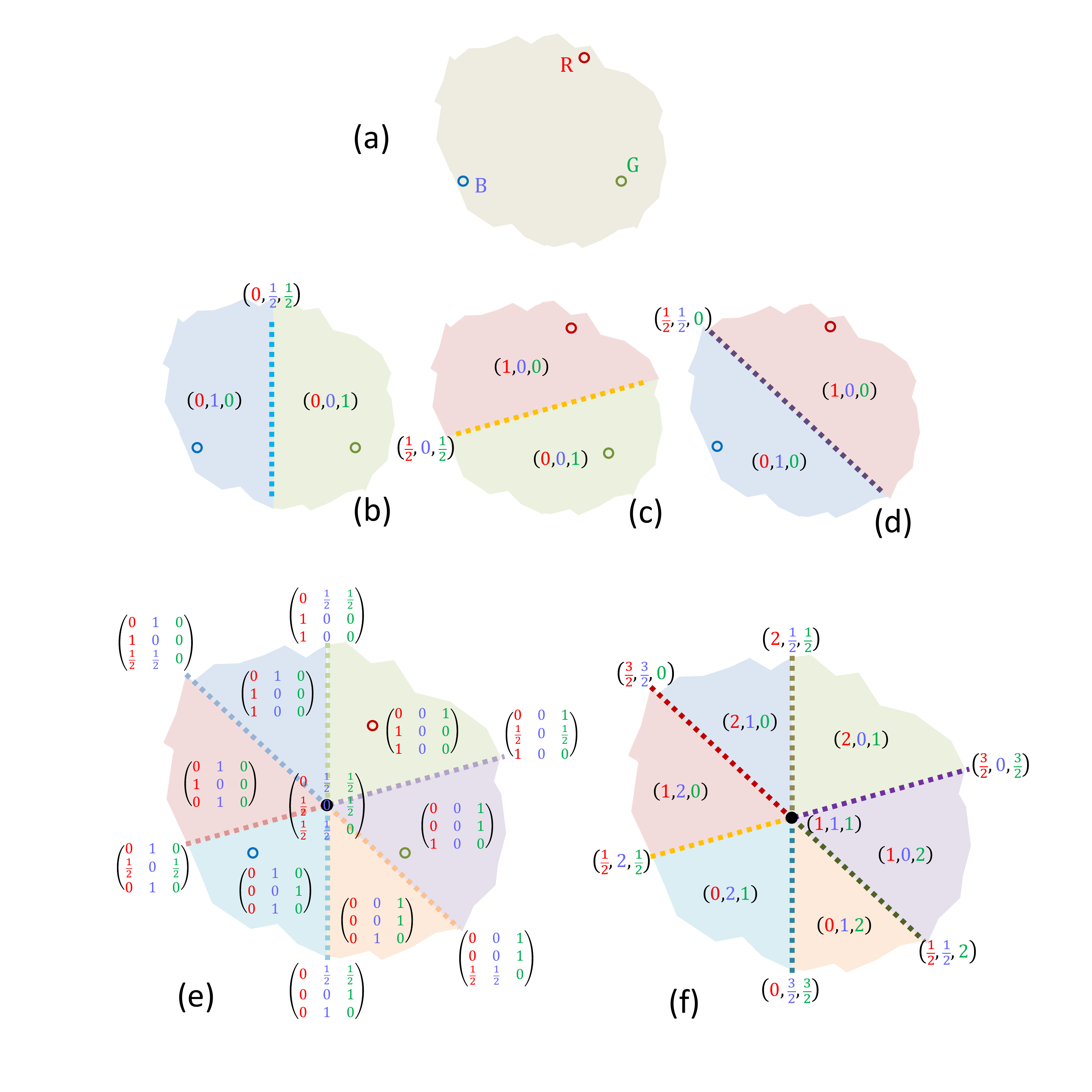}}
\caption{The procedure of generating a full-OACD(3, $\mathbb{R}^{2}$). (a) The generator set consists of three points marked with color R,
G, and B; (b)-(d) Half-plane partitions and assignments of chromatic codes
with respect to perpendicular bisectors $pb\left\langle B,G\right\rangle $, $%
pb\left\langle G,R\right\rangle $, and $pb\left\langle R,B\right\rangle $,
respectively. (e) Overlapping all the six half-planes in (b)-(d) together;
and (f) Adding all chromatic components together to form the chromatic codes.}
\label{Fig2}
\end{figure*}

In step (2), if we do not assign $t(q)=\frac{1}{2}$ to any bisectors, then the
obtained diagram is OACD. Therefore edges and vertexes in OACDs are without
codes. This makes the important difference bewteen OACD and full-OACD, where
edges and vertexes are with codes.

The subspaces, i.e., faces, edges, and vertexes generated in full-OACD are
called \emph{spatial particles }(denoted by $\Omega$), and faces are
particularly called \emph{cells} (denoted by $\zeta$), which has been
preliminarily studied in OACD \cite{Zhu2010}, \cite{Zhu2015}. Chromatic codes
of particles are $n$-tuples such as $\Omega(t_{1},t_{2},\ldots,t_{n}) $, in
which the number $t_{i}$ is called the \emph{chromatic component} of $p_{i}$
in the code, or the component at location $i$. Easy to know that $t_{i}$ will
be either integer or half-integer. Sometimes, if we are only interested in,
say, components of $p_{i}$ and $p_{j}$, then a chromatic code $\Omega
(t_{1},t_{2},\ldots,t_{i},\ldots t_{j},\ldots,t_{n})$ can be rewritten in a
short form such as $\Omega(t_{i},t_{j})\cup(T_{others})$, or just
$\Omega(t_{i},t_{j})$.

Fig.3 shows another two examples of full-OACDs. Fig.3a is an original
full-coded $OACD(4,%
\mathbb{R}
^{2})$ and Fig.3b is a homomorphic part of a full-coded $OACD(6,%
\mathbb{R}
^{2})$, where each spatial particles are coded in 6-tuples. Observing particle
patterns and codes in these full-OACDs we can find out many interesting properties.

\begin{definition}
Given a particle $\Omega(t_{1},t_{2},\ldots,t_{n})$, the ascending order of
its chromatic components is called the \emph{chromatic base} of the particle,
and denoted by $\beta(\Omega)=\{t_{1}^{\prime},t_{2}^{\prime},\ldots
t_{n}^{\prime}\}$.
\end{definition}

For example, cells $\zeta_{1}(0,2,3,1)$ and $\zeta_{2}(2,1,3,0)$ both have the
same base $\beta(\zeta_{1})=\beta(\zeta_{2})=\{0,1,2,3\}$. If two components
are equal, their orders are in random. For example, the base of edges
$(\frac{3}{2},0,3,\frac{3}{2})$ and $(\frac{3}{2},\frac{3}{2},3,0)$ are both
$\{0,\frac{3}{2},\frac{3}{2},3\}$. Chromatic codes are actually the
permutations of different bases. In previous studies, chromatic base was also
called the primary code of a cell \cite{Zhu2010}.

\begin{figure*}
\centerline{\includegraphics[width=5in]{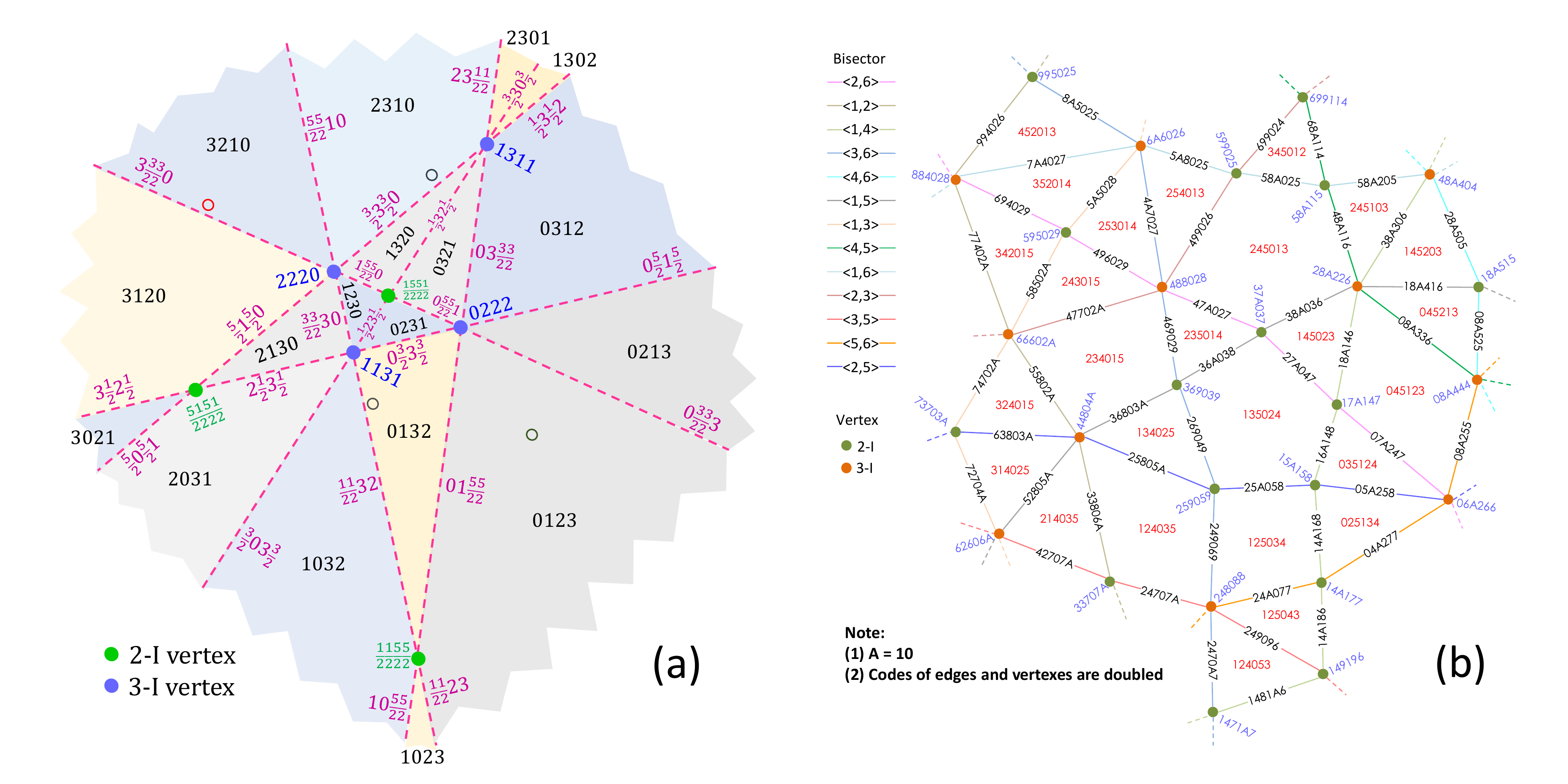}}
\caption{Two examples of full-OACDs. (a) a full-OACD(4, $%
\mathbb{R}
^{2}$); (b) Homomorphic part of a full-OACD(6, $%
\mathbb{R}
^{2}$).}
\label{Fig3}
\end{figure*}

\begin{definition}
If two particles $\Omega_{1}(t_{11}$, $t_{12}$, $\ldots$, $t_{1i}$, $\ldots$,
$t_{1n})$ and $\Omega_{2}(t_{21}$, $t_{22}$, $\ldots$, $t_{2i}$, $\ldots$,
$t_{2n})$ have the same chromatic codes, then they are called equi-color, and
denoted by $\Omega_{1}=\Omega_{2}$, that is,%
\begin{equation}
\forall i,t_{1i}=t_{2i}\Leftrightarrow\Omega_{1}=\Omega_{2}%
\end{equation}
otherwise, $\Omega_{1}\neq\Omega_{2}$.

If they have the same chromatic bases, then they are called equi-base, denoted
by $\Omega_{1}\cong\Omega_{2}$, that is, if $\beta(\Omega_{1})=\{t_{11}%
^{\prime}$, $t_{12}^{\prime}$, $\ldots$, $t_{1i}^{\prime}$, $\ldots$,
$t_{1n}^{\prime}\}$ and $\beta(\Omega_{2})=\{t_{21}^{\prime} $, $t_{22}%
^{\prime}$, $\ldots$, $t_{2i}^{\prime}$, $\ldots$, $t_{2n}^{\prime}\}$, then%
\begin{equation}
\forall i,t_{1i}^{\prime}=t_{2i}^{\prime}\Leftrightarrow\Omega_{1}\cong%
\Omega_{2}%
\end{equation}
otherwise, $\Omega_{1}\ncong\Omega_{2}$.
\end{definition}

\begin{property}
\label{P_base}Given two particles $\Omega_{1}$ and $\Omega_{2}$,%
\begin{equation}
\Omega_{1}=\Omega_{2}\Rightarrow\Omega_{1}\cong\Omega_{2}%
\end{equation}
and hence%
\begin{equation}
\Omega_{1}\ncong\Omega_{2}\Rightarrow\Omega_{1}\neq\Omega_{2}%
\end{equation}

\end{property}

This property indicates that if two cells are equi-color, they must be
equi-base, and if they are not equi-base, they are impossible to be the equi-color.

The number of cells, edges, and vertexes in a full-coded $OACD(n,%
\mathbb{R}
^{2})$ depends on the point pattern of the generator set $\boldsymbol{P}$.
This study mainly focuses on the \emph{general cases} of $\boldsymbol{P}$ in a
plane: (1) no more than two bisectors are parallel, and (2) no more than three
bisectors are concurrent, except that they are generated from the three
point-pairs which make a triangle.

\begin{definition}
In a general case of the point set $\boldsymbol{P}$, any three point-pairs
from three different points generate a vertex, called 3-I vertex (i.e., the
intersection of three perpendicular bisectors of a triangle), denoted by
$\varphi^{3I}$; and any two point-pairs from 4 different points generate a
vertex, called 2-I vertex (i.e., the intersection of two perpendicular
bisectors), denoted by $\varphi^{2I}$.
\end{definition}

Therefore vertexes $\varphi$ in full-coded $OACD(n,%
\mathbb{R}
^{2})$ are either 2-I or 3-I, see their examples in Fig.3.

\begin{property}
An $OACD(n,%
\mathbb{R}
^{2})$ contains $%
{\displaystyle\sum\nolimits_{i=1}^{C_{n}^{2}}}
i-C_{n}^{3}+1$ cells, $(C_{n}^{2})^{2}-3C_{n}^{3}$ edges, $C_{n}^{3}$ 3-I
vertexes, and $\frac{1}{2}C_{n}^{2}C_{n-2}^{2}$ 2-I vertexes.
\end{property}

\begin{proof}
The proof of the cell number could be referred to \cite{Zhu2010}. Here we only
prove the edge number. Suppose in a plane there are $n$ lines which intersect
with each other, then each line is divided into $n$ edges by the other $n-1$
lines, therefore the $n$ lines will generate $n^{2}$ edges. The total $n$
point will generate $C_{n}^{2}$ lines (bisectors) and hence $(C_{n}^{2})^{2}$
edges. But every three points generate a vertex which will reduce 3 edges,
therefore the total edge number will be $(C_{n}^{2})^{2}-3C_{n}^{3}$.
\end{proof}

\begin{property}
In an $OACD(n)$, the chromatic base of cells is%
\begin{equation}%
\mathbb{N}
=\{0,1,\ldots,n-1\}
\end{equation}

\end{property}

This property has been proved by \cite{Zhu2010}. It implies that all cells are
equi-base, and any two components of a cell are not equal. Below we use $%
\mathbb{N}
\lbrack i,j]$ to denote the integers between $i$ and $j$, and also including
$i$ and $j$.

\begin{property}
\label{P_Edge_Base}In an $OACD(n)$, the chromatic bases of edges are
\begin{equation}
\{%
\mathbb{N}
\backslash\{z,z+1\},z+\tfrac{1}{2},z+\tfrac{1}{2}\}
\end{equation}
for $z=%
\mathbb{N}
\lbrack0,n-2]$, meaning for each $z$ from $0$ to $n-2$, we obtain a base which
removes $z$ and $z+1$ from $N$ and then add two $z+\tfrac{1}{2}$.

Particularly, an edge (denote by $\eta$) generated by bisector $pb\left\langle
i,j\newline\right\rangle $ bears a code%
\begin{equation}
\eta(x_{i}^{z+\frac{1}{2}},x_{j}^{z+\frac{1}{2}})
\end{equation}
for $z=%
\mathbb{N}
\lbrack0,n-2]$.
\end{property}

\begin{proof}
Suppose $\eta$ is the edge between two cells $\zeta_{1}$ and $\zeta_{2}$,
therefore before the partition of $pb\left\langle i,j\newline\right\rangle $,
$\zeta_{1}$ and $\zeta_{2}$ should be merged into a larger cell $\zeta$ with
code $(x_{i}^{z},x_{j}^{z})$, that is, point $i$ and $j$ have the same
component $z$. After the partition, $\zeta_{1}$ and $\zeta_{2}$'s codes will
be $(x_{i}^{z+1},x_{j}^{z})$ and $(x_{i}^{z},x_{j}^{z+1})$, see the proof of
Lemma 2 in \cite{Zhu2010}. With respect to all other bisectors $pb\left\langle
i,x\newline\right\rangle $ or $pb\left\langle j\newline,x\right\rangle $,
$x\in\boldsymbol{I}\backslash\{i,j\}$, if $\zeta$ has not gained any
components, then minimum of $z$ could be 0; if $\zeta$ always gained one
component for all the other $n-2$ bisectors, then the maximum of $z$ could be
$n-2$. Therefore $\eta$'s chromatic code will be $(x_{i}^{z+\frac{1}{2}}%
,x_{j}^{z+\frac{1}{2}})$, and their bases will be $\{%
\mathbb{N}
\backslash\{z,z+1\},z+\tfrac{1}{2},z+\tfrac{1}{2}\}$, for $z=%
\mathbb{N}
\lbrack0,n-2]$.
\end{proof}

\begin{property}
\label{P_2I_Base}The chromatic bases of 2-I vertexes are
\begin{equation}
\{%
\mathbb{N}
\backslash\{z_{1},z_{2},z_{1}+1,z_{2}+1\},z_{1}+\tfrac{1}{2},z_{1}+\tfrac
{1}{2},z_{2}+\tfrac{1}{2},z_{2}+\tfrac{1}{2}\}
\end{equation}
for $z_{1}=%
\mathbb{N}
\lbrack0,n-4]$ and $z_{2}=%
\mathbb{N}
\lbrack z_{1}+2,n-2]$. Particularly, a vertex $\varphi^{2I}$ generated by two
bisectors $pb\left\langle i,j\newline\right\rangle $ and $pb\left\langle
u,v\right\rangle $ bears a code%
\begin{equation}
\varphi^{2I}(x_{i}^{z_{1}+\frac{1}{2}},x_{j}^{z_{1}+\frac{1}{2}},x_{u}%
^{z_{2}+\frac{1}{2}},x_{v}^{z_{2}+\frac{1}{2}})
\end{equation}
or%
\begin{equation}
\varphi^{2I}(x_{i}^{z_{2}+\frac{1}{2}},x_{j}^{z_{2}+\frac{1}{2}},x_{u}%
^{z_{1}+\frac{1}{2}},x_{v}^{z_{1}+\frac{1}{2}})
\end{equation}
for $z_{1}=%
\mathbb{N}
\lbrack0,n-4]$ and $z_{2}=%
\mathbb{N}
\lbrack z_{1}+2,n-2]$.
\end{property}

\begin{proof}
Suppose $pb\left\langle i,j\newline\right\rangle $ and $pb\left\langle
u,v\right\rangle $ are the last two bisectors partitioning a merged cell, then
according to the Lemma 2 in \cite{Zhu2010}, before the two partitions, the
cell should be with a code such as $(x_{i}^{z_{1}},x_{j}^{z_{1}},x_{u}^{z_{2}%
},x_{v}^{z_{2}})$. Let $z_{1}$ is the smaller integer, and then $z_{2}%
=z_{1}+\Delta$. After the two partitions by $pb\left\langle i,j\newline%
\right\rangle $ and $pb\left\langle u,v\right\rangle $, four new cells will be
generated with codes%
\begin{equation}
(x_{i}^{z_{1}},x_{j}^{z_{1}+1},x_{u}^{z_{1}+\Delta+1},x_{v}^{z_{1}+\Delta
})\label{eq_1}%
\end{equation}%
\begin{equation}
(x_{i}^{z_{1}+1},x_{j}^{z_{1}},x_{u}^{z_{1}+\Delta+1},x_{v}^{z_{1}+\Delta
})\label{eq_2}%
\end{equation}%
\begin{equation}
(x_{i}^{z_{1}},x_{j}^{z_{1}+1},x_{u}^{z_{1}+\Delta},x_{v}^{z_{1}+\Delta
+1})\label{eq_3}%
\end{equation}%
\begin{equation}
(x_{i}^{z_{1}+1},x_{j}^{z_{1}},x_{u}^{z_{1}+\Delta},x_{v}^{z_{1}+\Delta
+1})\label{eq_4}%
\end{equation}
If $\Delta=0$ or $1$, then we can always find that in some codes of
Eq.(\ref{eq_1})-(\ref{eq_4}), two components are equal. For example, if
$\Delta=0$, there are two $z_{1}$'s and two $z_{1}+1$'s in Eq.(\ref{eq_1}),
and if $\Delta=1$, there are two $z_{1}+1$'s in Eq.(\ref{eq_2}). But cellular
base is $%
\mathbb{N}
$, meaning any two components are not equal, therefore $\Delta\geq2$. Because
$pb\left\langle i,j\newline\right\rangle $ and $pb\left\langle
u,v\right\rangle $ involve 4 points, then the maximum of $z_{1}$ should be
$n-4$, and hence $z_{1}=%
\mathbb{N}
\lbrack0,n-4]$, $z_{2}=%
\mathbb{N}
\lbrack z_{1}+2,n-2]$. The remainder of the proof follows along the line of
the proof of Property \ref{P_Edge_Base}.
\end{proof}

\begin{property}
\label{P_3I_Base}The chromatic bases of 3-I vertexes are%
\begin{equation}
\{%
\mathbb{N}
\backslash\{z,z+1,z+2\},z+1,z+1,z+1\}\label{Eq_3I_base}%
\end{equation}
for $z=%
\mathbb{N}
\lbrack0,n-3]$. Particularly, a vertex $\varphi^{3I}$ generated by three
bisectors $pb\left\langle i,j\newline\right\rangle $, $pb\left\langle
j\newline,k\right\rangle $, and $pb\left\langle k,i\right\rangle $ bears a
code%
\begin{equation}
\varphi^{3I}(x_{i}^{z+1},x_{j}^{z+1},x_{k}^{z+1})
\end{equation}
for $z=%
\mathbb{N}
\lbrack0,n-3]$.
\end{property}

\begin{proof}
Suppose before the partitions of $pb\left\langle i,j\newline\right\rangle $,
$pb\left\langle j\newline,k\right\rangle $, and $pb\left\langle
k,i\right\rangle $, the merged cell has a code $(x_{i}^{z},x_{j}^{z+\Delta
_{1}},x_{k}^{z+\Delta_{2}})$, where $\Delta_{1}\geq0$ and $\Delta_{2}\geq0$.
After the partitions, six new cells will be generated with codes%
\begin{equation}
(x_{i}^{z+2},x_{j}^{z+\Delta_{1}+1},x_{k}^{z+\Delta_{2}})\cup(X_{others}%
)\label{Eq_3IBase_1}%
\end{equation}%
\begin{equation}
(x_{i}^{z+2},x_{j}^{z+\Delta_{1}},x_{k}^{z+\Delta_{2}+1})\cup(X_{others}%
)\label{Eq_3IBase_2}%
\end{equation}%
\begin{equation}
(x_{i}^{z+1},x_{j}^{z+\Delta_{1}},x_{k}^{z+\Delta_{2}+2})\cup(X_{others}%
)\label{Eq_3IBase_3}%
\end{equation}%
\begin{equation}
(x_{i}^{z+1},x_{j}^{z+\Delta_{1}+2},x_{k}^{z+\Delta_{2}})\cup(X_{others}%
)\label{Eq_3IBase_4}%
\end{equation}%
\begin{equation}
(x_{i}^{z},x_{j}^{z+\Delta_{1}+1},x_{k}^{z+\Delta_{2}+2})\cup(X_{others}%
)\label{Eq_3IBase_5}%
\end{equation}%
\begin{equation}
(x_{i}^{z},x_{j}^{z+\Delta_{1}+2},x_{k}^{z+\Delta_{2}+1})\cup(X_{others}%
)\label{Eq_3IBase_6}%
\end{equation}
We examine the below possible values of $\Delta_{1}$ and $\Delta_{2}$.

(1) $\Delta_{1}=1$ or $\Delta_{1}=2$, $\Delta_{2}=1$ or $\Delta_{2}=2$.

If $\Delta_{1}=1$ or $\Delta_{1}=2$, for example, in Eq.(\ref{Eq_3IBase_3})
and (\ref{Eq_3IBase_2}) there will be two components equalling $z+1$ or $z+2$;
similarly, if $\Delta_{2}=1$ or $\Delta_{2}=2$, in Eq.(\ref{Eq_3IBase_4}) and
(\ref{Eq_3IBase_1}) there will be two components equalling $z+1$ or $z+2$.

(2) $\Delta_{1}\geq3$, $\Delta_{2}\geq3$.

According to Eq.(\ref{Eq_3IBase_5}) and (\ref{Eq_3IBase_6}), there must be
values $z+1$ and $z+2$ in $X_{others}$, because they are not in locations
$x_{i}$, $x_{j}$, or $x_{k}$. However, according to Eq.(\ref{Eq_3IBase_1}%
)-(\ref{Eq_3IBase_4}), $z+1$ and $z+2$ are already in $x_{i}$, so that they
cannot be in $X_{others}$.

From the above two cases we know that the only allowed values of $\Delta_{1} $
and $\Delta_{2}$ are both 0, and the merged cell must bear a code%
\begin{equation}
(x_{i}^{z},x_{j}^{z},x_{k}^{z})
\end{equation}

Then at the intersection of the three bisectors, the 3-I vertex acquires
components $\tfrac{1}{2}$ at $x_{i}$ and $\tfrac{1}{2}$ at $x_{j}$ from
$pb\left\langle i,j\newline\right\rangle $, $\tfrac{1}{2}$ at $x_{j}$ and
$\tfrac{1}{2}$ at $x_{k}$ from $pb\left\langle j,k\newline\right\rangle $,
$\tfrac{1}{2}$ at $x_{k}$ and $\tfrac{1}{2}$ at $x_{i}$ from $pb\left\langle
k,i\newline\right\rangle $, and therefore gain a code%
\begin{equation}
(x_{i}^{z+\frac{1}{2}+\frac{1}{2}},x_{j}^{z+\frac{1}{2}+\frac{1}{2}}%
,x_{k}^{z+\frac{1}{2}+\frac{1}{2}})=(x_{i}^{z+1},x_{j}^{z+1},x_{k}^{z+1})
\end{equation}
From Eq.(\ref{Eq_3IBase_1})-(\ref{Eq_3IBase_6}), we know that $X_{others}$ do
not contain components $z$, $z+1$, and $z+2$, thus we know the base of the 3-I
vertex is in form of Eq.(\ref{Eq_3I_base}). Because the range of $z$ in a cell
is from 0 to $n-1$, the minimum $z$ should be 0 and maximum $z$ should be
$z+2=n-1\Rightarrow$ $z=n-3$.
\end{proof}

This property indicates that the chromatic codes of 3-I vertexes contain three
identical integers which are different from the rest integers in codes. If
cancel one $z+1$, Eq.(\ref{Eq_3I_base}) can be rewritten as
\begin{equation}
\{%
\mathbb{N}
\backslash\{z,z+2\},z+1,z+1\}
\end{equation}
for $z=%
\mathbb{N}
\lbrack0,n-3]$.

\begin{theorem}
\label{Theorem_unique_base}Different types of particles in a full-OACD are not
equi-base, that is,%
\begin{equation}
\zeta\ncong\eta\ncong\varphi^{_{2I}}\ncong\varphi^{_{3I}}%
\end{equation}

\end{theorem}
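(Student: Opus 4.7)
The plan is to distinguish the four particle types via two simple invariants of the chromatic base: the number $h(\beta)$ of half-integer entries, and the maximum multiplicity with which any single value appears. Both invariants depend only on the base, so if two particles disagree on either one they cannot be equi-base.

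First, I would read off $h$ directly from the four base formulas already established. The cellular base $\{0,1,\ldots,n-1\}$ gives $h(\beta(\zeta))=0$. Property \ref{P_Edge_Base} gives $h(\beta(\eta))=2$, coming from the two copies of $z+\tfrac12$. Property \ref{P_2I_Base} gives $h(\beta(\varphi^{2I}))=4$, from two copies each of $z_1+\tfrac12$ and $z_2+\tfrac12$; here the constraint $z_2\geq z_1+2$ enforced in the proof of that property is what keeps the two half-integer values distinct, so the count does not collapse. Finally Property \ref{P_3I_Base} gives $h(\beta(\varphi^{3I}))=0$, since every entry is an integer. The four values $0,2,4,0$ separate $\zeta,\eta,\varphi^{2I}$ pairwise from one another and separate both $\eta$ and $\varphi^{2I}$ from $\varphi^{3I}$, which disposes of five of the six pairs.

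The only remaining pair is $\zeta$ versus $\varphi^{3I}$, whose bases both consist entirely of integers. For this pair I would switch to the multiplicity invariant: the cellular base $\{0,1,\ldots,n-1\}$ has every element appearing exactly once, whereas by Property \ref{P_3I_Base} the 3-I base contains the value $z+1$ with multiplicity three (or, after the cancellation noted in the text, with multiplicity at least two). A base with a repeated entry cannot coincide with a base whose entries are all distinct, so $\zeta\ncong\varphi^{3I}$, completing the chain of non-equivalences asserted in the theorem.

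I do not anticipate a serious obstacle; the argument is essentially bookkeeping on the four base formulas. The one place that warrants a moment of care is the 2-I count above, where one must verify that the separation $z_2-z_1\geq 2$ genuinely prevents the two half-integer values from coinciding and dropping $h$ below four. Everything else follows by direct inspection of the stated Properties.
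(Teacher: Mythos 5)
Your proposal is correct and is essentially the paper's own argument: the paper gives no separate proof, but the remark immediately following the theorem invokes exactly your invariants (all-distinct integers for cells, two half-integers for edges, four half-integers for 2-I vertexes, three equal integers for 3-I vertexes), read off from Properties \ref{P_Edge_Base}, \ref{P_2I_Base}, and \ref{P_3I_Base}. One minor point: since the chromatic base retains multiplicities, a 2-I vertex base has four half-integer entries whether or not $z_{1}+\tfrac{1}{2}$ and $z_{2}+\tfrac{1}{2}$ coincide, so the check on $z_{2}-z_{1}\geq 2$ you flag is not actually needed for the count.
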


This theorem provides an approach to determine particle types. For example, if
we see a particle with chromatic components being all different integers, then
it must be a cell; if it contains 2 half-integers, it must be an edge; if it
contains 3 equal integers, it must be a 3-I vertex; and if contains 4
half-integers, it must be a 2-I vertex.

\begin{notation}
The component-counting function $H(\Omega,m)$ is a function counting the
number of $m$ in the chromatic code of $\Omega$, that is, the function tells
how many components equal to $m$.
\end{notation}

\begin{definition}
The difference tuple of two particles $\Omega_{1}(t_{11}$, $t_{12}$, $\ldots$,
$t_{1n})$ and $\Omega_{2}(t_{21}$, $t_{22}$, $\ldots$, $t_{2n})$ is defined
by
\begin{align}
\Psi(\Omega_{1},\Omega_{2})  & =(\psi_{1},\psi_{2},\ldots,\psi_{n})\\
& =(|t_{11}-t_{21}|,|t_{12}-t_{22}|,\ldots,|t_{1n}-t_{2n}|)\nonumber
\end{align}
where $\psi_{i}=|t_{1i}-t_{2i}|$.

Then the chromatic distance between the two particles is defined by
\begin{equation}
\delta(\Omega_{1},\Omega_{2})=\sum_{i=1}^{n}\psi_{i}%
\end{equation}
and each $\psi_{i}$ is called the chromatic distance at the component $i$, and
denoted by $\delta(\psi_{i})$.

In addition, the code distance between two particles is defined by%
\begin{equation}
\gamma(\Omega_{1},\Omega_{2})=n-H(\delta(\Omega_{1},\Omega_{2}),0)
\end{equation}

\end{definition}

The chromatic distance is also called transition number $T$ between two cells
in our previous study, and it is actually the Manhattan distance between two
particles. The code distance is actually the Hamming distance between two
particles if we treat their codes and components as strings rather than numbers.

\begin{definition}
The union of $m$ particles $\Omega_{1}(t_{11}$, $t_{12}$, $\ldots$, $t_{1n})$,
$\Omega_{2}(t_{21}$, $t_{22}$, $\ldots$, $t_{2n})$, \ldots, $\Omega_{m}%
(t_{m1}$, $t_{m2}$, $\ldots$, $t_{mn})$ is called a complex or a m-complex,
denoted by $\Theta$, and its code is given by%
\begin{align}
\Theta\{\Omega_{1},\Omega_{2},\ldots,\Omega_{m}\}  & =\sum_{i=1}^{m}\Omega
_{m}\\
& =\left(  \sum_{i=1}^{m}t_{i1},\sum_{i=1}^{m}t_{i2},\ldots,\sum_{i=1}%
^{m}t_{in}\right)  \text{.}\nonumber
\end{align}
These $m$ particles are called the elemental particles of the m-complex. If
the $m$ particles are all cells, then the m-complex is also called a m-cell
cluster. One particle could be taken as a 1-complex.
\end{definition}

\begin{theorem}
\label{Theorem_unique_cell}If $\zeta_{1}$ and $\zeta_{2}$ are two cells in a
full-OACD, then
\begin{equation}
\zeta_{1}\neq\zeta_{2},\label{Eq_unique_cell}%
\end{equation}

\end{theorem}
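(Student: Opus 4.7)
The plan is to argue that the chromatic code of a cell is simply a rank encoding of the distances from the cell to the generators, which uniquely identifies the cell in the bisector arrangement. Throughout, I read $d(\zeta,p_i)$ as the common value of $d(p,p_i)$ for $p\in\zeta$; this is well defined up to strict inequalities because a cell lies in an open half-plane of every bisector.

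First I would unwind Steps~(1)--(4) of the construction to rewrite
\[
t_i \;=\; \#\{\,j\neq i : \zeta\subseteq hp(i,j)\,\} \;=\; \#\{\,j\neq i : d(\zeta,p_i)<d(\zeta,p_j)\,\},
\]
since each bisector $pb\langle i,j\rangle$ contributes $+1$ at position $i$ exactly when $\zeta$ is on the $p_i$-side and $0$ otherwise. Sorting the generators by distance from $\zeta$, this identity pins down $t_i=n-\mathrm{rank}(p_i)$, and in particular yields the equivalence $t_i>t_j \Leftrightarrow d(\zeta,p_i)<d(\zeta,p_j)$.

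With this in hand the theorem is immediate by contradiction. Suppose $\zeta_1$ and $\zeta_2$ are distinct cells sharing the same code. As distinct chambers of the bisector arrangement, they must sit on opposite sides of at least one bisector: $\zeta_1\subseteq hp(i,j)$ and $\zeta_2\subseteq hp(j,i)$ for some indices $i,j$. Then $d(\zeta_1,p_i)<d(\zeta_1,p_j)$ while $d(\zeta_2,p_i)>d(\zeta_2,p_j)$, and the equivalence above forces $t_{1i}>t_{1j}$ but $t_{2i}<t_{2j}$, which contradicts equality of the two codes.

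The only nontrivial ingredient is the identification of $t_i$ with the rank-based count, a direct translation of the assignment rule in Step~(2); once that is in place, the argument uses no further structural properties of full-OACDs beyond the elementary fact that cells are the open chambers of the bisector arrangement. No induction or case analysis on $n$ is needed, and the proof does not require the stronger Property~\ref{P_Edge_Base}--\ref{P_3I_Base} machinery used for the non-cell particles.
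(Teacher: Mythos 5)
Your proof is correct. Note, however, that the paper does not prove this theorem at all: it simply states the result and defers to \cite{Zhu2010}, where the uniqueness of cell codes was established within the half-plane-overlay machinery (the same ``merged cell split by the last bisector'' style of argument, via Lemma 2 of that reference, that this paper reuses in its proofs of the edge and vertex base properties). Your route is genuinely different and, in my view, cleaner: the identification $t_i=\#\{j\neq i: d(\zeta,p_i)<d(\zeta,p_j)\}$ follows directly from Step (2) of the construction, the monotonicity $d(\zeta,p_i)<d(\zeta,p_j)\Rightarrow t_i\geq t_j+1$ is an immediate counting observation, and the separation step is exactly the standard fact that a cell is the set of points with a fixed strict sign vector with respect to the bisector arrangement, so two distinct cells must lie on opposite sides of some $pb\langle i,j\rangle$, forcing $t_{1i}>t_{1j}$ and $t_{2i}<t_{2j}$ and hence distinct codes. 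What this buys you: the argument is self-contained, works verbatim in $\mathbb{R}^{m}$ for any $m$, does not use the general-position assumptions (no appeal to 2-I/3-I structure, nor to Properties \ref{P_Edge_Base}--\ref{P_3I_Base}), and as a byproduct the rank formula $t_i=n-\mathrm{rank}(p_i)$ re-derives the known property that every cell has chromatic base $\{0,1,\ldots,n-1\}$. The only point worth stating explicitly if you write this up is the justification that distinct chambers differ in at least one sign, i.e.\ that the set of points with a given strict sign vector is convex (an intersection of open half-planes) and therefore is a single cell; you allude to this but it is the one load-bearing ``elementary fact'' in the argument.
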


This theorem has been proved by \cite{Zhu2010}. It tells that any two cells
are not equi-color -- their codes are unique.

Because any vertex in OACD is either 2-I or 3-I, therefore the a full-OACD is
tessellated by two types of structural units such as the two in Fig.4: the one
containing $\varphi^{2I}$ is called \emph{2-I unit }(Fig.4a), and the other
containing $\varphi^{3I}$ is called \emph{3-I unit }(Fig.4b). According to the
proofs of Property \ref{P_2I_Base} and \ref{P_3I_Base}, particle codes in
2-I/3-I units should be those shown in Fig.4, and then it is easy to calculate
and prove the below four properties.

\begin{figure*}
\centerline{\includegraphics[width=5in]{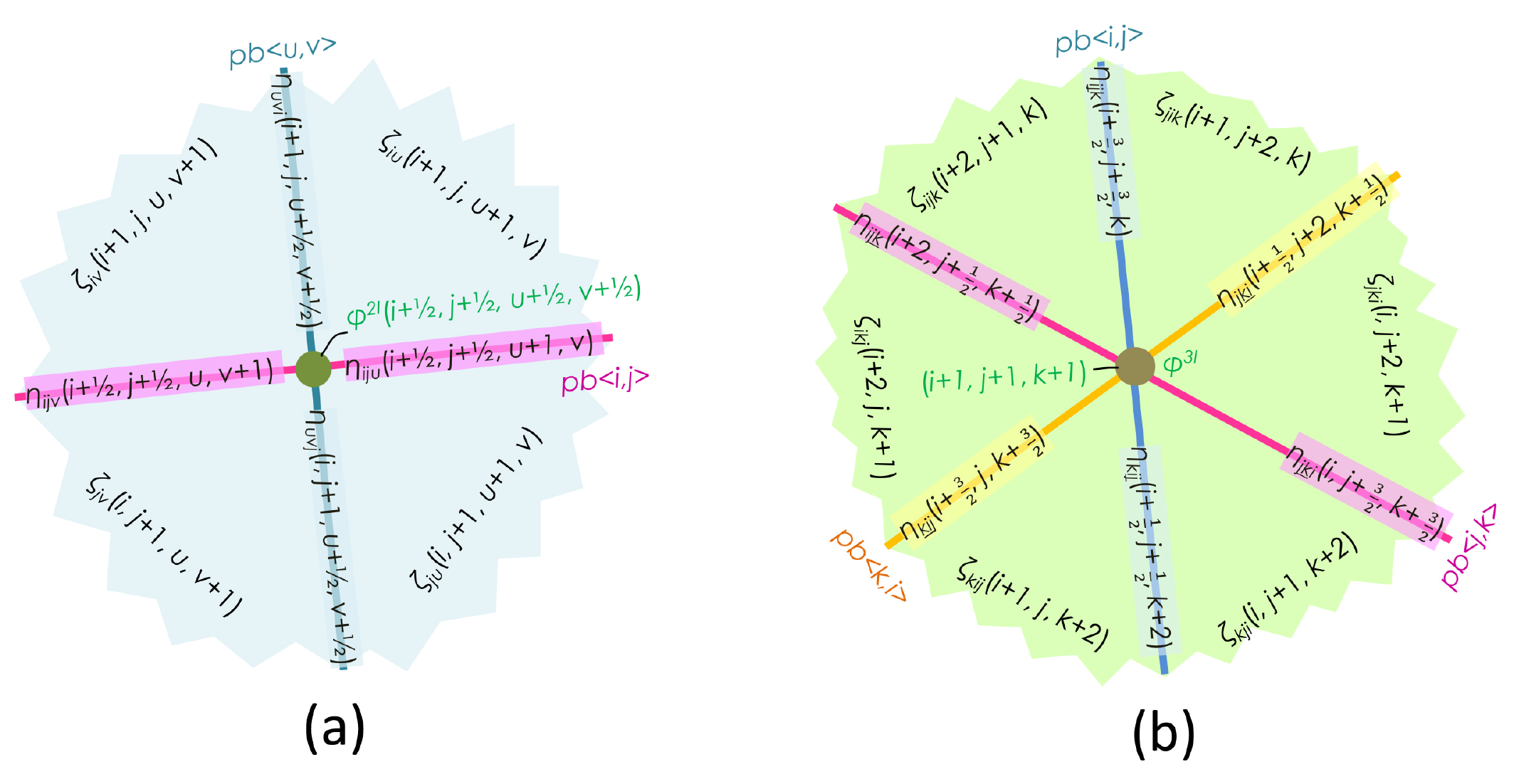}}
\caption{Two basic structural units of full-OACD. (a) 2-I unit; (b)
3-I unit.}
\label{Fig4}
\end{figure*}

\begin{figure*}
\centerline{\includegraphics[width=4.5in]{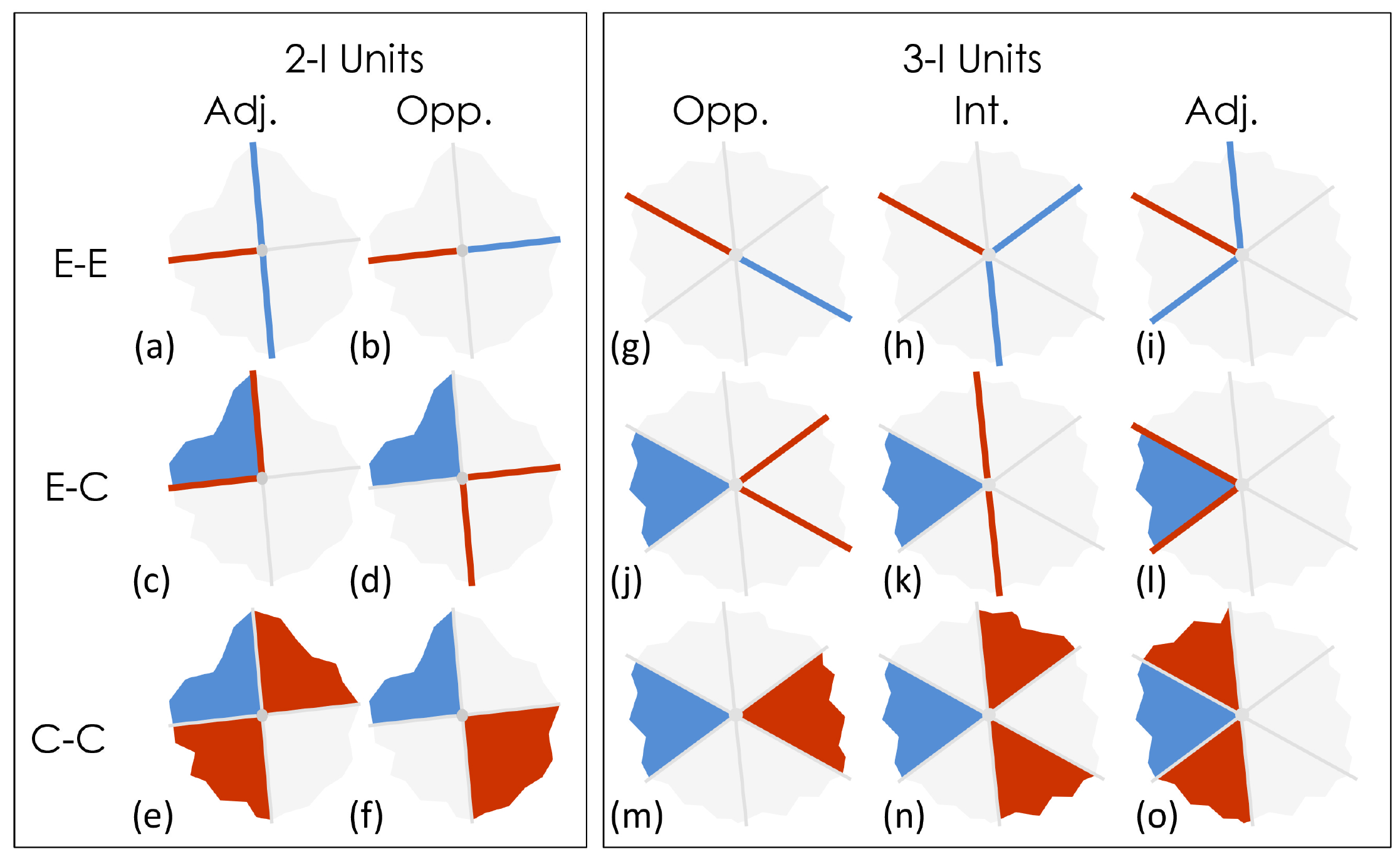}}
\caption{Three types of particle relations in 2-I/3-I units:
adjacent (Adj.), interval (Int.) and opposite (Opp.).}
\label{Fig5}
\end{figure*}

\begin{property}
\label{P_2I_Codes}A 2-I unit generated by $pb\left\langle i,j\right\rangle $
and $pb\left\langle u,v\right\rangle $ contains the following 9
particles.\newline

(1) One 2-I vertex with code%
\begin{equation}
\varphi_{ijuv}^{2I}(x_{i}+\tfrac{1}{2},x_{j}+\tfrac{1}{2},x_{u}+\tfrac{1}%
{2},x_{v}+\tfrac{1}{2})
\end{equation}

(2) Four edges with codes%
\begin{align}
& \eta_{iju}(x_{i}+\tfrac{1}{2},x_{j}+\tfrac{1}{2},x_{u}+1,x_{v}),\eta
_{ijv}(x_{i}+\tfrac{1}{2},x_{j}+\tfrac{1}{2},x_{u},x_{v}+1)\\
& \eta_{uvi}(x_{i}+1,x_{j},x_{u}+\tfrac{1}{2},x_{v}+\tfrac{1}{2}),\eta
_{uvj}(x_{i},x_{j}+1,x_{u}+\tfrac{1}{2},x_{v}+\tfrac{1}{2})\nonumber
\end{align}

(3) Four cells with codes%
\begin{align}
& \zeta_{iu}(x_{i}+1,x_{j},x_{u}+1,x_{v}),\zeta_{iv}(x_{i}+1,x_{j},x_{u}%
,x_{v}+1)\\
& \zeta_{ju}(x_{i},x_{j}+1,x_{u}+1,x_{v}),\zeta_{jv}(x_{i},x_{j}+1,x_{u}%
,x_{v}+1)\nonumber
\end{align}

\end{property}

\begin{property}
\label{P_3I_Codes}A 3-I unit generated by $pb\left\langle i,j\right\rangle $,
$pb\left\langle j,k\right\rangle $ and $pb\left\langle k,i\right\rangle $
contains the following 13 particles.

(1) One 3-I vertex with code%
\begin{equation}
\varphi_{ijk}^{3I}(x_{i}+1,x_{j}+1,x_{k}+1)
\end{equation}

(2) Six edges with codes%
\begin{align}
& \eta_{\underline{jk}i}(x_{i},x_{j}+\tfrac{3}{2},x_{k}+\tfrac{3}{2}%
),\eta_{i\underline{jk}}(x_{i}+2,x_{j}+\tfrac{1}{2},x_{k}+\tfrac{1}%
{2})\label{Eq_3ICodes_edges}\\
& \eta_{\underline{ki}j}(x_{i}+\tfrac{3}{2},x_{j},x_{k}+\tfrac{3}{2}%
),\eta_{j\underline{ki}}(x_{i}+\tfrac{1}{2},x_{j}+2,x_{k}+\tfrac{1}%
{2})\nonumber\\
& \eta_{\underline{ij}k}(x_{i}+\tfrac{3}{2},x_{j}+\tfrac{3}{2},x_{k}%
),\eta_{k\underline{ij}}(x_{i}+\tfrac{1}{2},x_{j}+\tfrac{1}{2},x_{k}%
+2)\nonumber
\end{align}
Note, in Eq.\ref{Eq_3ICodes_edges}, the underlined index indicates the
perpendicular bisector which makes the edge.

(3) Six cells with codes%
\begin{align}
& \zeta_{ijk}(x_{i}+2,x_{j}+1,x_{k}),\zeta_{ikj}(x_{i}+2,x_{j},x_{k}%
+1)\label{Eq_3ICodes_cells}\\
& \zeta_{jik}(x_{i}+1,x_{j}+2,x_{k}),\zeta_{jki}(x_{i},x_{j}+2,x_{k}%
+1)\nonumber\\
& \zeta_{kij}(x_{i}+1,x_{j},x_{k}+2),\zeta_{kji}(x_{i},x_{j}+1,x_{k}%
+2)\nonumber
\end{align}

\end{property}

\begin{property}
\label{P_2I_Unit}In 2-I unit space:

(1) The codes of the vertex $\varphi^{2I}$ is the \emph{average} of (I) two
edges which are in the same bisectors, (II) the two cells which are opposite
to the vertex, (III) all the four edges, and (IV) all the four cells, that is,%
\begin{align}
\varphi^{2I}  & =\tfrac{1}{2}(\eta_{iju}+\eta_{ijv})=\tfrac{1}{2}(\eta
_{uvi}+\eta_{uvj})\\
& =\tfrac{1}{2}(\zeta_{iu}+\zeta_{jv})=\tfrac{1}{2}(\zeta_{iv}+\zeta
_{jv})\nonumber\\
& =\tfrac{1}{4}(\eta_{iju}+\eta_{ijv}+\eta_{uvi}+\eta_{uvj})\nonumber\\
& =\tfrac{1}{4}(\zeta_{iu}+\zeta_{iv}+\zeta_{ju}+\zeta_{jv})\newline\nonumber
\end{align}

(2) The codes of an edge $\eta$ is the half of the two cells $\zeta_{1}$ and
$\zeta_{2}$ which are respectively on the two sides of the edge. If
$\xi=\{\zeta_{1},\zeta_{2}\}$, then,%
\begin{equation}
\eta=\tfrac{1}{2}(\zeta_{1}+\zeta_{2})=\tfrac{1}{2}\xi
\label{Eq_edge_half_cluster}%
\end{equation}

(3) The two edges are equi-base if they in the same bisector, but not
equi-base if they are not in the same bisector, that is,%
\begin{gather}
\eta_{iju}\cong\eta_{ijv},\eta_{uvi}\cong\eta_{uvj}\\
\eta_{iju}\ncong\eta_{uvi},\eta_{ijv}\ncong\eta_{uvj}\nonumber
\end{gather}

\end{property}

\begin{property}
\label{P_3I_Units}In 3-I unit space:\newline

(1) The codes of the vertex $\varphi^{3I}$ is the average of (I) all the six
edges/cells, (II) the three edges/cells which are interval with each other,
and (III) the two edges/cells which are opposite to each other (for edges in
this case, they are in the same bisector), that is,%
\begin{align}
\varphi^{3I}  & =\tfrac{1}{2}(\eta_{\underline{ki}j}+\eta_{j\underline{ki}%
})=\tfrac{1}{2}(\eta_{\underline{ij}k}+\eta_{k\underline{ij}})=\tfrac{1}%
{2}(\eta_{\underline{jk}i}+\eta_{i\underline{jk}})\\
& =\tfrac{1}{2}(\zeta_{kij}+\zeta_{jik})=\tfrac{1}{2}(\zeta_{ijk}+\zeta
_{kji})=\tfrac{1}{2}(\zeta_{ikj}+\zeta_{jki})\nonumber\\
& =\tfrac{1}{3}(\eta_{\underline{ki}j}+\eta_{\underline{ij}k}+\eta
_{\underline{jk}i})=\tfrac{1}{3}(\eta_{j\underline{ki}}+\eta_{k\underline{ij}%
}+\eta_{i\underline{jk}})\nonumber\\
& =\tfrac{1}{3}(\zeta_{ijk}+\zeta_{kij}+\zeta_{jki})=\tfrac{1}{3}(\zeta
_{jik}+\zeta_{ikj}+\zeta_{kji})\nonumber\\
& =\tfrac{1}{6}(\eta_{\underline{ki}j}+\eta_{\underline{ij}k}+\eta
_{\underline{jk}i}+\eta_{j\underline{ki}}+\eta_{k\underline{ij}}%
+\eta_{i\underline{jk}})\nonumber\\
& =\tfrac{1}{6}(\zeta_{ijk}+\zeta_{kij}+\zeta_{jki}+\zeta_{jik}+\zeta
_{ikj}+\zeta_{kji})\newline\nonumber
\end{align}

(2) The edge codes have the same property as the Property \ref{P_2I_Unit}.(2).

(3) The two edges in the same bisectors are not equi-base, but the three
interval edges are equi-base, that is,%
\begin{gather}
\eta_{\underline{ki}j}\ncong\eta_{j\underline{ki}},\eta_{\underline{ij}%
k}\ncong\eta_{k\underline{ij}},\eta_{\underline{jk}i}\ncong\eta
_{i\underline{jk}}\\
\eta_{\underline{ki}j}\cong\eta_{\underline{ij}k}\cong\eta_{\underline{jk}%
i},\eta_{j\underline{ki}}\cong\eta_{k\underline{ij}}\cong\eta_{i\underline{jk}%
}\nonumber
\end{gather}

\end{property}

The spatial relations among particles in 2-I/3-I units have three types:
adjacent, interval, and opposite, see Fig.5. If spatial relations between two
particles in 2-I/3-I units are different, their chromatic and code distances
are also different, see the below Property \ref{P_2I3IUnits_Distance}.

\begin{property}
\label{P_2I3IUnits_Distance}Within a 2-I or 3-I unit of a full-coded $OACD(n,%
\mathbb{R}
^{2})$, the chromatic distance $\delta$ and code distance $\gamma$ between two
particles $\Omega_{1}$ and $\Omega_{2}$ are listed in Table \ref{Table1}.%

\begin{table}[tbp] \centering
\begin{tabular}
[c]{ccccccccc}\hline\hline
$\Omega_{\mathbf{1}}$ & $\Omega_{\mathbf{2}}$ & Relation &
\multicolumn{3}{c}{2-I Units} & \multicolumn{3}{c}{3-I Units}\\
&  &  & $\delta$ & $\gamma$ & Base & $\delta$ & $\gamma$ & Base\\\hline
\multicolumn{1}{l}{Vertex $\varphi$} & \multicolumn{1}{l}{Edge $\eta$} & - &
1 & 2 & $\ncong$ & 2 & 3 & $\ncong$\\
\multicolumn{1}{l}{Vertex $\varphi$} & \multicolumn{1}{l}{Cell $\zeta$} & - &
2 & 4 & $\ncong$ & 2 & 2 & $\ncong$\\
\multicolumn{1}{l}{Edge $\eta_{1}$} & \multicolumn{1}{l}{Edge $\eta_{2}$} &
\multicolumn{1}{l}{Adjacent} & 2 & 4 & $\ncong$ & 2 & 3 & $\ncong$\\
\multicolumn{1}{l}{Edge $\eta_{1}$} & \multicolumn{1}{l}{Edge $\eta_{2}$} &
\multicolumn{1}{l}{Opposite} & 2 & 2 & $\cong$ & 4 & 3 & $\ncong$\\
\multicolumn{1}{l}{Edge $\eta_{1}$} & \multicolumn{1}{l}{Edge $\eta_{2}$} &
\multicolumn{1}{l}{Interval} & - & - & - & 3 & 2 & $\cong$\\
\multicolumn{1}{l}{Edge $\eta$} & \multicolumn{1}{l}{Cell $\zeta$} &
\multicolumn{1}{l}{Adjacent} & 1 & 2 & $\ncong$ & 1 & 2 & $\ncong$\\
\multicolumn{1}{l}{Edge $\eta$} & \multicolumn{1}{l}{Cell $\zeta$} &
\multicolumn{1}{l}{Opposite} & 3 & 4 & $\ncong$ & 4 & 3 & $\ncong$\\
\multicolumn{1}{l}{Edge $\eta$} & \multicolumn{1}{l}{Cell $\zeta$} &
\multicolumn{1}{l}{Interval} & - & - & - & 3 & 3 & $\ncong$\\
\multicolumn{1}{l}{Cell $\zeta_{1}$} & \multicolumn{1}{l}{Cell $\zeta_{2}$} &
\multicolumn{1}{l}{Adjacent} & 2 & 2 & $\cong$ & 2 & 2 & $\cong$\\
\multicolumn{1}{l}{Cell $\zeta_{1}$} & \multicolumn{1}{l}{Cell $\zeta_{2}$} &
\multicolumn{1}{l}{Opposite} & 4 & 4 & $\cong$ & 4 & 2 & $\cong$\\
\multicolumn{1}{l}{Cell $\zeta_{1}$} & \multicolumn{1}{l}{Cell $\zeta_{2}$} &
\multicolumn{1}{l}{Interval} & - & - & - & 4 & 3 & $\cong$\\\hline
\end{tabular}
\caption{Chromatic and code distances between two particles in 2-I/3-I unit
space. Their spatial relations are shown in Fig.5}\label{Table1}%
\end{table}%

\end{property}

An important requirement for full-OACDs is that we expect their particle codes
to be unique.

\begin{theorem}
\label{Theorem_unique_edge}If $\eta_{1}$ and $\eta_{2}$ are two edges in a
full-OACD, then%
\begin{equation}
\eta_{1}\neq\eta_{2}.\label{Eq_unique_edge}%
\end{equation}

\end{theorem}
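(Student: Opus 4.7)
The plan is to reduce the uniqueness of edges to the already-established uniqueness of cells (Theorem \ref{Theorem_unique_cell}), by observing that each edge is rigidly tied to its two flanking cells. Suppose, for contradiction, that $\eta_{1}\neq\eta_{2}$ as geometric objects but $\eta_{1}=\eta_{2}$ in the equi-color sense. By Property \ref{P_Edge_Base}, the only half-integer components of an edge's chromatic code sit at the two positions indexing the bisector on which the edge lies. So $\eta_{1}$ and $\eta_{2}$ must lie on a common bisector $pb\langle i,j\rangle$, share a common value $z+\tfrac{1}{2}$ at positions $i$ and $j$, and agree on the remaining integer components, call them $T_{others}$.

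Next I would invoke Property \ref{P_2I_Unit}.(2) (the ``edge-is-average-of-adjacent-cells'' identity, which applies to every edge of a full-OACD because every edge lies in some 2-I or 3-I unit). For each $\eta_{\ell}$ with $\ell\in\{1,2\}$, the two flanking cells $\zeta_{\ell}^{+},\zeta_{\ell}^{-}$ must satisfy $\zeta_{\ell}^{+}+\zeta_{\ell}^{-}=2\eta_{\ell}$. Combined with the fact (already extracted in the proof of Property \ref{P_Edge_Base}) that the bisector $pb\langle i,j\rangle$ changes only the components at positions $i$ and $j$ between the two flanking cells, this forces
\begin{equation}
\{\zeta_{\ell}^{+},\zeta_{\ell}^{-}\}=\bigl\{(x_{i}^{z+1},x_{j}^{z},T_{others}),\,(x_{i}^{z},x_{j}^{z+1},T_{others})\bigr\}
\end{equation}
for both $\ell=1$ and $\ell=2$. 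Thus the pairs of flanking cells of $\eta_{1}$ and $\eta_{2}$ are equi-color pair by pair.

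Now I would apply Theorem \ref{Theorem_unique_cell}: equi-color cells are identical cells. Consequently $\eta_{1}$ and $\eta_{2}$ share the same unordered pair of flanking cells as actual geometric regions. Since every cell in a full-OACD is an intersection of half-planes, hence a convex polygon, two distinct convex cells meet along at most one boundary segment; equivalently, the common boundary of any two cells on a fixed bisector is connected. This forces $\eta_{1}$ and $\eta_{2}$ to coincide geometrically, contradicting the assumption and proving the theorem.

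The step I expect to be the most delicate is the passage from ``the flanking-cell codes agree'' to ``the edges coincide geometrically.'' The coding argument only tells us that the neighboring cells have matching labels, and one must still rule out the possibility that the same pair of cells could be separated by two different edge-segments on the same bisector. Appealing to convexity of cells (as intersections of half-planes) is the cleanest way to close this gap; if that feels too geometric for the paper's combinatorial style, a fallback is to argue directly on the line $pb\langle i,j\rangle$, noting that its intersection with any cell is a single convex arc, so the intersection of the two flanking cells with this bisector is a single segment — which is exactly the edge.
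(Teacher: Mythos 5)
Your argument is correct, but it runs along a genuinely different route from the paper's. The paper proves the theorem in one step forward: the two cells flanking each edge form a connected 2-cell cluster, it invokes the result from Zhu (2015) that chromatic codes of connected 2-cell clusters are unique (so $\xi_{1}\neq\xi_{2}$), and then halves via Eq.~(\ref{Eq_edge_half_cluster}) to get $\eta_{1}=\tfrac{1}{2}\xi_{1}\neq\tfrac{1}{2}\xi_{2}=\eta_{2}$. You instead work contrapositively and only need the weaker single-cell uniqueness (Theorem~\ref{Theorem_unique_cell}): equal edge codes force, via Property~\ref{P_Edge_Base} and the averaging identity, identical code pairs $(x_{i}^{z+1},x_{j}^{z},T_{others})$ and $(x_{i}^{z},x_{j}^{z+1},T_{others})$ for the flanking cells, hence (by cell uniqueness) literally the same pair of flanking cells, and then a geometric argument shows the two edges must coincide. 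What your version buys is independence from the external 2-cell-cluster uniqueness theorem; what it costs is that the burden shifts onto the geometric step you flag yourself.

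On that delicate step: convexity alone (``the common boundary of two cells is a single segment'') does not yet rule out that this single segment is subdivided by an arrangement vertex into two distinct edges, both flanked by the same pair of cells. To close it, note that such a vertex would lie in the relative interior of the common boundary segment, so some other bisector $L\neq pb\langle i,j\rangle$ passes through it transversally; since the closure of each flanking cell contains points of $\eta_{1}$ and $\eta_{2}$ lying strictly on opposite sides of $L$, the open convex cell itself would contain points on both sides of $L$, i.e.\ $L$ would cut through a cell's interior, contradicting that cells are faces of the arrangement. With that sentence added your proof is complete. It is worth observing that the paper's own proof quietly relies on the very same geometric fact, since it needs distinct edges to yield distinct 2-cell clusters before cluster-code uniqueness can be applied; your writeup at least makes the issue explicit.
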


\begin{proof}
Suppose $\zeta_{1Left}$ and $\zeta_{1Right}$ are two cells beside $\eta_{1}$,
and $\zeta_{2Left}$ and $\zeta_{2Right}$ are two cells beside $\eta_{2}$, and
then they make two 2-cell clusters $\xi_{1}(\zeta_{1Left},\zeta_{1Right})$ and
$\xi_{2}(\zeta_{2Left},\zeta_{2Right})$, respectively. It has been proved that
chromatic codes of any connected 2-cell cluster $\xi$ are unique
\cite{Zhu2015}, that is, $\xi_{1}\neq\xi_{2}$. Then according to
Eq.\ref{Eq_edge_half_cluster}, $\eta_{1}=\frac{1}{2}\xi_{1}\neq\frac{1}{2}%
\xi_{2}=\eta_{2}$.
\end{proof}

This theorem tells that chromatic codes of edges are unique.

\begin{theorem}
\label{Theorem_unique_vertex}If $\varphi_{1}$ and $\varphi_{2}$ are two
vertexes in a full-OACD, then%
\begin{equation}
\varphi_{1}\neq\varphi_{2}.\label{Eq_unique_vertex}%
\end{equation}

\end{theorem}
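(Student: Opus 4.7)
The plan is to show that the chromatic code of a vertex already records the bisectors that generated it, and hence pins down the vertex as a geometric point. By Theorem \ref{Theorem_unique_base}, a 2-I and a 3-I vertex have different chromatic bases and therefore cannot be equi-color, so it suffices to treat the two same-type subcases: both $\varphi_1, \varphi_2$ of type 3-I, and both of type 2-I. This is the same high-level strategy used for edges in Theorem \ref{Theorem_unique_edge}, but rather than inheriting uniqueness from a surrounding 2-cell cluster (there is no analogous known uniqueness for the 4-cell or 6-cell surround of a vertex), I will read the generating bisectors directly out of the code.

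For two 3-I vertexes, Property \ref{P_3I_Base} states that the code of a 3-I vertex produced by the bisectors of the triangle $\{i,j,k\}$ carries the value $z+1$ at exactly the three positions $i, j, k$, and every other component is an integer distinct from $z+1$. A shared code therefore identifies the unordered triple $\{i,j,k\}$ uniquely. Since the three perpendicular bisectors of a triangle are concurrent at the circumcenter and nowhere else, $\varphi_1$ and $\varphi_2$ must coincide as geometric points, giving the desired contradiction to $\varphi_1\neq\varphi_2$.

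For two 2-I vertexes, Property \ref{P_2I_Base} says that the code of a 2-I vertex from $pb\langle i,j\rangle$ and $pb\langle u,v\rangle$ has half-integer components at exactly the four positions $i,j,u,v$, with $i,j$ sharing one value $z_1+\tfrac12$ and $u,v$ sharing a strictly different value $z_2+\tfrac12$ (in fact $|z_1-z_2|\geq 2$, so the two pairs are separable by value). The code therefore recovers the two disjoint index pairs $\{i,j\}$ and $\{u,v\}$, and hence the pair of generating bisectors. Under the general-case hypothesis these two bisectors are not parallel and so meet in a single point, so $\varphi_1$ and $\varphi_2$ coincide. The main subtlety in the whole argument is the legibility of the code in the 3-I case — one must be sure that the three positions carrying the repeated integer form the very triple of points whose bisectors are concurrent — but this is precisely what Property \ref{P_3I_Base} gives, so the proof reduces to invoking these two base-characterisation properties together with the elementary geometric uniqueness of the bisector intersection points.
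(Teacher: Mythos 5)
Your proposal is correct and follows essentially the same route as the paper: split into the mixed-type case (settled by Theorem \ref{Theorem_unique_base} together with Property \ref{P_base}) and the two same-type cases, then observe via Properties \ref{P_2I_Base} and \ref{P_3I_Base} that a vertex code determines its generating bisectors, which intersect in only one point. If anything, you spell out the ``legibility'' of the code (that the repeated integer, resp.\ the half-integer pairs, occur exactly at the generating indices) more explicitly than the paper does, but the underlying argument is the same.
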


\begin{proof}
Case (1): One vertex is 2-I and the other vertex is 3-I.

According to Property \ref{P_base}, $\varphi^{_{2I}}\ncong\varphi^{_{3I}}$, so
$\varphi_{1}\neq\varphi_{2}$.

Case (2): They are both 3-I vertexes.

Suppose $\varphi_{1}^{3I}$ and $\varphi_{2}^{3I}$ are different 3-I vertexes
but with the same code $(x_{i}^{z},x_{j}^{z},x_{k}^{z})$, i.e., they have
three chromatic components which are the same integer $z$ given by three
points $i$, $j$, and $k$. However, the bisectors generated from 3 point can
only intersect at one 3-I vertex, so if $\varphi_{1}$ and $\varphi_{2}$ are
different vertexes, their codes are impossible to be the same, i.e.,
$\varphi_{1}^{3I}\neq\varphi_{2}^{3I}$.

(3) They are both 2-I vertexes.

Suppose $\varphi_{1}^{2I}$ and $\varphi_{2}^{2I}$ are two different 2-I
vertexes. The $\varphi_{1}^{2I}$ was generated by $pb\left\langle i_{1}%
,j_{1}\right\rangle $ and $pb\left\langle u_{1},v_{1}\right\rangle $, and
hence with a code $(x_{i_{1}}^{z_{1}},x_{j_{1}}^{z_{1}},x_{u_{1}}^{z_{2}%
},x_{v_{1}}^{z_{2}})$; The $\varphi_{2}^{2I}$ was generated by $pb\left\langle
i_{2},j_{2}\right\rangle $ and $pb\left\langle u_{2},v_{2}\right\rangle $, and
hence with a code $(x_{i_{2}}^{z_{1}},x_{j_{2}}^{z_{1}},x_{u_{2}}^{z_{2}%
},x_{v_{2}}^{z_{2}})$. The only way to make $\varphi_{1}^{2I}=\varphi_{2}%
^{2I}$ is that $i_{1}=i_{2}$, $j_{1}=j_{2}$, $u_{1}=u_{2}$, and $v_{1}=v_{2}$,
but this makes $\varphi_{1}^{2I}$ and $\varphi_{2}^{2I}$ are the same vertex,
because two bisectors can only intersect at one 2-I vertex. Therefore if
$\varphi_{1}^{2I}$ and $\varphi_{2}^{2I}$ are two different 2-I vertexes,
their codes are impossible to be equal.

Based on the above three cases, we know that $\varphi_{1}$ $\neq\varphi_{2}$
\end{proof}

This theorem indicates that chromatic codes of vertexes are also unique.
Ultimately, according to Theorems \ref{Theorem_unique_base}%
-\ref{Theorem_unique_vertex}, we obtain the below corollary.

\begin{corollary}
Chromatic particle codes in a full-OACD are unique, that is, given two
particles $\Omega_{1},\Omega_{2}$ $\in OACD(n,%
\mathbb{R}
^{2})$,%
\begin{equation}
\Omega_{1}\neq\Omega_{2}\label{Eq_unique_codes}%
\end{equation}

\end{corollary}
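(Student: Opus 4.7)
The plan is a direct case analysis on the types of the two particles $\Omega_1$ and $\Omega_2$. Every particle in a full-OACD falls into one of four categories: cell, edge, 2-I vertex, or 3-I vertex (the paper has already established that in the general-position case there are no other vertex types). I would therefore split into two exhaustive scenarios: either $\Omega_1$ and $\Omega_2$ are of different types, or they are of the same type.

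For the heterogeneous case, I would appeal directly to Theorem \ref{Theorem_unique_base}, which asserts $\zeta \ncong \eta \ncong \varphi^{2I} \ncong \varphi^{3I}$, so the chromatic bases of $\Omega_1$ and $\Omega_2$ are unequal whenever their types differ. The contrapositive form of Property \ref{P_base}, namely $\Omega_1 \ncong \Omega_2 \Rightarrow \Omega_1 \neq \Omega_2$, then closes this case in one step.

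For the homogeneous case, I would further split into three subcases and invoke the matching uniqueness result in each: Theorem \ref{Theorem_unique_cell} when both particles are cells, Theorem \ref{Theorem_unique_edge} when both are edges, and Theorem \ref{Theorem_unique_vertex} when both are vertexes (the latter already handles the 2-I vs.\ 3-I subsplit internally). Each of these theorems delivers $\Omega_1 \neq \Omega_2$ outright, so no additional computation is required.

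I do not anticipate any substantive obstacle; the corollary is essentially a bookkeeping consequence of Theorems \ref{Theorem_unique_base}, \ref{Theorem_unique_cell}, \ref{Theorem_unique_edge}, and \ref{Theorem_unique_vertex}. The only point worth stating explicitly in the write-up is that the case split is exhaustive, i.e.\ that cells, edges, 2-I vertexes and 3-I vertexes are the only particle types populating a general-case full-OACD, so that no unaccounted-for comparison remains.
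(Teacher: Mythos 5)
Your proposal is correct and matches the paper's own argument: the corollary is stated there as an immediate consequence of Theorems \ref{Theorem_unique_base}--\ref{Theorem_unique_vertex}, i.e.\ the same split into particles of different types (handled by Theorem \ref{Theorem_unique_base} together with Property \ref{P_base}) and particles of the same type (handled by Theorems \ref{Theorem_unique_cell}, \ref{Theorem_unique_edge}, and \ref{Theorem_unique_vertex}). Your explicit remark that the four particle types exhaust the general case is a reasonable touch, but the route is essentially identical to the paper's.
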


\section{Spatial particle topology in full-OACD}

A planar full-OACD contains three types of particles: vertexes, edges, and
cells. Spatial topological relations among these particles are usually similar
to those conventional relations for vectorial geometry in GIS, such as equal,
adjacent, disjoint, and overlap. Fig.6 shows the spatial relations between two
particles investigated in this study, and these relations can be simply
represented and calculated by using chromatic codes. In addition, the more
complicated spatial relations among $m$-complexes can be also reasoned from
analyzing their chromatic codes. Below we demonstrate the major spatial
topology among particles and complexes, as well as the relations between them
and chromatic codes, in which we particularly focus on cells and clusters.
Property \ref{P_2I3IUnits_Distance} already gives the conditions from which we
know that different particle relations bear different chromatic and code
distances, however, those are only necessary conditions. In this section we
will give proofs that those conditions are also sufficient so that we can use
two distances $\delta(\Omega_{1},\Omega_{2}) $ and $\gamma(\Omega_{1}%
,\Omega_{2})$, and their bases $\beta(\Omega_{1})$ and $\beta(\Omega_{2})$ to
determine their topological spatial relations.

\begin{figure*}
\centerline{\includegraphics[width=4in]{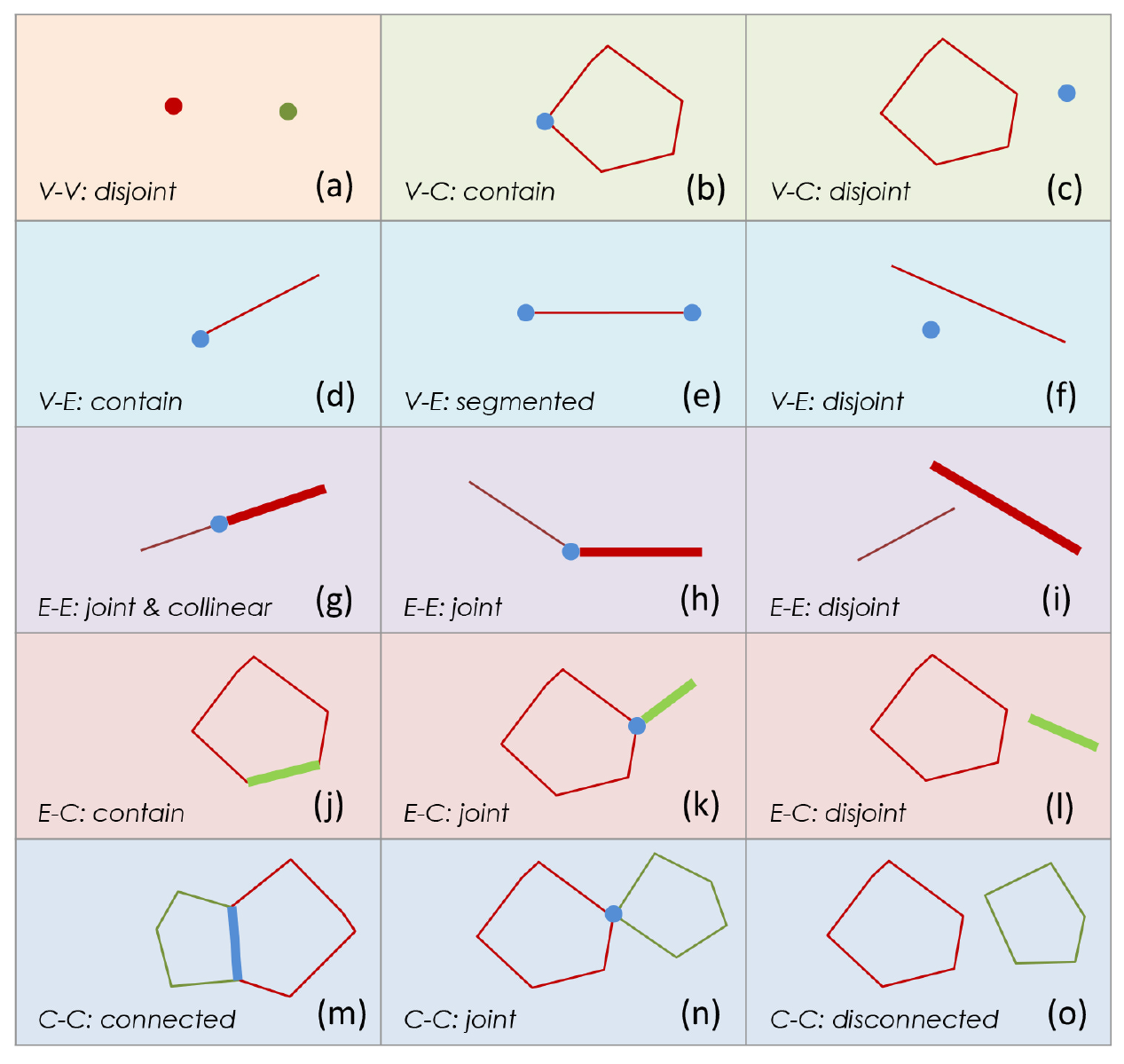}}
\caption{General spatial topological relations among particles in full-OACD.}
\label{Fig6}
\end{figure*}

\subsection{Spatial topology between particles}

There are six types of spatial combinations for particles: vertex-vertex
(V-V), vertex-edge (V-E), vertex-cell (V-C), edge-edge (E-E), edge-cell (E-C),
and cell-cell (C-C), and their relations are typically equal, joint, disjoint,
and others, see examples in Fig.6. These particle-particle relations also
underlie the further topological analysis of complexes.

\subsubsection{Vertex-Vertex (V-V) relations}

In terms of theorem \ref{Theorem_unique_vertex}, V-V relations between
$\varphi_{1}$ and $\varphi_{2}$ are quite simple -- either \emph{equal}, i.e.,
$\cap(\varphi_{1},\varphi_{2})=\varphi_{1}$, or \emph{disjoint}, i.e.,
$\cap(\varphi_{1},\varphi_{2})=\varnothing$, see Fig.6a.

\begin{proposition}
If $\varphi_{1}$ and $\varphi_{2}$ are two vertexes, then%
\begin{align}
\cap(\varphi_{1},\varphi_{2})  & =\varphi_{1}\Leftrightarrow\varphi
_{1}=\varphi_{2}\\
\cap(\varphi_{1},\varphi_{2})  & =\varnothing\Leftrightarrow\varphi_{1}%
\neq\varphi_{2}\nonumber
\end{align}

\end{proposition}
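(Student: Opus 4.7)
The plan is to observe that this proposition is essentially a direct corollary of Theorem \ref{Theorem_unique_vertex}, combined with the fact that vertexes are zero-dimensional point objects in the plane. So the argument splits into two short pieces: a set-theoretic observation about how two points can meet, and the uniqueness of chromatic codes for vertexes.

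First I would argue the forward direction in each line. Since a vertex $\varphi$ is a single point in $\mathbb{R}^2$, the intersection $\cap(\varphi_1,\varphi_2)$ of two vertexes is either all of $\varphi_1$ (equivalently, the single point $\varphi_1 = \varphi_2$ as subsets of the plane) or the empty set; no other possibility exists. Hence, as spatial objects, $\varphi_1$ and $\varphi_2$ coincide in $\mathbb{R}^2$ if and only if $\cap(\varphi_1,\varphi_2)=\varphi_1$, and they are distinct points in $\mathbb{R}^2$ if and only if $\cap(\varphi_1,\varphi_2)=\varnothing$.

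Next I would bridge the gap between spatial coincidence and equi-color equality of codes using Theorem \ref{Theorem_unique_vertex}. That theorem states that distinct vertexes in a full-OACD have distinct chromatic codes, so the map from vertexes (as spatial points) to chromatic codes is injective. Therefore $\varphi_1$ and $\varphi_2$ denote the same point of $\mathbb{R}^2$ if and only if $\varphi_1 = \varphi_2$ in the equi-color sense, and they denote different points if and only if $\varphi_1 \neq \varphi_2$. Chaining these two equivalences yields the two biconditionals in the statement.

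There is no real obstacle here; the only thing to be careful about is not conflating the two different meanings of ``$=$'' in play, namely equality of chromatic codes (as in Definition of equi-color) and equality as subsets of $\mathbb{R}^2$. Theorem \ref{Theorem_unique_vertex} is precisely what guarantees these two notions agree on vertexes, and with that identification the proposition is immediate.
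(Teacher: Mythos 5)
Your proposal is correct and follows essentially the same route as the paper, which presents this proposition as an immediate consequence of Theorem \ref{Theorem_unique_vertex}: since vertexes are single points, two of them either coincide or are disjoint, and the uniqueness of vertex chromatic codes identifies spatial coincidence with code equality. Your explicit separation of the set-theoretic step from the code-injectivity step is just a more careful write-up of the argument the paper leaves implicit.
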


Because we have proved that chromatic codes are unique in OACD, the `equal'
relation, i.e., two particles are completely overlap, is easy to determine --
two particles are topologically equal, if and only if they are equal in codes,
that is, $\Omega_{1}$ $equal$ $\Omega_{2}\Leftrightarrow\Omega_{1}=\Omega_{2}%
$. Note, because a full-OACD is a type of spatial tessellation, meaning it
contains neither gaps nor overlaps, therefore topologically, it also does not
contain any two equal particles.

\subsubsection{Vertex-Edge (V-E) relations}

Typically an edge \emph{contains} two ends, and hence a V-E relation is that a
vertex is one of the ends of the edge (Fig.6d), that is, $\cap(\eta
,\varphi)=\varphi$; otherwise they are \emph{disjoint}, that is, $\cap
(\eta,\varphi)=\varnothing$, see Fig.6f.

\begin{proposition}
\label{Proposition_VE_joint}Given an edge $\eta$ and a vertex $\varphi$,%
\begin{align}
\cap(\eta,\varphi)  & =\varphi\Leftrightarrow\delta(\eta,\varphi)\leq2\\
\cap(\eta,\varphi)  & =\varnothing\Leftrightarrow\delta(\eta,\varphi
)>2\nonumber
\end{align}

\begin{proof}
Case(1) $\varphi$ is a 2-I vertex.

Suppose $\varphi$ is generated by $pb\left\langle i,j\right\rangle $ and
$pb\left\langle u,v\right\rangle $, and hence with a code
\begin{equation}
\varphi=(x_{i}^{a_{1}+\frac{1}{2}},x_{j}^{a_{1}+\frac{1}{2}},x_{u}%
^{a_{2}+\frac{1}{2}},x_{v}^{a_{2}+\frac{1}{2}})\cup(X_{others}^{A}%
)\label{Eq_VE1}%
\end{equation}

Case (1.1). $\eta$ is an edge generated from bisector $pb\left\langle
g,h\right\rangle $, and $g,h\notin\{i,j,u,v\}$. The codes of $\varphi$ and
$\eta$ then can be rewritten to $(x_{i}^{a_{1}+\frac{1}{2}}$, $x_{j}%
^{a_{1}+\frac{1}{2}}$, $x_{u}^{a_{2}+\frac{1}{2}}$, $x_{v}^{a_{2}+\frac{1}{2}%
}$, $x_{g}^{a_{3}}$, $x_{h}^{a_{4}})\cup(X_{others}^{A})$ and $(x_{i}^{b_{1}}%
$, $x_{j}^{b_{2}}$, $x_{u}^{b_{3}}$, $x_{v}^{b_{4}}$, $x_{g}^{b_{5}+\frac
{1}{2}}$, $x_{h}^{b_{5}+\frac{1}{2}})\cup(X_{others}^{B})$, respectively,
where $a_{i}$ and $b_{i}$ are both integers, and $A$ and $B$ are both integer
tuples. Because for chromatic distance at each component: $\delta
(x_{i})=|x_{i}^{a_{1}+\frac{1}{2}}-x_{i}^{b_{1}}|\geq\frac{1}{2}$,
$\delta(x_{j})=|x_{j}^{a_{1}+\frac{1}{2}}-x_{j}^{b_{2}}|\geq\frac{1}{2}$,
$\delta(x_{u})=|x_{u}^{a_{2}+\frac{1}{2}}-x_{u}^{b_{3}}|\geq\frac{1}{2}$,
$\delta(x_{v})=|x_{v}^{a_{2}+\frac{1}{2}}-x_{v}^{b_{4}}|\geq\frac{1}{2}$,
$\delta(x_{g})=|x_{g}^{a_{3}}-x_{g}^{b_{5}+\frac{1}{2}}|\geq\frac{1}{2}$,
$\delta(x_{h})=|x_{h}^{a_{4}}-x_{h}^{b_{5}+\frac{1}{2}}|\geq\frac{1}{2}$, and
$\delta(X_{other})=|X_{others}^{A}-X_{others}^{B}|\geq0$, then we know that
$\delta(\varphi,\eta)\geq3$.

Case (1.2). $\eta$ is an edge generated from the bisector involving one of
points $i$, $j$, $u$, and $v$, for example, $pb\left\langle i,k\right\rangle
$, then $\varphi$ and $\eta$ can be rewritten as $\varphi=(x_{i}^{a_{1}%
+\frac{1}{2}}$, $x_{j}^{a_{1}+\frac{1}{2}}$, $x_{u}^{a_{2}+\frac{1}{2}}$,
$x_{v}^{a_{2}+\frac{1}{2}}$, $x_{k}^{a_{3}})\cup(X_{others}^{A})$ and
$\eta=(x_{i}^{b_{1}+\frac{1}{2}}$, $x_{j}^{b_{2}}$, $x_{u}^{b_{3}}$,
$x_{v}^{b_{4}}$, $x_{k}^{b_{1}+\frac{1}{2}})\cup(X_{others}^{B})$. Since
distances at $x_{j}$, $x_{u}$, $x_{v}$, and $x_{k}$ are all greater than
$\frac{1}{2}$, we know that $\delta(\varphi,\eta)\geq2$.

Case (1.3). $\eta$ is an edge generated from the bisectors involving two
points in $i$, $j$, $u$, and $v$, but not the two who generate the vertex
$\varphi$, for example, $pb\left\langle i,u\right\rangle $, then $\eta$ can be
rewritten to $(x_{i}^{b_{1}+\frac{1}{2}}$, $x_{j}^{b_{2}}$, $x_{u}%
^{b_{1}+\frac{1}{2}}$, $x_{v}^{b_{3}})\cup(X_{others}^{B})$. Then similar to
the above cases (1) and (2), we know that $\delta(\varphi,\eta)\geq1$.
However, to reach the minimum $\delta=1$, it must be that $\delta(x_{i})=0$
and $\delta(x_{u})=0$, so that $a_{1}=b_{1}$ and $a_{2}=b_{1}$, then we have
$a_{1}=a_{2}$. But according to the bases of 2-I vertex (Property
\ref{P_2I_Base}), it is impossible that $a_{1}=a_{2}$, so we get $\delta$
cannot reach the minimum value 1, and hence $\delta\geq2$. To reach the new
minimum $\delta=2$, it must be $\delta(x_{i})=1$ and $\delta(x_{u})=0$, or
vice versa. This will lead to equations $a_{1}=a_{2}+1$ or $a_{2}=a_{1}+1$,
but they are impossible in terms of Property \ref{P_2I_Base}. Consequently,
$\delta\geq3$.

Case (1.4) $\eta$ is an edge generated from the bisectors, for example,
$pb\left\langle i,j\right\rangle $, which also make $\varphi$, then $\eta$'s
code should be%
\begin{equation}
\eta=(x_{i}^{b_{1}+\frac{1}{2}},x_{j}^{b_{1}+\frac{1}{2}},x_{u}^{b_{2}}%
,x_{v}^{b_{3}})\cup(X_{others}^{B})\label{Eq_VE2}%
\end{equation}
Then comparing Eq.(\ref{Eq_VE1}) and Eq.(\ref{Eq_VE2}), we know the
$\delta(\varphi,\eta)\geq1$. To reach the minimum value 1, it must be
$\delta(x_{u})=|a_{2}+\frac{1}{2}-b_{2}|=\frac{1}{2}$ and $\delta
(x_{v})=|a_{2}+\frac{1}{2}-b_{3}|=\frac{1}{2}$. Then we have two possible
solutions $%
\genfrac{\{}{.}{0pt}{}{a_{2}=b_{2}}{a_{2}=b_{2}-1}%
$ and $%
\genfrac{\{}{.}{0pt}{}{a_{2}=b_{3}}{a_{2}=b_{3}-1}%
$. The only allowed solution combinations are either $a_{2}=b_{2}$ and
$a_{2}=b_{3}-1$, or $a_{2}=b_{3}$ and $a_{2}=b_{2}-1 $. This indicates that
either the case $b_{2}=b_{3}-1$ or case $b_{3}=b_{2}-1 $. However, the two
cases are just the two edges which are on each side of $pb\left\langle
u,v\right\rangle $ and joint to $\varphi$. And in this case $\gamma
(\varphi,\eta)=2$.

Because $\delta(x_{i})=\delta(x_{j})\geq0$ and when they are both 0,
$\delta(\varphi,\eta)=1$. Therefore if they are both not 0, then $\delta
(x_{i},x_{j})\geq2\Rightarrow\delta(\varphi,\eta)\geq3$. So if in this case to
reach $\delta(\varphi,\eta)=2$, it must be $\delta(x_{i},x_{j})=0 $,
$\delta(x_{u},x_{v})=1$, and $\delta(X_{others})=1$. Because all components in
$X_{others}$ are integers, therefore $\delta(X_{others})$ is either 0 or
$\geq2$ but impossible to be 1. As a result, $\delta(\varphi,\eta)=2$ is
unable to reach, that is, $\delta(\varphi,\eta)>2$.

Case(2) $\varphi$ is a 3-I vertex.

Suppose $\varphi$ is generated by $pb\left\langle i,j\right\rangle $,
$pb\left\langle j,k\right\rangle $, $pb\left\langle k,i\right\rangle $, and
hence with a code $\varphi=(x_{i}^{a_{1}},x_{j}^{a_{1}},x_{k}^{a_{1}})$.

Case (2.1) $\eta$ is an edge generated from the bisector $pb\left\langle
u,v\right\rangle $, and $u,v\notin\{i,j,k\}$, then $\varphi$ and $\eta$ can be
rewritten as $\varphi=(x_{i}^{a_{1}}$, $x_{j}^{a_{1}}$, $x_{k}^{a_{1}}$,
$x_{u}^{a_{2}}$, $x_{v}^{a_{3}})\cup(X_{others}^{A})$ and $\eta=(x_{i}^{b_{1}%
}$, $x_{j}^{b_{2}}$, $x_{k}^{b_{3}}$, $x_{u}^{b_{4}+\frac{1}{2}}$,
$x_{v}^{b_{4}+\frac{1}{2}})\cup(X_{others}^{B})$. Because $b_{1}\neq b_{2}\neq
b_{3}$, then without loss of generality, assume $b_{2}=b_{1}+\Delta_{1}$,
$\Delta_{1}\geq1$, and $b_{3}=b_{1}+\Delta_{2}$, $\Delta_{2}\geq1$. If assume
$\delta(x_{i})=|a_{1}-b_{1}|=\Delta\geq0$, then $\delta(x_{j})=|a_{1}%
-b_{2}|=\Delta+\Delta_{1}\geq1$, $\delta(x_{k})=|a_{1}-b_{3}|=\Delta
+\Delta_{2}\geq1$, then we have $\delta(x_{i},x_{j},x_{k})\geq2$. In addition,
$\delta(x_{u})\geq\frac{1}{2}$ and $\delta(x_{v})\geq\frac{1}{2}$, therefore
$\delta(\varphi,\eta)\geq3$.

Case (2.2) $\eta$ is an edge generated from the bisector $pb\left\langle
i,u\right\rangle $, then $\varphi$ and $\eta$ can be rewritten as
$\varphi=(x_{i}^{a_{1}}$, $x_{j}^{a_{1}}$, $x_{k}^{a_{1}}$, $x_{u}^{a_{2}%
})\cup(X_{others}^{A}\backslash X_{removed}^{a_{1}-1,a_{1}+1})$ and
$\eta=(x_{i}^{b_{1}+\frac{1}{2}}$, $x_{j}^{b_{2}}$, $x_{k}^{b_{3}}$,
$x_{u}^{b_{1}+\frac{1}{2}})\cup(X_{others}^{B}\backslash X_{removed}%
^{b_{1},b_{1}+1})$, where $\backslash X_{removed}$ means some components, such
as $a_{1}-1$ and $a_{1}+1$, are excluded from $X_{others}$, in terms of the
bases of vertex and edge (Property \ref{P_3I_Base} and \ref{P_Edge_Base}).
Assume $b_{2}=b_{3}+\Delta$, then we have $\delta(x_{j},x_{k})\geq1$,
$\delta(x_{i})\geq\frac{1}{2}$, $\delta(x_{u})\geq\frac{1}{2}$. To reach the
minimum $\delta=2$, it must be $\delta(x_{i})=|a_{1}-b_{1}-\frac{1}{2}%
|=\frac{1}{2}$ and $\delta(x_{u})=|a_{2}-b_{1}-\frac{1}{2}|=\frac{1}{2}$. The
solutions are $%
\genfrac{\{}{.}{0pt}{}{a_{1}=b_{1}}{a_{1}=b_{1}+1}%
$ and $%
\genfrac{\{}{.}{0pt}{}{a_{2}=b_{1}}{a_{2}=b_{1}+1}
$. Then we have four combinations: (1) $a_{1}=b_{1}$, $a_{2}=b_{1}$ (2)
$a_{1}=b_{1}$, $a_{2}=b_{1}+1$; (3) $a_{1}=b_{1}+1$, $a_{2}=b_{1}$; (4)
$a_{1}=b_{1}+1$, $a_{2}=b_{1}+1$, but all these combinations are not allowed
because they will lead to impossible equations such that $a_{1}=a_{2}$,
$a_{2}=a_{1}-1$, or $a_{2}=a_{1}+1$, the components that have been removed. As
a result, $\delta$ is impossible to reach 2 and hence $\delta$ $\geq3$.

Case (2.3) $\eta$ is an edge generated from the bisector, say $pb\left\langle
i,j\right\rangle $, which is one of the three bisectors generating the
$\varphi$, then $\varphi$ and $\eta$ can be rewritten as $\varphi
=(x_{i}^{a_{1}}$, $x_{j}^{a_{1}}$, $x_{k}^{a_{1}})\cup(X_{others}^{A})$ and
$\eta=(x_{i}^{b_{1}+\frac{1}{2}}$, $x_{j}^{b_{1}+\frac{1}{2}}$, $x_{k}^{b_{2}%
})\cup(X_{others}^{B}\backslash X_{removed}^{b_{1},b_{1}+1})$. Using the
similar analysis in Case (2.2) , easy to know that $\delta\geq2$, and
$\delta=2$ only when case (2.3.1) $a_{1}=b_{1}$, $b_{2}=b_{1}-1$ or case
(2.3.2) $a_{1}=b_{1}+1$, $b_{2}=b_{1}+2 $. For the two cases, the two edges
bear codes $\eta_{1}=(x_{i}^{a_{1}+\frac{1}{2}}$, $x_{j}^{a_{1}+\frac{1}{2}}$,
$x_{k}^{a_{1}-1})$ and $\eta_{2}=(x_{i}^{a_{1}-\frac{1}{2}}$, $x_{j}%
^{a_{1}-\frac{1}{2}}$, $x_{k}^{a_{1}+1})$. In terms of
Eq.\ref{Eq_3ICodes_edges}, the two edges are just the ones that are in
bisector $pb\left\langle i,j\right\rangle $ and joint to a $\varphi^{3I}$. And
in these cases, $\gamma(\varphi,\eta)=3$.

Based on the above cases (1) and (2), we conclude that $\delta=1$ or
$\delta=2$ are the only cases that an edge contains an either 2-I or 3-I
vertex. For other cases, they must be disjoint.
\end{proof}
\end{proposition}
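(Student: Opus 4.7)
My plan is to reduce the proposition to a componentwise bookkeeping exercise based on the explicit code shapes already established in Properties \ref{P_Edge_Base}, \ref{P_2I_Base}, and \ref{P_3I_Base}. The geometric content is that a vertex is an endpoint of an edge exactly when the edge's generating bisector is one of the bisectors that defines the vertex (and the remaining generators produce one of the edges of the local 2-I or 3-I unit, as listed in Properties \ref{P_2I_Codes}--\ref{P_3I_Codes}). So I would split the argument into two principal cases according to whether $\varphi$ is $\varphi^{2I}$ or $\varphi^{3I}$, and then subdivide each according to how many generators of the bisector producing $\eta$ are shared with the generators of $\varphi$.

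For each subcase I would write $\varphi$ and $\eta$ in parallel using a common indexing, isolating the \emph{forced} components (the half-integers coming from Properties \ref{P_Edge_Base}, \ref{P_2I_Base}, \ref{P_3I_Base}) and placing all other components into an $X_{\mathrm{others}}$ bundle of integers. The forced half-integers give automatic lower bounds of $\tfrac12$ on the corresponding $\psi_i$'s, and summing those bounds yields a coarse lower bound on $\delta(\varphi,\eta)$. When the number of shared generators is zero, this coarse bound already exceeds $2$, settling the disjoint case immediately. When it is one or two, the coarse bound drops to $1$ or $\tfrac12$, so I need finer information: I would use the base constraints on $\varphi$ (specifically that its half-integer components cannot coincide, and in the $3$-I case that the three integer components are equal) to exclude the combinations of integer values that would realize the coarse bound. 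This is exactly the mechanism used in the existing proofs of Properties \ref{P_2I_Base} and \ref{P_3I_Base}, so I would reuse it here.

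Having established $\delta(\varphi,\eta)>2$ whenever the bisector of $\eta$ shares fewer than the maximum number of generators with $\varphi$, the remaining case is when the bisector of $\eta$ is itself one of the bisectors generating $\varphi$. There I would set up the two code templates so that the half-integer locations match on a pair of indices, reducing the analysis to a small linear system in the integer offsets. Solving this system under the base constraints should yield exactly the edges described in Properties \ref{P_2I_Codes}(2) and \ref{P_3I_Codes}(2) -- that is, precisely the edges of the corresponding 2-I or 3-I unit that are incident to $\varphi$ -- and for these edges $\delta(\varphi,\eta)$ equals $1$ (2-I case) or $2$ (3-I case), matching the values tabulated in Property \ref{P_2I3IUnits_Distance}. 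This gives the converse direction.

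The hard part, in my estimation, will not be the coarse bounding but the ruling-out of \emph{near-miss} configurations: situations in which the half-integer components could naively combine with integer components of $X_{\mathrm{others}}$ to push $\delta$ down to exactly $2$ without $\varphi$ actually being an endpoint. Because $X_{\mathrm{others}}$ consists entirely of integers, such parity arguments (the chromatic distance at an integer component is $0$ or $\geq 1$, never a half-integer) should close these near-miss cases cleanly, but they have to be checked individually in each of the vertex/edge subcases above. Once the case analysis is exhaustive, the biconditional $\cap(\eta,\varphi)=\varphi \Leftrightarrow \delta(\eta,\varphi)\leq 2$ follows by collecting the ``joint'' subcases (distance $1$ or $2$) and the ``disjoint'' subcases (distance $\geq 3$).
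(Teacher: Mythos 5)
Your plan follows essentially the same route as the paper's proof: split on whether $\varphi$ is 2-I or 3-I, subdivide by how many generators the bisector of $\eta$ shares with those of $\varphi$, bound $\delta(\varphi,\eta)$ componentwise via the forced half-integers, use the base constraints (Properties \ref{P_Edge_Base}, \ref{P_2I_Base}, \ref{P_3I_Base}) to exclude the near-minimal integer combinations, and identify the minimal values $\delta=1$ (2-I) and $\delta=2$ (3-I) with exactly the incident edges of the 2-I/3-I units of Properties \ref{P_2I_Codes} and \ref{P_3I_Codes}. One small repair: in the 3-I case with \emph{no} shared generator the half-integer bound alone gives only $\delta\geq1$, so that subcase is not settled ``immediately'' by the coarse bound -- you still need the argument (the paper's Case (2.1)) that the three equal integer components of $\varphi^{3I}$ against the three pairwise distinct components of $\eta$ force $\delta(x_i,x_j,x_k)\geq2$, a constraint you already list in your toolkit but assign only to the shared-generator cases.
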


Given an edge, an useful function is to calculate all possible chromatic codes
of vertexes contained by the edge. Function $E2V(\eta)$ returns all contained
vertexes, and in particular, $E2V(\eta,2I)$ and $E2V(\eta,3I)$ return all 2-I
and 3-I vertexes, respectively.

\begin{notation}
The procedure of $E2V(\eta,2I)$:

Let $\eta$ is an edge with code $(x_{i}^{z+\frac{1}{2}},x_{j}^{z+\frac{1}{2}%
})\cup(X_{others})$, where $X_{others}=%
\mathbb{N}
\backslash\{z,z+1\}$. (1) Find the minimum component $w$ in $X_{others}$,
assume it is $x_{u}^{w}$; (2) Find $w+1$: if found, assume it is $x_{v}^{w+1}
$, then change $x_{u}^{w}$ and $x_{v}^{w+1}$ both to $w+\frac{1}{2}$ to form a
2-I vertex ($x_{i}^{z+\frac{1}{2}},x_{j}^{z+\frac{1}{2}},x_{u}^{w+\frac{1}{2}%
},x_{v}^{w+\frac{1}{2}}$); if not found, let $w=w+1$ and repeat (1) and (2)
until $w=n-2$.
\end{notation}

Because $(X_{others})$ can be partitioned into two parts $%
\mathbb{N}
_{1}=%
\mathbb{N}
\lbrack0,z-1]$ and $%
\mathbb{N}
_{2}=%
\mathbb{N}
\lbrack z+2,n-1]$, if $z\neq0$ and $z\neq n-2$, given any a component pair
such as $(x_{u}^{w},x_{v}^{w+1})$ in $%
\mathbb{N}
_{1}$ or $%
\mathbb{N}
_{2}$, it corresponds an edge with codes $(x_{i}^{z+\frac{1}{2}}%
,x_{j}^{z+\frac{1}{2}},x_{u}^{w+\frac{1}{2}},x_{v}^{w+\frac{1}{2}})$. Because
there are $z-1$ such component pairs in $%
\mathbb{N}
_{1}$, and $n-z-3$ pairs in $%
\mathbb{N}
_{2}$, therefore total $n-4$ available pairs. If $z\neq0$ or $z\neq n-2$, then
either $%
\mathbb{N}
_{1}$ or $%
\mathbb{N}
_{2}$ will be empty and the other will contain $n-3$ available pairs. For
example, for edge $(0,\frac{7}{2},5,1,2,\frac{7}{2})$ with $z=3$ (see edge
$07A247$ in Fig.3b), and hence it has $n-4=2$ available component pairs. We
first found $(0,1)$ and next $(1,2)$, and then they form two 2-I vertexes
$(\frac{1}{2},\frac{7}{2},5,\frac{1}{2},2,\frac{7}{2})$ (see vertex $17A147$
in Fig.3b) and $(0,\frac{7}{2},5,\frac{3}{2},\frac{3}{2},\frac{7}{2})$, respectively.

\begin{notation}
The procedure of $E2V(\eta,3I)$:

Let $\eta$ is an edge with code $(x_{i}^{z+\frac{1}{2}},x_{j}^{z+\frac{1}{2}%
})\cup(X_{others})$, where $X_{others}=%
\mathbb{N}
\backslash\{z,z+1\}$. (1) Find the minimum component $w$ in $(X_{others})$,
assume it is $x_{k}^{w}$; (2) If $e=(2z+1+w)\equiv0(\operatorname{mod}3)$ and
$e\notin(X_{others})$, then change $x_{i}^{z+\frac{1}{2}}$, $x_{j}^{z+\frac
{1}{2}}$, and $x_{k}^{w}$ to $\frac{e}{3}$ to form a 3-I vertex $(x_{i}%
^{\frac{e}{3}},x_{j}^{\frac{e}{3}},x_{k}^{\frac{e}{3}})$; (3) Let $w=w+1$ and
repeat (2) until $w=n-1$.
\end{notation}

Because $e\equiv0(\operatorname{mod}3)$ and $z=%
\mathbb{N}
\lbrack0,n-2]$, therefore $w=3m-2z-1$, with condition that $w=%
\mathbb{N}
\lbrack0,n-1]\backslash\{z,z+1\}$ and $m\in%
\mathbb{N}
\lbrack1,\frac{2}{3}(n+1)]$. For example, for edge $(2,3,\frac{9}{2}%
,0,1,\frac{9}{2})$ with $z=4$ (edge 469029 in Fig.3b), only $m=4$ and $w=3$
will form a 3-I vertex $(2,4,4,0,1,4)$ (see vertex $488028$ in Fig.3b).

Therefore the contain relation between a vertex $\varphi$ and an edge $\eta$
can be also determined by checking if $\varphi\in E2V(\eta)$.

Another V-E relation is that two vertexes are exactly the two ends of an edge,
called they are \emph{segmented} (Fig.6e), that is, $\cap(\eta,\varphi
_{1},\varphi_{2})=\{\varphi_{1},\varphi_{2}\}$. This relation is equivalent to
two vertexes share an edge.

\begin{proposition}
\label{Proposition_VE_segmented}Given an edge $\eta$ and two vertexes
$\varphi_{1}$ and $\varphi_{2}$, which could be both 2-I, or 3-I, or one is
2-I and the other is 3-I, then
\begin{align}
\cap(\eta,\varphi_{1}^{2I},\varphi_{2}^{2I})  & =\{\varphi_{1}^{2I}%
,\varphi_{2}^{2I}\}\Leftrightarrow\delta(\varphi_{1}^{2I},\varphi_{2}%
^{2I})=2\label{Proposition_VE_2I2I}\\
\cap(\eta,\varphi_{1}^{2I},\varphi_{2}^{3I})  & =\{\varphi_{1}^{2I}%
,\varphi_{2}^{3I}\}\Leftrightarrow\delta(\varphi_{1}^{2I},\varphi_{2}%
^{3I})=3\label{Proposition_VE_2I3I}\\
\cap(\eta,\varphi_{1}^{3I},\varphi_{2}^{3I})  & =\{\varphi_{1}^{3I}%
,\varphi_{2}^{3I}\}\Leftrightarrow\delta(\varphi_{1}^{3I},\varphi_{2}%
^{3I})=4\label{Proposition_VE_3I3I}%
\end{align}

\end{proposition}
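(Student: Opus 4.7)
My plan is to prove each of the three equivalences by combining the explicit local code structure from Properties \ref{P_2I_Codes} and \ref{P_3I_Codes} with a geometric ``no interior crossing'' argument about the shared edge $\eta$. For the forward direction, the first key step is to observe that any edge joining two vertexes lies on a unique bisector $pb\langle i,j\rangle$, so by Property \ref{P_Edge_Base} both $\varphi_1$ and $\varphi_2$ share the common components $t_i=t_j=z+\tfrac{1}{2}$ and contribute nothing to $\delta$ at those positions. The second step is to read off, in each case, how each vertex code differs from $\eta$'s code: a 2I endpoint deviates from $\eta$ at only two positions by $\pm\tfrac{1}{2}$ each, while a 3I endpoint deviates at three positions by magnitudes $\tfrac{1}{2},\tfrac{1}{2},1$.

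The third step is to sum these deviations to get $\delta(\varphi_1,\varphi_2)$. In the generic subcases where the additional-bisector points of $\varphi_1$ and $\varphi_2$ are disjoint, the deviations live at disjoint positions and immediately total $2$, $3$, and $4$ for the three cases. For the 3I-3I case I would additionally invoke the base constraint of Property \ref{P_3I_Base} to rule out equality of the two repeated integers $a_1,a_2$: if $a_1=a_2$, the edge $\eta$'s integer component at $k_2$ would simultaneously have to lie in $\{a_2\pm 1\}$ and equal $\varphi_1.t_{k_2}\notin\{a_1-1,a_1,a_1+1\}$, a contradiction. Hence $|a_1-a_2|=1$, which directly yields the four unit differences that sum to $4$.

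For the backward direction, my plan is to recover the geometric configuration from the code data. The tight value of $\delta$, combined with the vertex code templates of Properties \ref{P_2I_Base} and \ref{P_3I_Base}, forces $\varphi_1,\varphi_2$ to share a common half-integer pair at some two positions $i,j$, so both lie on $pb\langle i,j\rangle$. Any additional vertex strictly between them on $pb\langle i,j\rangle$ would contribute an extra $\geq 1$ to $\delta$ by Property \ref{P_2I3IUnits_Distance}, which is incompatible with the target value, so the segment between them on the bisector is a single edge $\eta$ whose two endpoints are $\varphi_1$ and $\varphi_2$. The main obstacle I anticipate is the nongeneric subcases in the forward direction where the additional-bisector points of $\varphi_1$ and $\varphi_2$ share a common index outside $\{i,j\}$; here the sign of each half-integer shift must be determined geometrically, using the fact that the interior of $\eta$ sits in a consistent half-plane with respect to every bisector it does not cross, to show that the shifts always combine so as to restore the claimed value of $\delta$ rather than dropping below it.
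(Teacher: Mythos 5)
Your forward direction is close in spirit to what the paper extracts from Properties \ref{P_2I_Codes} and \ref{P_3I_Codes}, and your use of Property \ref{P_3I_Base} to force $|a_1-a_2|=1$ in the 3I--3I case is correct and nicely stated. Two caveats, though. First, your opening claim that both vertexes share the edge's components $t_i=t_j=z+\tfrac{1}{2}$ is false for a 3-I endpoint: such a vertex differs from the edge by $\tfrac{1}{2}$ at both $i$ and $j$ (your own later count $\tfrac{1}{2},\tfrac{1}{2},1$ supersedes this), so the ``disjoint deviation positions'' picture never literally applies in the 3I--3I case. Second, the overlapping-index subcases you defer are not settled by the observation that the interior of $\eta$ lies in a fixed half-plane of every bisector it does not cross; that fact by itself does not determine the relative signs of the two half-integer shifts at the shared position. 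What closes these subcases is either the base constraints (e.g.\ for two 2-I endpoints sharing an extra generator $w$, equal shifts at $w$ would force an integer component of one vertex to equal a value excluded by Property \ref{P_2I_Base}), or an intermediate-value argument showing that the bisector of the two extra generators would then have to cross the open edge. So this half is a repairable sketch, essentially paralleling what the paper reads off its unit codes.

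The backward direction, however, has a genuine gap at the step ``any additional vertex strictly between them on $pb\langle i,j\rangle$ would contribute an extra $\geq 1$ to $\delta$ by Property \ref{P_2I3IUnits_Distance}.'' That property only tabulates distances between particles inside a single 2-I/3-I unit, has no vertex--vertex entries at all, and chromatic distance is not additive along a bisector, so nothing you cite excludes cancellation across the intermediate unit. Worse, the claim fails outright at the 3I--3I target value: take generators $p_1=(0,1)$, $p_2=(0,-1)$, $p_3=(1,0.2)$, $p_4=(-1,0.3)$; the 3-I vertexes with codes $(2,2,2,0)$ and $(2,2,0,2)$ both lie on $pb\langle 1,2\rangle$ and have $\delta=4$, yet the 2-I vertex $(\tfrac{5}{2},\tfrac{5}{2},\tfrac{1}{2},\tfrac{1}{2})$ lies strictly between them, so they are not the two ends of a single edge. (This also indicates that the unproved case \eqref{Proposition_VE_3I3I} needs the hidden-particle/$\mathbb{R}^{n-1}$ reading the paper adopts in Propositions \ref{Proposition_EE_joint} and \ref{Proposition_EC_joint}.) Separately, your step ``the tight value of $\delta$ forces a shared half-integer pair at two common positions'' is precisely the content of the paper's Cases (1.1)--(1.3), which show that 2-I vertexes not sharing a bisector have $\delta\geq 3$; you assert it rather than prove it. The paper's proof of \eqref{Proposition_VE_2I2I} works instead by an exhaustive case analysis on how many generators the two vertexes share and then exhibits the common incident edge explicitly from the solution equations (its Cases (1.4)--(1.5)); without redoing that analysis, and with the ``no intermediate vertex'' shortcut invalid, your backward argument does not go through even for the 2I--2I case.
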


\begin{proof}
Case (1) If $\varphi_{1}$ and $\varphi_{2}$ are both 2-I vertexes, then
according to Property \ref{P_2I_Codes} and \ref{P_2I_Unit}, it is easy to know
that $\delta(\varphi_{1},\varphi_{2})=2$. Below we prove that if
$\delta(\varphi_{1},\varphi_{2})=2$, then they are segmented.

Case (1.1) If $\varphi_{1}^{2I}$ is generated by $pb\left\langle
i,j\right\rangle $ and $pb\left\langle u,v\right\rangle $ and hence with a
code $(x_{i}^{a_{1}+\frac{1}{2}}$, $x_{j}^{a_{1}+\frac{1}{2}}$, $x_{u}%
^{a_{2}+\frac{1}{2}}$, $x_{v}^{a_{2}+\frac{1}{2}}$, $x_{e}^{a_{3}}$,
$x_{f}^{a_{4}}$, $x_{g}^{a_{5}}$, $x_{h}^{a_{6}})$ and $\varphi_{2}^{2I}$ is
generated by $pb\left\langle e,f\right\rangle $ and $pb\left\langle
g,h\right\rangle $ and hence with a code $(x_{i}^{b_{1}}$, $x_{j}^{b_{2}}$,
$x_{u}^{b_{3}}$, $x_{v}^{b_{4}}$, $x_{e}^{b_{5}+\frac{1}{2}}$, $x_{f}%
^{b_{5}+\frac{1}{2}}$, $x_{g}^{b_{6}+\frac{1}{2}}$, $x_{h}^{b_{6}+\frac{1}{2}%
})$. Apparently their $\delta\geq4$.

Case (1.2) If they share one point, say $e=i$, then their codes should be
$(x_{i}^{a_{1}+\frac{1}{2}}$, $x_{j}^{a_{1}+\frac{1}{2}}$, $x_{u}^{a_{2}%
+\frac{1}{2}}$, $x_{v}^{a_{2}+\frac{1}{2}}$, $x_{f}^{a_{3}}$, $x_{g}^{a_{4}}$,
$x_{h}^{a_{5}})$ and $(x_{i}^{b_{1}+\frac{1}{2}}$, $x_{j}^{b_{2}}$,
$x_{u}^{b_{3}}$, $x_{v}^{b_{4}}$, $x_{f}^{b_{1}+\frac{1}{2}}$, $x_{g}%
^{b_{5}+\frac{1}{2}}$, $x_{h}^{b_{5}+\frac{1}{2}})$, so $\delta\geq3$.

Case (1.3) If they share two points, say $e=i$ and $f=u$, then their codes
should be $(x_{i}^{a_{1}+\frac{1}{2}}$, $x_{j}^{a_{1}+\frac{1}{2}}$,
$x_{u}^{a_{2}+\frac{1}{2}}$, $x_{v}^{a_{2}+\frac{1}{2}}$, $x_{g}^{a_{3}}$,
$x_{h}^{a_{4}})$ and $(x_{i}^{b_{1}+\frac{1}{2}}$, $x_{j}^{b_{2}}$,
$x_{u}^{b_{1}+\frac{1}{2}}$, $x_{v}^{b_{3}}$, $x_{g}^{b_{4}+\frac{1}{2}}$,
$x_{h}^{b_{4}+\frac{1}{2}})$, so $\delta\geq2$. To make $\delta=2$, it must be
$\delta(x_{i})=\delta(x_{u})=0$, indicating that $a_{1}=a_{2}$, an impossible
case. Therefore $\delta\geq3$.

Case (1.4) If they share two points, say $e=i$ and $f=j$, then their codes
should be $(x_{i}^{a_{1}+\frac{1}{2}}$, $x_{j}^{a_{1}+\frac{1}{2}}$,
$x_{u}^{a_{2}+\frac{1}{2}}$, $x_{v}^{a_{2}+\frac{1}{2}}$, $x_{g}^{a_{3}}$,
$x_{h}^{a_{4}})\cup(X_{others}^{A}\backslash X_{removed}^{a_{1},a_{2}%
,a_{1}+1,a_{2}+1})$ and $(x_{i}^{b_{1}+\frac{1}{2}}$, $x_{j}^{b_{1}+\frac
{1}{2}}$, $x_{u}^{b_{2}}$, $x_{v}^{b_{3}}$, $x_{g}^{b_{4}+\frac{1}{2}}$,
$x_{h}^{b_{4}+\frac{1}{2}})\cup(X_{others}^{B}\backslash X_{removed}%
^{b_{1},b_{4},b_{1}+1,b_{4}+1})$. Theoretically $\delta\geq2$. To reach 2, it
must be $\delta(x_{i})=\delta(x_{j})=0\Rightarrow a_{1}=b_{1}$, $\delta
(x_{u})=\frac{1}{2}\Rightarrow%
\genfrac{\{}{.}{0pt}{}{a_{2}=b_{2}}{a_{2}=b_{2}-1}%
$, $\delta(x_{v})=\frac{1}{2}\Rightarrow%
\genfrac{\{}{.}{0pt}{}{a_{2}=b_{3}}{a_{2}=b_{3}-1}%
$, $\delta(x_{g})=\frac{1}{2}\Rightarrow%
\genfrac{\{}{.}{0pt}{}{a_{3}=b_{4}}{a_{3}=b_{4}+1}%
$, $\delta(x_{h})=\frac{1}{2}\Rightarrow%
\genfrac{\{}{.}{0pt}{}{a_{4}=b_{4}}{a_{4}=b_{4}+1}%
$. Then we have 16 combinations for solutions, but because $a_{3}\neq a_{4}$,
$b_{2}\neq b_{3}$, as well as some components have been removed, therefore
only the below four solutions are allowed.
\begin{equation}
\left\{
\begin{array}
[c]{c}%
a_{2}=b_{2},a_{2}=b_{3}-1,a_{3}=b_{4},a_{4}=b_{4}+1\\
a_{2}=b_{2},a_{2}=b_{3}-1,a_{3}=b_{4}+1,a_{4}=b_{4}\\
a_{2}=b_{2}-1,a_{2}=b_{3},a_{3}=b_{4},a_{4}=b_{4}+1\\
a_{2}=b_{2}-1,a_{2}=b_{3},a_{3}=b_{4}+1,a_{4}=b_{4}%
\end{array}
\right. \label{Eq_VE_segmented_4solutions}%
\end{equation}
Also in a 2-I unit, $\varphi_{1}^{2I}$ should link to two edges with codes%
\begin{align}
& (x_{i}^{a_{1}+\frac{1}{2}},x_{j}^{a_{1}+\frac{1}{2}},x_{u}^{a_{2}}%
,x_{v}^{a_{2}+1},x_{g}^{a_{3}},x_{h}^{a_{4}})\label{Eq_VE_segmented_2edges1}\\
& (x_{i}^{a_{1}+\frac{1}{2}},x_{j}^{a_{1}+\frac{1}{2}},x_{u}^{a_{2}+1}%
,x_{v}^{a_{2}},x_{g}^{a_{3}},x_{h}^{a_{4}})\nonumber
\end{align}
and $\varphi_{2}^{2I}$ should link to two edges with codes%
\begin{align}
& (x_{i}^{b_{1}+\frac{1}{2}},x_{j}^{b_{1}+\frac{1}{2}},x_{u}^{b_{2}}%
,x_{v}^{b_{3}},x_{g}^{b_{4}},x_{h}^{b_{4}+1})\label{Eq_VE_segmented_2edges2}\\
& (x_{i}^{b_{1}+\frac{1}{2}},x_{j}^{b_{1}+\frac{1}{2}},x_{u}^{b_{2}}%
,x_{v}^{b_{3}},x_{g}^{b_{4}+1},x_{h}^{b_{4}})\nonumber
\end{align}

Given a solution in Eq.\ref{Eq_VE_segmented_4solutions}, we can always find
two edges, in which one is from $\varphi_{1}^{2I}$
(Eq.\ref{Eq_VE_segmented_2edges1}) and the other is from $\varphi_{2}^{2I}$
(Eq.\ref{Eq_VE_segmented_2edges2}) are same, and therefore we get $\varphi
_{1}^{2I}$ and $\varphi_{2}^{2I}$ are the two ends of an edge.

Case (1.5) If they share three points, say $e=i$, $f=j$, and $g=u$, then their
codes should be $(x_{i}^{a_{1}+\frac{1}{2}}$, $x_{j}^{a_{1}+\frac{1}{2}}$,
$x_{u}^{a_{2}+\frac{1}{2}}$, $x_{v}^{a_{2}+\frac{1}{2}}$, $x_{h}^{a_{3}}%
)\cup(X_{others}^{A}\backslash X_{removed}^{a_{1},a_{2},a_{1}+1,a_{2}+1})$ and
$(x_{i}^{b_{1}+\frac{1}{2}}$, $x_{j}^{b_{1}+\frac{1}{2}}$, $x_{u}^{b_{2}%
+\frac{1}{2}}$, $x_{v}^{b_{3}}$, $x_{h}^{b_{2}+\frac{1}{2}})\cup
(X_{others}^{B}\backslash X_{removed}^{b_{1},b_{2},b_{1}+1,b_{2}+1})$.
Theoretically $\delta\geq1$. To reach 1, it must be $a_{1}=b_{1}$,
$a_{2}=b_{2}$, $%
\genfrac{\{}{.}{0pt}{}{a_{2}=b_{3}}{a_{2}=b_{3}-1}%
$, $%
\genfrac{\{}{.}{0pt}{}{a_{3}=b_{2}}{a_{3}=b_{2}+1}%
$. Because $a_{2}=b_{3}\Rightarrow b_{2}=b_{3}$, $a_{3}=b_{2}\Rightarrow
a_{2}=a_{3}$, $a_{2}=b_{3}-1\Rightarrow b_{3}=b_{2}+1$, $a_{3}=b_{2}%
+1\Rightarrow a_{3}=a_{2}+1$, but all these equations are impossible because
they have been already removed from their codes.

Then the next minimum $\delta$ should be 2. Because $\delta(x_{i}%
)=\delta(x_{j})\geq0$, they must be = 0 in this case, since if they = 1, then
$\delta\geq3$. Because $\delta(x_{v})$ and $\delta(x_{h})$ at least contribute
1 to $\delta$, then if $\delta=2$, then anther $\delta(x)$ should be from
$\delta(x_{u})$ or $\delta(X_{others})$. But if $\delta(X_{others})=1$, then
$\delta(x_{u})$ must be 0, and if $\delta(x_{u})=0$, then we have that
$\delta(x_{i})=\delta(x_{j})=$ $\delta(x_{u})=0$ and $\delta(x_{v})=$
$\delta(x_{h})=\frac{1}{2}$, an impossible case we just proved above.
Therefore we have $\delta(x_{u})=1\Rightarrow%
\genfrac{\{}{.}{0pt}{}{a_{2}=b_{2}+1}{a_{2}=b_{2}-1}%
$ and also $%
\genfrac{\{}{.}{0pt}{}{a_{2}=b_{3}}{a_{2}=b_{3}-1}%
$, $%
\genfrac{\{}{.}{0pt}{}{a_{3}=b_{2}}{a_{3}=b_{2}+1}%
$. We then have 8 combinations for these solutions such as%
\begin{equation}
\left\{
\begin{array}
[c]{c}%
a_{2}=b_{2}+1,a_{2}=b_{3},a_{3}=b_{2}\\
a_{2}=b_{2}+1,a_{2}=b_{3},a_{3}=b_{2}+1\\
a_{2}=b_{2}+1,a_{2}=b_{3}-1,a_{3}=b_{2}\\
a_{2}=b_{2}+1,a_{2}=b_{3}-1,a_{3}=b_{2}+1\\
a_{2}=b_{2}-1,a_{2}=b_{3},a_{3}=b_{2}\\
a_{2}=b_{2}-1,a_{2}=b_{3},a_{3}=b_{2}+1\\
a_{2}=b_{2}-1,a_{2}=b_{3}-1,a_{3}=b_{2}\\
a_{2}=b_{2}-1,a_{2}=b_{3}-1,a_{3}=b_{2}+1
\end{array}
\right.
\end{equation}
By checking these solutions we can always find some removed components at $i$,
$j$, $u$, $v$, and $h$, except for the solution that $a_{2}=b_{2}+1$,
$a_{2}=b_{3}-1$, $a_{3}=b_{2}$. Also $\varphi_{1}^{2I}$ should link to an edge
$\eta_{1}=(x_{i}^{a_{1}+\frac{1}{2}}$, $x_{j}^{a_{1}+\frac{1}{2}}$,
$x_{u}^{a_{2}}$, $x_{v}^{a_{2}+1}$, $x_{h}^{a_{3}})$ and $\varphi_{2}^{2I}$
should link to an edge $\eta_{2}=(x_{i}^{b_{1}+\frac{1}{2}}$, $x_{j}%
^{b_{1}+\frac{1}{2}}$, $x_{u}^{b_{2}+1}$, $x_{v}^{b_{3}}$, $x_{h}^{b_{2}})$.
Since $(a_{3}=b_{2}$, $a_{2}=b_{2}+1)\Rightarrow a_{3}=a_{2}-1$, $a_{2}%
=b_{3}-1\Rightarrow b_{3}=a_{2}+1$, $a_{2}=b_{2}+1\Rightarrow b_{2}=a_{2}-1$,
therefore $\eta_{1}=\eta_{2}=(x_{i}^{a_{1}+\frac{1}{2}}$, $x_{j}^{a_{1}%
+\frac{1}{2}}$, $x_{u}^{a_{2}}$, $x_{v}^{a_{2}+1}$, $x_{h}^{a_{2}-1})$,
indicating that the two vertexes are both the ends of the same edge.
\end{proof}

Above we proved that if chromatic distance between two 2-I vertexes is 2, then
they must be segmented with an edge. The other two cases
Eq.\ref{Proposition_VE_2I3I} and \ref{Proposition_VE_3I3I} can be proved in
the same way, but they are too long and hence not presented here.

Similarly, we can also use $E2V$ function to determine if two vertexes and one
edge are segmented, that is,%
\begin{equation}
\cap(\eta,\varphi_{1},\varphi_{2})=\{\varphi_{1},\varphi_{2}\}\Leftrightarrow
\varphi_{1},\varphi_{2}\in E2V(\eta)
\end{equation}

\subsubsection{Vertex-Cell (V-C) relations}

The relation between a vertex $\varphi$ and a cell $\zeta$ is that $\varphi$
is one of edge ends which are the boundaries of $\zeta$, called the cell
contains $\varphi^{3I}$, i.e., $\cap(\zeta,\varphi)=\varphi$, see Fig.6b;
otherwise, they are \emph{disjoint}, i.e., $\cap(\zeta,\varphi)=\varnothing$,
see Fig.6c.

\begin{proposition}
\label{Proposition_VC_joint}Given a vertex $\varphi$ and a cell $\zeta$, then%
\begin{align}
\cap(\zeta,\varphi)  & =\varphi\Leftrightarrow\delta(\zeta,\varphi)=2\\
\cap(\zeta,\varphi)  & =\varnothing\Leftrightarrow\delta(\zeta,\varphi
)>2\nonumber
\end{align}

\end{proposition}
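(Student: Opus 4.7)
The plan is to reduce the proposition to a direct code comparison within the 2-I and 3-I units that contain $\varphi$, then close the argument by invoking uniqueness of cell codes (Theorem~\ref{Theorem_unique_cell}). I would handle the forward direction by a quick calculation from Properties~\ref{P_2I_Codes} and \ref{P_3I_Codes}: in a 2-I unit, each of the four adjacent cells differs from the vertex by $\tfrac12$ at each of the positions $i,j,u,v$ and by $0$ elsewhere, giving $\delta=2$; in a 3-I unit, each of the six adjacent cells differs from the vertex by $1$ at exactly two of the positions $i,j,k$ and by $0$ elsewhere, also giving $\delta=2$. Since the cell--vertex relation in a planar full-OACD is either ``contained'' or ``disjoint'', it then suffices to prove the converse implication $\delta=2 \Rightarrow \cap(\zeta,\varphi)=\varphi$, split on the type of $\varphi$.

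For the 2-I converse, suppose $\varphi$ is generated by $pb\langle i,j\rangle$ and $pb\langle u,v\rangle$ with its code carrying $\tfrac12$-shifted components at $i,j,u,v$. At each of those four positions, a half-integer versus an integer forces a contribution of at least $\tfrac12$, so the budget $\delta=2$ is saturated there and every other position must match exactly. Combining cellular distinctness with the disjointness $\{a,a+1\}\cap\{b,b+1\}=\varnothing$ granted by Property~\ref{P_2I_Base} pins $\{C_i,C_j\}=\{a,a+1\}$ and $\{C_u,C_v\}=\{b,b+1\}$, leaving precisely the four configurations listed in Property~\ref{P_2I_Codes}. Theorem~\ref{Theorem_unique_cell} then identifies $\zeta$ as one of the four cells surrounding $\varphi$.

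For the 3-I converse, write $\varphi=(z+1,z+1,z+1)$ at $i,j,k$. Since $C_i,C_j,C_k$ are distinct integers, at most one of them can equal $z+1$, so the contribution at these three positions is already at least $2$. Hence $\delta=2$ forces a $(1,1,0)$ split across $\{i,j,k\}$ together with an exact match at every other position; the two non-matching components are then pinned to $\{z,z+2\}$ in some order. These six possibilities exhaust the cell codes enumerated in Property~\ref{P_3I_Codes}, and Theorem~\ref{Theorem_unique_cell} again identifies $\zeta$ with one of the six cells of the 3-I unit. The main obstacle is the 3-I bookkeeping: one must exclude the alternative single-position difference-of-$2$ decomposition, which would demand matching at two of $\{i,j,k\}$ and thereby contradict cell distinctness, and must verify via Property~\ref{P_3I_Base} that the ``other'' components forced to match are consistent with the removed-component structure of the vertex's base.
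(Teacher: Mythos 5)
Your proposal is correct and follows essentially the same route as the paper's proof: a case split on 2-I versus 3-I vertexes, a per-component lower bound showing $\delta(\zeta,\varphi)\geq 2$, and an enumeration showing $\delta=2$ forces the cell's code to be one of the four (resp.\ six) cells of the surrounding 2-I (resp.\ 3-I) unit, with the forward direction read off from the unit codes. Your explicit appeal to Theorem~\ref{Theorem_unique_cell} to identify $\zeta$ with the adjacent cell is only a slightly more explicit version of the step the paper leaves implicit, so no substantive difference.
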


\begin{proof}
From table \ref{Table1} we know that either for $\varphi^{2I}$ or
$\varphi^{3I}$, $\delta(\varphi,\zeta)=2$.

Case (1): Suppose a vertex $\varphi^{2I}$ bears a code $(x_{i}^{a_{1}+\frac
{1}{2}}$, $x_{j}^{a_{1}+\frac{1}{2}}$, $x_{u}^{a_{2}+\frac{1}{2}}$,
$x_{v}^{a_{2}+\frac{1}{2}})\cup(X_{others}^{A})$ and a cell $\zeta$ bears a
code $(x_{i}^{b_{1}}$, $x_{j}^{b_{2}}$, $x_{u}^{b_{3}}$, $x_{v}^{b_{4}}%
)\cup(X_{others}^{B})$, then $\delta(\varphi^{2I},\zeta)\geq2$. To reach 2, it
must be $|a_{1}+\frac{1}{2}-b_{1}|=\frac{1}{2}$, $|a_{1}+\frac{1}{2}%
-b_{2}|=\frac{1}{2}$, $|a_{2}+\frac{1}{2}-b_{3}|=\frac{1}{2}$, $|a_{2}%
+\frac{1}{2}-b_{4}|=\frac{1}{2}$, and $|X_{others}^{A}-X_{others}^{B}|=0$.
Therefore for each equation, the solutions should be the cases that $%
\genfrac{\{}{.}{0pt}{}{a_{1}=b_{1}}{a_{1}=b_{1}-1}%
$, $%
\genfrac{\{}{.}{0pt}{}{a_{1}=b_{2}}{a_{1}=b_{2}-1}%
$, $%
\genfrac{\{}{.}{0pt}{}{a_{2}=b_{3}}{a_{2}=b_{3}-1}%
$, and $%
\genfrac{\{}{.}{0pt}{}{a_{2}=b_{4}}{a_{2}=b_{4}-1}%
$, respectively. It is also known that $b_{1}\neq b_{2}\neq b_{3}\neq b_{4}$,
thus only the below four solutions are allowed to give $\delta(\varphi
^{2I},\zeta)=2$: (1) $a_{1}=b_{1}$, $b_{1}=b_{2}-1$, (2) $a_{1}=b_{2}$,
$b_{2}=b_{1}-1$, (3) $a_{2}=b_{3}$, $b_{3}=b_{4}-1$, and (4) $a_{2}=b_{4}$,
$b_{4}=b_{3}-1$. It is easy to know that the four solutions are just the four
cells around the vertex $\varphi^{2I}$ in a 2-I unit.

Case (2): suppose a vertex $\varphi^{3I}$ bears a code $(x_{i}^{a_{1}}%
,x_{j}^{a_{1}},x_{k}^{a_{1}})$ and a cell $\zeta$ bears a code $(x_{i}^{b_{1}%
},x_{j}^{b_{2}},x_{k}^{b_{3}})$, and let $|a_{1}-b_{1}|=\Delta_{1}$,
$|a_{1}-b_{2}|=\Delta_{2}$, $|a_{1}-b_{3}|=\Delta_{3}$. Although theoretically
$\delta(\varphi^{3I},\zeta)=\Delta_{1}+\Delta_{2}+\Delta_{3}$ $\geq0$, it is
impossible to reach 0 or even 1 because $b_{1}\neq b_{2}\neq b_{3}$. Therefore
the next minimum $\delta(\varphi^{3I},\zeta)$ is 2 and it should be given by
two of $\Delta_{1},\Delta_{2},$and $\Delta_{3}$ both = 1 and one of them = 0.
Let's assume $\Delta_{1}=0$ and hence $a_{1}=b_{1}$, then we have
$|b_{1}-b_{2}|=1$ and $|b_{1}-b_{3}|=1$. The two equations have solutions $%
\genfrac{\{}{.}{0pt}{}{b_{1}=b_{2}+1}{b_{1}=b_{2}-1}%
$and $%
\genfrac{\{}{.}{0pt}{}{b_{1}=b_{3}+1}{b_{1}=b_{3}-1}%
$, and hence only two solutions are allowed: $%
\genfrac{\{}{.}{0pt}{}{b_{1}=b_{2}+1,b_{1}=b_{3}-1}{b_{1}=b_{2}-1,b_{1}%
=b_{3}+1}%
$. If $\Delta_{2}=0$ or $\Delta_{3}=0$, we can get another four allowed
solutions which give $\delta(\varphi^{3I},\zeta)=2$, and in total we get six
solutions. Comparing these solutions to those cells (Eq.\ref{Eq_3ICodes_cells}%
) around the $\varphi^{3I}$ in a 3-I unit, we thus to know that if
$\delta(\varphi^{3I},\zeta)=2$, the cell must contain the $\varphi^{3I}$.
\end{proof}

\subsubsection{Edge-Edge (E-E) relations}

In 2-I and 3-I units there three types of relations between two edges, i.e.,
adjacent, opposite, and interval, and their chromatic distances could be 2, 3,
or 4, respectively, but actually all the three relations are topologically
same as two edges share an either 2-I or 3-I vertex as one of their ends. This
E-E\ relation is called the two edges are \emph{joint }with a vertex, i.e.,
$\cap(\eta_{1},\eta_{2})=\varphi$, which further has two types: (1) collinear,
denoted by $\overline{\eta_{1}\eta_{2}}$ (Fig.6g), and (2) not collinear
(Fig.6h). If two edges do not share any particles, they are \emph{disjoint}
(Fig.6i). The E-E collinear relation is easy to determine by using the below proposition.

\begin{proposition}
\label{Proposition_EE_collinear}Given two edges $\eta_{1}(x_{i}^{z_{1}%
+\frac{1}{2}},x_{j}^{z_{1}+\frac{1}{2}})$ and $\eta_{2}(x_{u}^{z_{2}+\frac
{1}{2}},x_{v}^{z_{2}+\frac{1}{2}})$,%
\begin{equation}
\overline{\eta_{1}\eta_{2}}\Leftrightarrow i=u,j=v
\end{equation}

\end{proposition}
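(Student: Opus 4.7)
The plan is to reduce the geometric statement ``collinear'' to the combinatorial statement ``generated by the same bisector,'' and then read off the generating bisector from the code structure using Property \ref{P_Edge_Base}.

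First I would handle the direction $i=u,j=v \Rightarrow \overline{\eta_{1}\eta_{2}}$. By construction (Step (1) of the full-OACD procedure) and Property \ref{P_Edge_Base}, an edge whose code carries half-integer components precisely at positions $i$ and $j$ is a sub-segment of the perpendicular bisector $pb\langle i,j\rangle$. So if $\eta_{1}$ and $\eta_{2}$ have their half-integers at the same two positions $i=u$ and $j=v$, both lie on the single line $pb\langle i,j\rangle$ and are therefore collinear.

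For the converse $\overline{\eta_{1}\eta_{2}} \Rightarrow i=u,j=v$, the key point is that every edge of a full-OACD lies on exactly one of the $C_{n}^{2}$ bisector lines (each bisector is cut into pieces by the others, but the pieces do not leave the bisector). Under the general-case hypothesis on $\boldsymbol{P}$ given before Definition~4, the bisectors $pb\langle i,j\rangle$ for distinct point-pairs are pairwise distinct lines. Hence if $\eta_{1}$ and $\eta_{2}$ are collinear, they must lie on the same bisector line. But Property \ref{P_Edge_Base} tells us that the unordered pair of indices carrying the two half-integer components in the chromatic code of an edge is exactly the generating pair of that bisector. Therefore $\{i,j\}=\{u,v\}$, which after relabeling yields $i=u$ and $j=v$.

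The main obstacle I anticipate is not the algebra but making the geometric claim ``distinct bisectors are distinct lines, so collinearity forces the same bisector'' fully rigorous within the framework of the paper. It is strictly speaking not the general-case assumption about parallelism or concurrency, but the more basic fact that two perpendicular bisectors of different unordered point-pairs coincide as lines only in degenerate configurations excluded here; once this is said cleanly, everything else is a direct reading of Property \ref{P_Edge_Base}.
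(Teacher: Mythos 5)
Your argument is correct and is essentially the reasoning the paper intends: the paper states this proposition without proof, treating it as a direct reading of Property \ref{P_Edge_Base}, namely that the two half-integer positions in an edge's code identify its generating bisector, so that in general position collinearity is equivalent to having the same generating pair. One small refinement: the coincidence of two bisectors $pb\langle i,j\rangle$ and $pb\langle u,v\rangle$ with $\{i,j\}\neq\{u,v\}$ forces the four generators to be concyclic, so all six of their bisectors pass through the circumcenter, and this degenerate configuration is in fact already excluded by the paper's general-case condition on concurrent bisectors, not only by a separate ad hoc assumption.
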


Note that Proposition \ref{Proposition_EE_collinear} can only tell if two
edges are collinear, but two collinear edges may not be joint. A feasible
method to reason topological joint between two edges is using $E2V(\eta)$
function to calculate all possible vertexes contained by the two edges, and if
among them two vertexes are equal, then they are joint with this vertex. This
method, however, can only tell if two edges are possible to be joint, but in a
real full-OACD, they may not be joint because the joint vertex is hidden in
high-dimensional spaces. For example, the edge (36A038) and (25A058) in Fig.
4b both have an end vertex (44A048) and hence joint, but the vertex does not
emerge in the $%
\mathbb{R}
^{2}$ plane, so that the two edges appear disjoint. Similarly, if we only use
the chromatic distances listed in Table 1 to determine E-E relations, then
they may also lead to mistakes in $%
\mathbb{R}
^{2}$ plane due to the same reason. For the same example, in Fig.3b,
$\delta((36A038),(25A058))=2$, $\gamma((36A038),(25A058))=3$, indicating they
are joint with a 3-I vertex, i.e., the result (44A048) of $E2V$, but in the
given plane they are not joint. Therefore, the below proposition is true for
$OACD$ at $%
\mathbb{R}
^{n-1}$ space, which contains all possible edges, cells, and vertexes.

\begin{proposition}
\label{Proposition_EE_joint}Given two edges $\eta_{1}$ and $\eta_{2}$ in
$OACD(n,%
\mathbb{R}
^{n-1})$, let $(\delta,\gamma)=\\(\delta(\eta_{1},\eta_{2}),\gamma(\eta_{1}%
,\eta_{2}))$, then%
\begin{equation}
\cap(\eta_{1},\eta_{2})=\varphi\Leftrightarrow\left\{
\begin{array}
[c]{l}%
\varphi^{2I}\Leftrightarrow(\delta,\gamma)=(2,2)\vee(2,4)\\
\varphi^{3I}\Leftrightarrow(\delta,\gamma)=(2,3)\vee(3,2)\vee(4,3)
\end{array}
\right.
\end{equation}

\end{proposition}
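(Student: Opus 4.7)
The plan is to prove the biconditional in both directions by exactly mirroring the case-analysis style used in Propositions \ref{Proposition_VE_joint} and \ref{Proposition_VE_segmented}. For the forward direction, whenever two edges share a vertex, that vertex together with both edges lies inside a single 2-I unit or 3-I unit by construction; hence the pair $(\delta,\gamma)$ is given directly by the E-E rows of Table \ref{Table1}. Reading off those rows yields $(2,4)$ for 2-I adjacent and $(2,2)$ for 2-I opposite, and $(2,3)$, $(3,2)$, $(4,3)$ for 3-I adjacent, interval, and opposite, which are exactly the cases listed in the proposition.

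The converse is the substantive part. I would fix edges $\eta_1$ generated by $pb\langle i_1,j_1\rangle$ and $\eta_2$ generated by $pb\langle i_2,j_2\rangle$, writing, via Property \ref{P_Edge_Base},
\begin{equation*}
\eta_1=(x_{i_1}^{z_1+\tfrac12},x_{j_1}^{z_1+\tfrac12})\cup(X^A_{others}),\quad \eta_2=(x_{i_2}^{z_2+\tfrac12},x_{j_2}^{z_2+\tfrac12})\cup(X^B_{others}),
\end{equation*}
where $X^A_{others}$ and $X^B_{others}$ exclude $\{z_1,z_1{+}1\}$ and $\{z_2,z_2{+}1\}$ respectively. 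I would then split into three subcases according to $|\{i_1,j_1\}\cap\{i_2,j_2\}|\in\{0,1,2\}$. In the disjoint subcase, the four distinguished positions each contribute at least $\tfrac12$ to $\delta$, so $\delta\ge 2$; attaining $\delta=2$ forces each of those four differences to equal exactly $\tfrac12$, and solving the resulting sign choices (while requiring the remaining integer positions to agree) recovers precisely the 2-I configuration of Property \ref{P_2I_Codes}, with $\gamma=4$ when the $\tfrac12$-slots of $\eta_1$ lie at different indices from those of $\eta_2$ (adjacent), and $\gamma=2$ when the underlying bisectors and hence indices coincide in the expected way (opposite, i.e.\ the collinear sub-subcase covered also by Proposition \ref{Proposition_EE_collinear}).

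In the one-shared-index subcase I would write $i_1=i_2=i$ and analyze the contributions from $x_i$, $x_{j_1}$, $x_{j_2}$ together with $X_{others}$. As in Case 2.2 of Proposition \ref{Proposition_VE_joint}, the excluded-value sets $\{z_k,z_k{+}1\}$ rule out spurious sign combinations and pin the admissible $(z_1,z_2)$ relationships. Matching the surviving solutions against Property \ref{P_3I_Codes} identifies each with an edge pair around a 3-I vertex, and the three admissible $(\delta,\gamma)$ pairs $(2,3)$, $(3,2)$, $(4,3)$ correspond respectively to the adjacent, interval, and opposite relations inside the 3-I unit. The collinear subcase is immediate from Proposition \ref{Proposition_EE_collinear} combined with Theorem \ref{Theorem_unique_edge}.

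The main obstacle will be the bookkeeping in the one-shared-index subcase: for each target $(\delta,\gamma)$ value, one must enumerate the sign choices coming from $|a+\tfrac12-b|=\tfrac12$ or $|a+\tfrac12-b|=1$ type equations and then eliminate those that collide with the removed components of $X_{others}$, as was done for solutions such as (\ref{Eq_VE_segmented_4solutions}). The hypothesis that we are in $OACD(n,\mathbb{R}^{n-1})$ rather than $\mathbb{R}^2$ is essential at exactly this step: it guarantees that every combinatorially admissible vertex produced by the $E2V$ procedure is geometrically realized, so no solution to the code constraints can be discarded as ``hidden'' in higher dimension, which is precisely the failure illustrated by the $(36A038)$ and $(25A058)$ example preceding the statement.
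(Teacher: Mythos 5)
Your forward direction and your treatment of the zero-shared-index and one-shared-index subcases follow the paper's own route (its Cases (1) and (3)), but your handling of the collinear subcase is where the proposal breaks down. You dispose of it with ``immediate from Proposition \ref{Proposition_EE_collinear} combined with Theorem \ref{Theorem_unique_edge}'', yet those two results only tell you that the edges lie on the same bisector and that distinct edges have distinct codes; they say nothing about which $(\delta,\gamma)$ values among collinear edges correspond to actually sharing a vertex. In the paper this is the longest and most delicate part of the argument (its Case (2)): writing $D=a_1-b_1$ and, for $D=0$, classifying by the number $m$ of permuted components of $X_{others}$, one finds that $\delta=2$ (with $\gamma=2$) is joint at a 2-I vertex, while several $\delta=4$ collinear configurations are \emph{not} joint --- including an $m=3$ configuration with $(\delta,\gamma)=(4,3)$ and $\eta_1\cong\eta_2$, which is exactly why the $(4,3)$ signature must be separated by the base relation; and for $D=1$ one gets the genuinely joint $(4,3)$ case at a 3-I vertex. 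None of this is recoverable from collinearity plus uniqueness of edge codes, so your converse has a real hole precisely where the hardest verification sits.

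Related to this, your bookkeeping assigns the cases to the wrong subcases. The ``opposite'' edges of a 3-I unit are the two edges on the \emph{same} bisector (Property \ref{P_3I_Codes}), so the $(4,3)$ joint case belongs to the two-shared-index (collinear) subcase, not to the one-shared-index subcase as you claim; in the one-shared-index subcase the paper shows $\delta=4$ forces $\gamma=5$ and no common vertex. Likewise you park the $(2,2)$ 2-I ``opposite'' case inside the disjoint-index subcase as a ``collinear sub-subcase'', which is self-contradictory: if the index sets are disjoint the bisectors cannot coincide. The disjoint-index analysis itself (four half-integer positions each contributing at least $\tfrac12$, solving the sign choices, matching Property \ref{P_2I_Codes}, $\gamma=4$) is sound and matches the paper, and your remark about why $\mathbb{R}^{n-1}$ is needed is correct, but as written the proposal both misplaces two of the five target cases and leaves the collinear case unproved.
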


\begin{proof}
Suppose $\eta_{1}$ is an edge generated from $pb\left\langle i,j\right\rangle
$ and hence with a code $(x_{i}^{a_{ij}+\frac{1}{2}}$, $x_{j}^{a_{ij}+\frac
{1}{2}})$, and $\eta_{2}$ is an edge generated from $pb\left\langle
u,v\right\rangle $ and hence with a code $(x_{u}^{b_{uv}+\frac{1}{2}}$,
$x_{v}^{b_{uv}+\frac{1}{2}})$.

Case (1) If $i\neq j\neq u\neq v$, then $\eta_{1}$ and $\eta_{2}$ can be
rewritten to $(x_{i}^{a_{ij}+\frac{1}{2}}$, $x_{j}^{a_{ij}+\frac{1}{2}}$,
$x_{u}^{a_{u}}$, $x_{v}^{a_{v}})\cup(X_{others}^{A})$ and $(x_{i}^{b_{i}}$,
$x_{j}^{b_{j}}$, $x_{u}^{b_{uv}+\frac{1}{2}}$, $x_{v}^{b_{uv}+\frac{1}{2}%
})\cup(X_{others}^{B})$, and if they are joint, then the joint vertex should
be 2-I. The minimum $\delta(\eta_{1},\eta_{2})=2$ if $\delta(x_{i})=$
$\delta(x_{j})=$ $\delta(x_{u})=\delta(x_{v})=\frac{1}{2}$, and $\delta
(X_{others})=0$. Therefore the solutions are $%
\genfrac{\{}{.}{0pt}{}{a_{ij}=b_{i}}{a_{ij}=b_{i}-1}%
$, $%
\genfrac{\{}{.}{0pt}{}{a_{ij}=b_{j}}{a_{ij}=b_{j}-1}%
$, $%
\genfrac{\{}{.}{0pt}{}{a_{u}=b_{uv}}{a_{u}=b_{uv}+1}%
$, and $%
\genfrac{\{}{.}{0pt}{}{a_{v}=b_{uv}}{a_{v}=b_{uv}+1}%
$, where only the below four solutions are allowed.%
\begin{equation}
\left\{
\begin{array}
[c]{c}%
a_{ij}=b_{i},a_{ij}=b_{j}-1,a_{u}=b_{uv},a_{v}=b_{uv}+1\\
a_{ij}=b_{i},a_{ij}=b_{j}-1,a_{u}=b_{uv}+1,a_{v}=b_{uv}\\
a_{ij}=b_{j},a_{ij}=b_{i}-1,a_{u}=b_{uv},a_{v}=b_{uv}+1\\
a_{ij}=b_{j},a_{ij}=b_{i}-1,a_{u}=b_{uv}+1,a_{v}=b_{uv}%
\end{array}
\right.
\end{equation}
For each solution, we substitute them into codes of $\eta_{1}$ and $\eta_{2}$,
and these substitutions, for example of the first solution, will make their
codes being $(x_{i}^{a_{ij}+\frac{1}{2}}$, $x_{j}^{a_{ij}+\frac{1}{2}}$,
$x_{u}^{a_{u}}$, $x_{v}^{a_{u}+1})$ and $(x_{i}^{a_{ij}}$, $x_{j}^{a_{ij}+1}$,
$x_{u}^{a_{u}+\frac{1}{2}}$, $x_{v}^{a_{u}+\frac{1}{2}})$. Easy to see that
the two edges are joint with a 2-I vertex $(x_{i}^{a_{ij}+\frac{1}{2}}$,
$x_{j}^{a_{ij}+\frac{1}{2}}$, $x_{u}^{a_{u}+\frac{1}{2}}$, $x_{v}^{a_{u}%
+\frac{1}{2}})$ and $\gamma(\eta_{1},\eta_{2})=4$, $\eta_{1}\ncong\eta_{2}$.

Since components in $X_{others}$ are always integers, $X_{others}\neq1$, then
to make $\delta=3$, it must be $\delta(x_{i},x_{j},x_{u},x_{v})=3$. Suppose
$\delta(x_{i})=\frac{i}{2}$, $\delta(x_{j})=\frac{j}{2}$, $\delta(x_{u}%
)=\frac{u}{2}$, $\delta(x_{v})=\frac{v}{2}$, where $i$, $j$, $u$, and $v$ are
odd numbers. Then $\frac{i}{2}+\frac{j}{2}+\frac{u}{2}+\frac{v}{2}%
=3\Rightarrow i+j+u+v=6$, then the only solution is $i=j=u=1$, $v=3$. But this
solution is impossible because it leads to $X_{others}\neq0\Rightarrow
\delta>3$. To make $\delta=4$, it must be two cases (1.1) $\delta(x_{i}%
,x_{j},x_{u},x_{v})=4$, $\delta(X_{others})=0$ or (1.2) $\delta(x_{i}%
,x_{j},x_{u},x_{v})=2$, $\delta(X_{others})=2$. Similarly, only the case (1.2)
is possible, and $\gamma=6$. And easy to check that edges in case (1.2) are
not joint to a vertex.

Case (2) If $i=u$ and $j=v$, then according to Proposition
\ref{Proposition_EE_collinear}, the two edges should be generated from the
same bisector $pb\left\langle i,j\right\rangle =pb\left\langle
u,v\right\rangle $. Therefore their codes can be rewritten to $(x_{i}%
^{a_{1}+\frac{1}{2}},x_{j}^{a_{1}+\frac{1}{2}})\cup(X_{others}^{A})$ and
$(x_{i}^{b_{1}+\frac{1}{2}},x_{j}^{b_{1}+\frac{1}{2}})\cup(X_{others}^{B})$.

Let us discuss all possible cases of $\delta(\eta_{1},\eta_{2})$ for
$D=a_{1}-b_{1}.$

Case (2.1) If $D=0$, then $\delta(\eta_{1},\eta_{2})=|X_{others}%
^{A}-X_{others}^{B}|$, and also $X_{others}^{B}$ is a permutation of
$X_{others}^{A}=%
\mathbb{N}
\backslash\{a_{1},a_{1}+1\}$. Therefore the possible values of $\delta$ will
be given by some components in $X_{others}$ changing their locations. Suppose
the number of such components is $m$.

Case (2.1.1) If $m=0$, then $\delta=0$, so this case can be excluded because
it makes the two edges are equal.

Case (2.1.2) If $m=2$, assuming the two components are $x_{g}^{a_{g}}$ and
$x_{h}^{a_{h}}$, then $\delta=2|a_{g}-a_{h}|$. Because $a_{g}\neq a_{h}$,
therefore $\delta$'s possible values will be even numbers.

If $\delta=2$, then it must be $|a_{g}-a_{h}|=1$. The two edges can be
rewritten to $(x_{i}^{a_{1}+\frac{1}{2}}$, $x_{j}^{a_{1}+\frac{1}{2}}$,
$x_{g}^{a_{g}}$, $x_{h}^{a_{g}+1})$ and $(x_{i}^{a_{1}+\frac{1}{2}}$,
$x_{j}^{a_{1}+\frac{1}{2}}$, $x_{g}^{a_{g}+1}$, $x_{h}^{a_{g}})$, indicating
they are joint to a 2-I vertex generated by $pb\left\langle i,j\right\rangle $
and $pb\left\langle g,h\right\rangle $. And in this case, $\gamma=2$,
$\eta_{1}\cong\eta_{2}$.

If $\delta=4$, then it must be $|a_{g}-a_{h}|=2$, and also $\gamma=2$, but in
this case, the two edges are not joint.

Case (2.1.3) If $m=3$, indicating that three components involve changing their
locations, we can express the three components as $X_{three}=(a_{2}%
,a_{2}+\Delta_{1},a_{2}+\Delta_{1}+\Delta_{2})$ with $\Delta_{1}\geq1$ and
$\Delta_{2}\geq1$. Easy to know that $X_{three}$ has 6 permutations, but in
which only two permutations involve changing all three components, that is,
$X_{three}^{1}=(a_{2}+\Delta_{1},a_{2}+\Delta_{1}+\Delta_{2},a_{2})$ and
$X_{three}^{2}=(a_{2}+\Delta_{1}+\Delta_{2},a_{2},a_{2}+\Delta_{1})$. Then
$\delta(X_{three},X_{three}^{1})=\delta(X_{three},X_{three}^{2})=2(\Delta
_{1}+\Delta_{2})$ $\geq4$, and only when $\Delta_{1}=\Delta_{2}=1$, $\delta
=4$, $\gamma=3$, $\eta_{1}\cong\eta_{2}$, but in this case, the two edges are
not joint.

Case (2.1.4) If $m=4$, $\delta=4$, then there are four components in
$X_{others}$ contributing 1 respectively to $\delta$, but in this case, the
two edges are not joint.

Case (2.1.5) If $m>4$, then $\delta\geq4$.

Case (2.2) If $D=1$, then $\delta(x_{i})=\delta(x_{j})=1$ $\Rightarrow
\delta\geq2$. Also $X_{others}^{A}=%
\mathbb{N}
\backslash\{a_{1},a_{1}+1\}$ and $X_{others}^{B}=%
\mathbb{N}
\backslash\{a_{1}-1,a_{1}\}$. For similar cases such as (2.1.1)-(2.1.4), we
know that $X_{others}^{A}$ includes $a_{1}-1$ but excludes $a_{1}+1$, and
$X_{others}^{B}$ includes $a_{1}+1$ but excludes $a_{1}-1$. Therefore we have
$|X_{others}^{A}-X_{others}^{B}|\geq2$. The $X_{others}^{A}$ and
$X_{others}^{B}$ can be further rewritten to $X_{others}^{A}=(x_{g}^{a_{1}%
-1},x_{h}^{a_{A}})\cup(X_{others^{\prime}}^{A^{\prime}})$ and $X_{others}%
^{B}=(x_{g}^{a_{B}},x_{h}^{a_{1}+1})\cup(X_{others^{\prime}}^{B^{\prime}})$.

Case (2.2.1) $g\neq h$. To reach $|X_{others}^{A}-X_{others}^{B}|=2$, it must
be $\delta(x_{g})=1$, $\delta(x_{h})=1$, and $\delta(x_{others^{\prime}})=0$.
Thus the solutions are $%
\genfrac{\{}{.}{0pt}{}{a_{B}=a_{1}}{a_{B}=a_{1}-2}%
$ and $%
\genfrac{\{}{.}{0pt}{}{a_{A}=a_{1}}{a_{A}=a_{1}+2}%
$. Because $a_{1}$ has been already excluded from both $X_{others}^{A}$ and
$X_{others}^{B}$, the only solution is $a_{B}=a_{1}-2$ and $a_{A}=a_{1}+2$.
However, this solution will make $X_{others^{\prime}}^{A^{\prime}}$ include
$a_{B}$ but exclude $a_{A}$, while $X_{others^{\prime}}^{B^{\prime}}$ include
$a_{A}$ but exclude $a_{B}$, implying $\delta(x_{others^{\prime}})>0$.
Therefore, $|X_{others}^{A}-X_{others}^{B}|$ is impossible to be 2, and hence
$\delta>4$.

Case (2.2.2) $g=h$. Then $|X_{others}^{A}-X_{others}^{B}|=2$ if
$|X_{others^{\prime}}^{A^{\prime}}-X_{others^{\prime}}^{B^{\prime}}|=0$. In
this case, $\delta=4$ and $\gamma=3$. The two edges turn to $(x_{i}%
^{a_{1}+\frac{1}{2}}$, $x_{j}^{a_{1}+\frac{1}{2}}$, $x_{g}^{a_{1}-1})$ and
$(x_{i}^{a_{1}-\frac{1}{2}}$, $x_{j}^{a_{1}-\frac{1}{2}}$, $x_{g}^{a_{1}+1})$.
We can find that $x_{i}+x_{j}+x_{g}=3a_{1}$, then using $E2V(\eta,3I)$ we know
that they are both joint to a 3-I vertex $(x_{i}^{\frac{a_{1}}{3}}%
,x_{j}^{\frac{a_{1}}{3}},x_{g}^{\frac{a_{1}}{3}})$. Also $D=1 $ always implies
$\eta_{1}\ncong\eta_{2}$.

Case (2.3) If $D\geq2$, then $\delta(x_{i})\geq2$, $\delta(x_{j})\geq2$ and
hence $\delta\geq4$. Using the similar analysis in case (2.2), easy to know
that $\delta$ is impossible to be $4$ and hence $>4$.

Case (3) If $i=u$ but $j\neq v$, the two edges can be rewritten to
$(x_{i}^{a_{1}+\frac{1}{2}},x_{j}^{a_{1}+\frac{1}{2}},x_{v}^{a_{2}}%
)\cup(X_{others}^{A}=%
\mathbb{N}
\backslash a_{1},a_{1}+1)$ and $(x_{i}^{b_{1}+\frac{1}{2}},x_{j}^{b_{2}}%
,x_{v}^{b_{1}+\frac{1}{2}})\cup(X_{others}^{B}=%
\mathbb{N}
\backslash b_{1},b_{1}+1)$. Because $\delta(x_{j})\geq\frac{1}{2},\delta
(x_{v})\geq\frac{1}{2}$, then $\delta\geq1$.

Case (3.1) To reach $\delta=1$, it must be $\delta(x_{i})=\delta
(x_{others})=0$, $\delta(x_{j})=\delta(x_{v})=\frac{1}{2}$. Therefore we have
solutions that $a_{1}=b_{1}$, $%
\genfrac{\{}{.}{0pt}{}{a_{1}=b_{2}}{a_{1}=b_{2}-1}%
$, and $%
\genfrac{\{}{.}{0pt}{}{a_{2}=b_{1}}{a_{2}=b_{1}+1}%
$. Easy to check these solutions will lead to such as $b_{2}=b_{1}$,
$b_{2}=b_{1}+1$, $a_{2}=a_{1}$, $a_{2}=a_{1}+1$ -- all are impossible since
they have been excluded from their codes. As a result, the next minimum
$\delta=2$.

Case (3.2) To reach $\delta=2$, there two possible cases of chromatic distance
at each component.

Case (3.2.1) $\delta(x_{i})=1$, $\delta(x_{j})=\delta(x_{v})=\frac{1}{2}$, and
$\delta(X_{others})=0$.

This case leads to solutions that $%
\genfrac{\{}{.}{0pt}{}{a_{1}=b_{1}+1}{a_{1}=b_{1}-1}
$, $%
\genfrac{\{}{.}{0pt}{}{a_{1}=b_{2}}{a_{1}=b_{2}-1}%
$, and $%
\genfrac{\{}{.}{0pt}{}{a_{2}=b_{1}}{a_{2}=b_{1}+1}%
$, which correspond to 8 combinations but only 4 of them will lead to the
allowed edges $(x_{i}^{b_{1}-\frac{1}{2}},x_{j}^{b_{1}-\frac{1}{2}}%
,x_{v}^{b_{1}+1})$ and $(x_{i}^{b_{1}+\frac{1}{2}},x_{j}^{b_{1}-1}%
,x_{v}^{b_{1}+\frac{1}{2}})$. Easy to check that the two edges are joint to a
3-I vertex $(x_{i}^{\frac{b_{1}}{3}},x_{j}^{\frac{b_{1}}{3}},x_{v}%
^{\frac{b_{1}}{3}})$, and in this case $\gamma=3$, $\eta_{1}\ncong\eta_{2}$.

Case (3.2.2) $\delta(x_{i})=0,\delta(x_{j})=\frac{3}{2},\delta(x_{v})=\frac
{1}{2},$and $\delta(x_{others})=0$. These equations give solutions
$a_{1}=b_{1}$, $%
\genfrac{\{}{.}{0pt}{}{a_{1}=b_{2}+1}{a_{1}=b_{2}-2}%
$, and $%
\genfrac{\{}{.}{0pt}{}{a_{2}=b_{1}}{a_{2}=b_{1}+1}%
$. Easy to check these solutions are not allowed.

Case (3.3) To reach $\delta=3$, there 4 possible combinations of chromatic
distance at each component $(\delta(x_{i})$, $\delta(x_{j})$, $\delta(x_{v})$,
$\delta(x_{other}))$: $(1,\frac{1}{2},\frac{1}{2},1)$, $(2,\frac{1}{2}%
,\frac{1}{2},0)$, $(0,\frac{1}{2},\frac{5}{2},0)$, and $(0,\frac{3}{2}%
,\frac{3}{2},0)$. Easy to check that the first three combinations are
impossible and only the fourth are allowed to give edges such as
$(x_{i}^{a_{1}+\frac{1}{2}},x_{j}^{a_{1}+\frac{1}{2}},x_{v}^{a_{1}-1})$ and
$(x_{i}^{a_{1}+\frac{1}{2}},x_{j}^{a_{1}-1},x_{v}^{a_{1}+\frac{1}{2}})$, which
also gives $\gamma=2$ and $\eta_{1}\cong\eta_{2}$. Using $E2V(\eta,3I)$ we
know the two edges are both joint to a 3-I vertex $(x_{i}^{\frac{a_{1}}{3}%
},x_{j}^{\frac{a_{1}}{3}},x_{v}^{\frac{a_{1}}{3}})$.

Case (3.4) To reach $\delta=4$, there 8 possible combinations of chromatic
distance at each component $(\delta(x_{i})$, $\delta(x_{j})$, $\delta(x_{v})$,
$\delta(x_{other}))$: $(1,\frac{1}{2},\frac{1}{2},2)$, $(2,\frac{1}{2}%
,\frac{1}{2},1)$, $(3,\frac{1}{2},\frac{1}{2},0)$, $(1,\frac{3}{2},\frac{3}%
{2},0)$, $(0,\frac{3}{2},\frac{3}{2},1)$, $(1,\frac{1}{2},\frac{5}{2},0)$,
$(0,\frac{1}{2},\frac{5}{2},1)$, and $(0,\frac{1}{2},\frac{7}{2},0)$. Checking
these combinations we know that only the first combination is allowed, which
gives the edge codes $(x_{i}^{a_{1}+\frac{1}{2}},x_{j}^{a_{1}+\frac{1}{2}%
},x_{v}^{a_{1}-1})\cup(x_{g}^{a_{2}},x_{h}^{a_{2}+1})$ and $(x_{i}%
^{a_{1}-\frac{1}{2}},x_{j}^{a_{1}+1},x_{v}^{a_{1}-\frac{1}{2}})\cup
(x_{g}^{a_{2}+1},x_{h}^{a_{2}})$, but they are not joint with the same 3-I
vertex and also $\gamma=5$.

Summarizing the above cases (1), (2.1.2), (2.2.2), (3.2.1), and (3.3) we know
that the conditions in Table \ref{Table1} are also sufficient for reasoning
E-E joint relations.
\end{proof}

Note that if we do not care the types of the joint vertex, then we can
integrate conditions in Table 1 to%
\begin{equation}
\cap(\eta_{1},\eta_{2})=\varphi\Leftrightarrow\left\{
\begin{array}
[c]{l}%
\delta\leq3\\
(\delta,\gamma)=(4,3),\eta_{1}\ncong\eta_{2}%
\end{array}
\right.
\end{equation}

\subsubsection{Edge-Cell (E-C) relations}

There are three types of relations between an edge and a cell: (1)
\emph{contain}: the edge is one boundary of the cell, i.e., $\cap(\zeta
,\eta)=\eta$ (Fig.6j), (2) \emph{joint}: the cell only share a vertex with the
edge, i.e., $\cap(\zeta,\eta)=\varphi$ (Fig.6k), and (3) \emph{disjoint}: they
do not share any particles, i.e., $\cap(\zeta,\eta)=\varnothing$ (Fig.6l).

\begin{proposition}
\label{Proposition_EC_contain}Given a cell $\zeta$ and a edge $\eta$, then%
\begin{equation}
\cap(\zeta,\eta)=\eta\Leftrightarrow\delta(\zeta,\eta)=1
\end{equation}

\end{proposition}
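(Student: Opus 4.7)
The plan is to prove the two directions separately, with the forward direction a direct computation and the converse reduced to the uniqueness of cell codes (Theorem \ref{Theorem_unique_cell}) together with the explicit description of the two cells bounding any edge.

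For the forward direction, suppose $\eta$ is a boundary of $\zeta$. Write $\eta(x_{i}^{z+\frac{1}{2}},x_{j}^{z+\frac{1}{2}})\cup(X_{others})$ in the form given by Property \ref{P_Edge_Base}, where $X_{others}$ is a permutation of $\mathbb{N}\setminus\{z,z+1\}$ indexed by the remaining points. By the proof of Property \ref{P_Edge_Base}, the two cells adjacent to $\eta$ have codes $(x_{i}^{z},x_{j}^{z+1})\cup(X_{others})$ and $(x_{i}^{z+1},x_{j}^{z})\cup(X_{others})$. In either case the chromatic distance to $\eta$ is $|\tfrac{1}{2}| + |\tfrac{1}{2}| + 0 = 1$, which also agrees with the ``Edge-Cell Adjacent'' row of Table \ref{Table1}.

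For the converse, assume $\delta(\zeta,\eta)=1$ with $\eta$ as above and $\zeta(y_{1},\ldots,y_{n})$ a cell, whose chromatic base is $\mathbb{N}=\{0,1,\ldots,n-1\}$ by Property with all $y_k$ distinct integers. At component $i$ and at component $j$, the difference $|y_{i}-(z+\tfrac{1}{2})|$ and $|y_{j}-(z+\tfrac{1}{2})|$ are each a positive half-integer, hence each at least $\tfrac{1}{2}$. At every remaining component $k$, the difference $|y_{k}-t_{k}|$ is a nonnegative integer, hence $0$ or at least $1$. Summing to $1$ forces each of the two boundary-bisector contributions to be exactly $\tfrac{1}{2}$ and all other contributions to be $0$. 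Since $y_{i}$ and $y_{j}$ are distinct, this yields $\{y_{i},y_{j}\}=\{z,z+1\}$, and the remaining $n-2$ components of $\zeta$ agree pointwise with $X_{others}$.

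Thus $\zeta$'s code is either $(x_{i}^{z},x_{j}^{z+1})\cup(X_{others})$ or $(x_{i}^{z+1},x_{j}^{z})\cup(X_{others})$, which are exactly the codes of the two cells sitting on the two sides of $\eta$. By Theorem \ref{Theorem_unique_cell} cell codes are unique in a full-OACD, so $\zeta$ coincides with one of these two adjacent cells, and $\eta$ is one of its boundary edges, giving $\cap(\zeta,\eta)=\eta$. The main (and essentially only) obstacle is verifying that no other combination of nonnegative differences can sum to $1$; once the half-integer parity at $i$ and $j$ is observed, the rest is forced, and everything else reduces to quoting prior results on cell uniqueness.
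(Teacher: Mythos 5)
Your proof is correct and follows essentially the same route as the paper: both directions reduce to the observation that a half-integer contribution of at least $\tfrac{1}{2}$ must occur at the two bisector components, so $\delta=1$ forces $\{y_i,y_j\}=\{z,z+1\}$ with all other components matching, leaving exactly the codes of the two cells flanking $\eta$. Your explicit appeal to Theorem \ref{Theorem_unique_cell} to identify $\zeta$ with one of those two cells is a slightly more careful closing step than the paper's ``it is easy to know,'' but the argument is the same.
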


\begin{proof}
A cell $\zeta$ could be taken as the space closed by edges that are from
either 2-I or 3-I units, and this gives $\zeta\cap\eta=\eta\Rightarrow$
$\delta(\zeta,\eta)=1$.

Suppose $\zeta$ bears a code $(x_{i}^{a_{1}},x_{j}^{a_{2}})\cup(X_{others}%
^{A})$ and $\eta$ bears a code $(x_{i}^{b_{1}+\frac{1}{2}},x_{j}^{b_{1}%
+\frac{1}{2}})\cup(X_{others}^{B})$. To make $\delta(\zeta,\eta)=1$, it must
be that $|x_{i}^{a_{1}}-x_{i}^{b_{1}+\frac{1}{2}}|=\frac{1}{2}$,
$|x_{j}^{a_{2}}-x_{j}^{b_{1}+\frac{1}{2}}|=\frac{1}{2}$, and $|X_{others}%
^{A}-X_{others}^{B}|=0$, so the solutions are $%
\genfrac{\{}{.}{0pt}{}{a_{1}=b_{1}}{a_{1}=b_{1}-1}%
$ and $%
\genfrac{\{}{.}{0pt}{}{a_{2}=b_{1}}{a_{2}=b_{1}-1}%
$. Because $a_{1}\neq a_{2}$, then we have only two solutions $%
\genfrac{\{}{.}{0pt}{}{a_{1}=b_{1},a_{2}=b_{1}-1}{a_{2}=b_{1},a_{1}=b_{1}-1}%
$. It is easy to know that the two solutions are just the two cells who share
the edge, that is, $\zeta\cap\eta=\eta$.
\end{proof}

Similar to the function $E2V$, a function $C2E$ is used for calculating all
edges that bound a cell.

\begin{notation}
The procedure of $C2E(\zeta)$:

Let $\zeta$ is an cell with base $%
\mathbb{N}
\lbrack0,n-1]$. (1) Find the minimum component $z$ in its code, assume it is
$x_{i}^{z}$; (2) Find $z+1$, assume it is $x_{j}^{z+1}$, then change
$x_{i}^{z}$ and $x_{j}^{z+1}$ to $x_{i}^{z+\frac{1}{2}}$ and $x_{j}%
^{z+\frac{1}{2}}$, respectively. (3) Let $z=z+1$ and repeat (1) and (2) until
$z=n-2$.
\end{notation}

Therefore in theory each cell should be bounded by $n-1$ edges, i.e., they are
$n$-hedras, but in fact many cells are triangles, quadrilaterals, or polygons
with edges much less then $n-1$, indicating that a large number of edges do
not emerge in plane.

\begin{proposition}
\label{Proposition_EC_joint}Given a cell $\zeta$ and a edge $\eta$ in $OACD(n,%
\mathbb{R}
^{n-1})$, then%
\begin{equation}
\cap(\zeta,\eta)=\varphi\Leftrightarrow3\leq\delta(\zeta,\eta)\leq4
\end{equation}

\end{proposition}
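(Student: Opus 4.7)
My plan is to prove the two implications separately, leveraging Propositions~\ref{Proposition_VC_joint}, \ref{Proposition_VE_joint} and~\ref{Proposition_EC_contain} together with an analysis of the 2-I and 3-I structural units that underlie Table~\ref{Table1}.

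For the forward direction, suppose $\cap(\zeta,\eta)=\varphi$. Since $\zeta$ contains $\varphi$, Proposition~\ref{Proposition_VC_joint} gives $\delta(\zeta,\varphi)=2$, and since $\varphi$ is an end of $\eta$, Proposition~\ref{Proposition_VE_joint} gives $\delta(\eta,\varphi)\leq 2$. The chromatic distance is Manhattan and therefore obeys the triangle inequality, so $\delta(\zeta,\eta)\leq \delta(\zeta,\varphi)+\delta(\varphi,\eta)\leq 4$. For the lower bound, $\eta$ is not a boundary of $\zeta$, so Proposition~\ref{Proposition_EC_contain} rules out $\delta=1$. To rule out $\delta=2$, I would write $\zeta=(x_i^{a_1},x_j^{a_2})\cup(X_{others}^A)$ and $\eta=(x_i^{b_1+\frac12},x_j^{b_1+\frac12})\cup(X_{others}^B)$ and observe that the contributions at $x_i,x_j$ are each at least $\tfrac12$ while all other contributions are nonnegative integers. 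A $\delta=2$ decomposition must either (i) have $\{a_1,a_2\}=\{b_1,b_1+1\}$ with one integer position contributing $1$, or (ii) have one of $\delta(x_i),\delta(x_j)$ equal to $\tfrac32$ and all other integer contributions zero. In case (i), $X_{others}^A$ and $X_{others}^B$ are permutations of the same integer multiset, so $\delta(X_{others})$ is necessarily even and cannot equal $1$; in case (ii), the multisets of other components differ by the symmetric pair $\{b_1-1,b_1+1\}$ (or its mirror), forcing $\delta(X_{others})\geq 4$. Hence $\delta(\zeta,\eta)\geq 3$.

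For the reverse direction, suppose $3\leq \delta(\zeta,\eta)\leq 4$. I would split on $\delta=3$ and $\delta=4$ and enumerate the tuples $(\delta(x_i),\delta(x_j),\delta(X_{others}))$ compatible with the component constraints. For each admissible tuple I back-solve for $a_1,a_2,b_1$ and for which other positions are displaced, using Property~\ref{P_Edge_Base}, the distinctness of cell components, and the exclusions $X_{removed}$. The allowed solutions should be exactly, up to relabeling, the two opposite configurations of Property~\ref{P_2I_Codes} (giving $\delta=3$ with a 2-I shared vertex) and the opposite and interval configurations of Property~\ref{P_3I_Codes} (giving $\delta=4$ and $\delta=3$ respectively with a 3-I shared vertex). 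In each case the shared vertex can be exhibited explicitly by applying $E2V(\eta)$ to the surviving candidate.

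I expect the main obstacle to be the case enumeration in the reverse direction, which is the same style of bookkeeping that already made Propositions~\ref{Proposition_VE_segmented} and~\ref{Proposition_EE_joint} long: tracking which displacement patterns are admissible, ruling out those that demand duplicated cellular components or components already removed from an edge, and matching each surviving pattern to a configuration in a 2-I or 3-I unit. A secondary subtlety worth flagging is that the statement is asserted in $OACD(n,\mathbb{R}^{n-1})$ rather than in $\mathbb{R}^2$, because the ambient space must be large enough to realize every algebraically admissible joint configuration geometrically, exactly as noted in the discussion preceding Proposition~\ref{Proposition_EE_joint}.
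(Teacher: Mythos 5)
Your proposal is correct in substance, and its harder half is exactly the paper's proof: for the direction $3\leq\delta(\zeta,\eta)\leq4\Rightarrow\cap(\zeta,\eta)=\varphi$, the paper performs precisely the enumeration you outline, writing $\zeta=(x_{i}^{a_{1}},x_{j}^{a_{2}})\cup(X_{others}^{A})$, $\eta=(x_{i}^{b_{1}+\frac{1}{2}},x_{j}^{b_{1}+\frac{1}{2}})\cup(X_{others}^{B})$, listing for $\delta=3$ the combinations $(\tfrac{1}{2},\tfrac{1}{2},2)$, $(\tfrac{1}{2},\tfrac{3}{2},1)$, $(\tfrac{1}{2},\tfrac{5}{2},0)$, $(\tfrac{3}{2},\tfrac{3}{2},0)$ and keeping only the first two (opposite in a 2-I unit with $\gamma=4$, interval in a 3-I unit with $\gamma=3$), and for $\delta=4$ keeping only $(\tfrac{1}{2},\tfrac{3}{2},2)$ (opposite in a 3-I unit); you leave this bookkeeping as a plan, but the surviving configurations you predict are exactly the paper's. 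Where you genuinely differ is the easy direction: the paper simply invokes Table \ref{Table1} (i.e.\ Properties \ref{P_2I_Codes} and \ref{P_3I_Codes}) as the necessary condition, whereas you derive the upper bound $\delta\leq4$ from the triangle inequality for the Manhattan metric via Propositions \ref{Proposition_VC_joint} and \ref{Proposition_VE_joint}, and the lower bound by excluding $\delta=1$ through Proposition \ref{Proposition_EC_contain} and $\delta=2$ through a parity/multiset argument; this is more self-contained than the paper's appeal to the unit tables and makes explicit the (nowhere stated) fact that a cell and an edge can never have chromatic distance $2$. One correction to that argument: in your case (ii) the claim $\delta(X_{others})\geq4$ is false --- for $n=4$ the cell $(1,3,0,2)$ and the edge $(\tfrac{3}{2},\tfrac{3}{2},0,3)$ realize that displacement pattern with $\delta(X_{others})=1$ and total $\delta=3$ --- but all your contradiction needs is $\delta(X_{others})\neq0$, which does hold because $\{a_{1},a_{2}\}\neq\{b_{1},b_{1}+1\}$ forces the two ``other'' multisets to differ, so the exclusion of $\delta=2$ survives. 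Your closing remark about stating the result in $OACD(n,\mathbb{R}^{n-1})$ to avoid hidden joint vertexes matches the paper's own caveat.
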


\begin{proof}
Table \ref{Table1} shows that in 2-I and 3-I units, if a cell and an edge are
joint, then their chromatic distances are either 3 or 4.

Suppose $\zeta$ bears a code $(x_{i}^{a_{1}},x_{j}^{a_{2}})\cup(X_{others}%
^{A})$ and $\eta$ bears a code $(x_{i}^{b_{1}+\frac{1}{2}},x_{j}^{b_{1}%
+\frac{1}{2}})\cup(X_{others}^{B}\backslash X_{removed}^{b,b_{1}+1})$.

Case (1) To make $\delta(\zeta,\eta)=3$, all possible combinations of
$(\delta(x_{i})$, $\delta(x_{j})$, $\delta(x_{others}))$ are $(\frac{1}%
{2},\frac{1}{2},2)$, $(\frac{1}{2},\frac{3}{2},1)$, $(\frac{1}{2},\frac{5}%
{2},0)$, and $(\frac{3}{2},\frac{3}{2},0)$. Checking these combinations we
know that only the cases $(\frac{1}{2},\frac{1}{2},2)$ and $(\frac{1}{2}%
,\frac{3}{2},1)$ are allowed, in which $(\frac{1}{2},\frac{1}{2},2)$ gives two
solutions $%
\genfrac{\{}{.}{0pt}{}{a_{1}=b_{1}}{a_{2}=b_{1}+1}%
$ and $%
\genfrac{\{}{.}{0pt}{}{a_{1}=b_{1}+1}{a_{2}=b_{1}}%
$, namely, two cells are opposite to an edge in a 2-I unit, and $(\frac{1}%
{2},\frac{3}{2},1)$ gives two solutions $%
\genfrac{\{}{.}{0pt}{}{a_{1}=b_{1}}{a_{2}=b_{1}+2}%
$ and $%
\genfrac{\{}{.}{0pt}{}{a_{1}=b_{1}+1}{a_{2}=b_{1}-1}%
$, namely, two edges are interval to a cell in a 3-I unit. And easy to know
that their code distances are 4 and 3, respectively.

Case (2) To make $\delta(\zeta,\eta)=4$, all possible combinations of
$(\delta(x_{i})$, $\delta(x_{j})$, $\delta(x_{others}))$ are $(\frac{1}%
{2},\frac{1}{2},3)$, $(\frac{1}{2},\frac{3}{2},2)$, $(\frac{1}{2},\frac{5}%
{2},1)$, $(\frac{3}{2},\frac{3}{2},1)$, $(\frac{3}{2},\frac{5}{2},0)$, and
$(\frac{1}{2},\frac{7}{2},0)$. Similarly, only $(\frac{1}{2},\frac{3}{2},2)$
is allowed to give two solutions $%
\genfrac{\{}{.}{0pt}{}{a_{1}=b_{1}}{a_{2}=b_{1}-1}%
$ and $%
\genfrac{\{}{.}{0pt}{}{a_{1}=b_{1}+1}{a_{2}=b_{1}+2}%
$ that correspond to the two cells being opposite to an edge in a 3-I unit,
with $\gamma=3$.
\end{proof}

Based on the above two propositions, the disjoint E-C relation can be
determined by the below corollary.

\begin{corollary}
Given a cell $\zeta$ and a edge $\eta$, then%
\begin{equation}
\cap(\zeta,\eta)=\varnothing\Leftrightarrow\delta(\zeta,\eta)>4
\end{equation}

\end{corollary}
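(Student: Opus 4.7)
The plan is to derive this corollary from Propositions \ref{Proposition_EC_contain} and \ref{Proposition_EC_joint} via a trichotomy argument on E-C relations, then plug the forbidden $\delta$-values. First I would recall that a cell $\zeta$ and an edge $\eta$ stand in exactly one of three mutually exclusive topological relations: contain ($\cap(\zeta,\eta)=\eta$), joint ($\cap(\zeta,\eta)=\varphi$ for some vertex $\varphi$), or disjoint ($\cap(\zeta,\eta)=\varnothing$). Proposition \ref{Proposition_EC_contain} identifies contain with $\delta(\zeta,\eta)=1$, and Proposition \ref{Proposition_EC_joint} identifies joint with $3\le\delta(\zeta,\eta)\le 4$. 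Hence disjoint holds iff $\delta(\zeta,\eta)\notin\{1,3,4\}$, and to obtain the stated corollary I must show that the only remaining admissible values are $\delta(\zeta,\eta)>4$, i.e.\ that $\delta(\zeta,\eta)=0$ and $\delta(\zeta,\eta)=2$ never occur.

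The case $\delta=0$ would force $\zeta=\eta$ (equi-color), which is ruled out immediately by Theorem \ref{Theorem_unique_base}, since cells and edges have different chromatic bases. The genuinely non-routine step, and the main obstacle, is excluding $\delta=2$. To handle this I would fix an edge $\eta$ generated by $pb\langle p,q\rangle$, so $\eta$'s components at positions $p,q$ are both $b_1+\tfrac{1}{2}$ while all remaining components are integers; meanwhile $\zeta$'s components at positions $p,q$ are integers $a_1,a_2$ with $a_1\ne a_2$. Let $\delta_{pq}$ be the contribution to $\delta$ from positions $p,q$ and $\delta_{\text{oth}}$ be the contribution from the other $n-2$ positions. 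Then $\delta_{pq}\ge 1$ (each of the two half-integer differences is at least $\tfrac{1}{2}$), $\delta_{pq}$ is always an integer, and $\delta_{\text{oth}}$ is a non-negative integer. So $\delta=2$ forces either $(\delta_{pq},\delta_{\text{oth}})=(1,1)$ or $(2,0)$.

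In the $(1,1)$ subcase, $\delta_{pq}=1$ forces $\{a_1,a_2\}=\{b_1,b_1+1\}$, so the multisets of remaining components of $\zeta$ and $\eta$ coincide (both equal $\mathbb{N}\setminus\{b_1,b_1+1\}$ by Property \ref{P_Edge_Base} and the cell base property). Then $\delta_{\text{oth}}$ is the Manhattan distance between two permutations of the same integer multiset, and a standard parity argument — $\sum_i(\sigma_i-\tau_i)=0$ forces $\sum_i|\sigma_i-\tau_i|$ to be even — rules out $\delta_{\text{oth}}=1$. In the $(2,0)$ subcase, writing $u=a_1-b_1$, $v=a_2-b_1$ one gets $|u-\tfrac{1}{2}|+|v-\tfrac{1}{2}|=2$, which forces one of $a_1,a_2$ into $\{b_1,b_1+1\}$ and the other into $\{b_1-1,b_1+2\}$; but $\delta_{\text{oth}}=0$ requires $\{a_1,a_2\}=\{b_1,b_1+1\}$ (to match the remaining multisets componentwise), contradicting the previous sentence. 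Hence $\delta=2$ is impossible, and combining the two exclusions with the trichotomy yields $\cap(\zeta,\eta)=\varnothing\Leftrightarrow\delta(\zeta,\eta)>4$.

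I expect the parity observation (permutations of a common multiset have even Manhattan distance) to be the cleanest way past the obstacle; any direct case-enumeration on $X_{others}$ would be much longer and essentially recapitulate this parity fact.
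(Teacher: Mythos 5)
Your proposal is correct and follows essentially the same route as the paper, which simply reads the corollary off the contain/joint/disjoint trichotomy for E-C relations together with Propositions \ref{Proposition_EC_contain} and \ref{Proposition_EC_joint}. The extra step you supply --- excluding $\delta(\zeta,\eta)=0$ via the cell/edge base distinction and $\delta(\zeta,\eta)=2$ via the equal-multiset, even-Manhattan-distance parity argument on the components away from the edge's bisector pair --- is left implicit in the paper, and your argument for it is sound.
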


We can use $C2E$ and then $E2V$ function, that is, $E2V(C2E(\zeta))$, to
obtain all vertexes contained by a cell. Also we can define a new function
$C2V(\zeta)$ to directly find out all of such that vertexes, using the similar
procedures in $E2V(\eta)$. Therefore using functions $C2E$ and $C2V$, E-C
relations can be also expressed by
\begin{align}
\cap(\zeta,\eta)  & =\eta\Leftrightarrow\eta\in C2E(\zeta)\\
\cap(\zeta,\eta)  & =\varphi\Leftrightarrow\eta\notin C2E(\zeta)\wedge
C2V(\zeta)\cap E2V(\eta)\neq\varnothing\nonumber\\
\cap(\zeta,\eta)  & =\varnothing\Leftrightarrow C2V(\zeta)\cap E2V(\eta
)=\varnothing\nonumber
\end{align}

Note that because some joint vertexes may be hidden in high dimensional
spaces, therefore some joint E-C relations may appear to be disjoint in $%
\mathbb{R}
^{2}$ plane.

\subsubsection{Cell-Cell (C-C) relations}

The C-C relations can also be three types: (1) \emph{connected}: two cells
share a common edge, i.e., $\cap(\zeta_{1},\zeta_{2})=\eta$ (Fig.6m), (2)
\emph{joint}: they share a common vertex, i.e., $\cap(\zeta_{1},\zeta
_{2})=\varphi$ (Fig.6n), and (3) \emph{disjoint}: they do not share any
particles, i.e., $\cap(\zeta_{1},\zeta_{2})=\varnothing$ (Fig.6o). The
connected and joint relations actually correspond to the five C-C relations
occurred in 2-I or 3-I units (Table 1), where the adjacent corresponds to the
connected, and the opposite and interval both correspond to the joint.

\begin{proposition}
\label{Proposition_CC_touch}Given two cells $\zeta_{1}$ and $\zeta_{2}$,%
\begin{equation}
\cap(\zeta_{1},\zeta_{2})=\eta\Leftrightarrow\delta(\zeta_{1},\zeta_{2})=2
\end{equation}

\end{proposition}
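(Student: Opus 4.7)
The forward direction $\cap(\zeta_1,\zeta_2)=\eta\Rightarrow\delta(\zeta_1,\zeta_2)=2$ is immediate from Property~\ref{P_2I3IUnits_Distance} (Table~\ref{Table1}): two cells that share an edge are ``adjacent'' in the local 2-I or 3-I unit containing that edge, and the table records $\delta=2$ for the adjacent C-C relation in both unit types. The real content is the converse, which my plan addresses by first pinning down the combinatorial shape of any two cell codes at chromatic distance 2, then exhibiting the shared edge.

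By the chromatic base property for cells, both $\zeta_1=(x_k^{a_k})_k$ and $\zeta_2=(x_k^{b_k})_k$ are permutations of $\mathbb{N}[0,n-1]$. Let $S=\{k:a_k\neq b_k\}$. Because both tuples range over the same set, $\sum_{k\in S}(a_k-b_k)=0$, while the hypothesis gives $\sum_{k\in S}|a_k-b_k|=2$. Each nonzero $|a_k-b_k|$ is at least $1$, so $|S|\le 2$, and a permutation cannot disagree with another in just one position, hence $|S|=2$. Writing $S=\{i,j\}$, the two constraints force $|a_i-b_i|=|a_j-b_j|=1$ together with $b_i=a_j$ and $b_j=a_i$, which in turn yields $|a_i-a_j|=1$. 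Relabelling, I set $z=\min(a_i,a_j)$ and may assume $\zeta_1$ carries $(x_i^{z},x_j^{z+1})$ at positions $i,j$, while $\zeta_2$ carries $(x_i^{z+1},x_j^{z})$ at the same positions, the two codes agreeing on every other component.

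Now I propose the candidate edge $\eta$ with code $(x_i^{z+\frac12},x_j^{z+\frac12})$ and the same remaining components as $\zeta_1$ and $\zeta_2$. By Property~\ref{P_Edge_Base} this is precisely the shape of an edge created by the bisector $pb\langle i,j\rangle$ from a pre-partition cell whose $i,j$ components both equal $z$; since $\zeta_1$ and $\zeta_2$ coexist as cells of the OACD, that bisector is present and such an edge is realized. A direct computation then gives $\delta(\zeta_1,\eta)=\tfrac12+\tfrac12=1$ and $\delta(\zeta_2,\eta)=1$, so Proposition~\ref{Proposition_EC_contain} applies twice: $\eta$ is a boundary edge of both cells. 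Because the tessellation has no gaps or overlaps, $\eta$ is the entire common boundary, so $\cap(\zeta_1,\zeta_2)=\eta$.

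The main obstacle is the third step — making sure the edge $\eta$ exhibited from the code is actually a particle of the OACD rather than just a formal $n$-tuple. Once one commits to the OACD at sufficient ambient dimension (as the excerpt does for the analogous E-E and E-C propositions), this is guaranteed: the pattern of agreement on all components other than $i,j$ certifies that $\zeta_1$ and $\zeta_2$ lie in the same half-space of every bisector except $pb\langle i,j\rangle$, and on opposite sides of $pb\langle i,j\rangle$, so their common boundary is a segment of that bisector, which is precisely the edge whose code was constructed. All remaining steps are short sign-chasing once the $|S|=2$ reduction has been established.
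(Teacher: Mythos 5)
Your proposal is correct and follows essentially the same route as the paper's proof: the forward direction is read off from Property \ref{P_2I3IUnits_Distance}, and the converse pins down that $\delta=2$ forces the two cell codes to agree everywhere except at two positions $i,j$ where the adjacent values $z,z+1$ are swapped, after which the shared edge $(x_{i}^{z+\frac{1}{2}},x_{j}^{z+\frac{1}{2}})$ is exhibited (the paper does this via $C2E$, you via Property \ref{P_Edge_Base} and Proposition \ref{Proposition_EC_contain}). Your conservation argument forcing exactly two differing components and your explicit justification that the candidate edge is actually realized as a common boundary segment on $pb\langle i,j\rangle$ are somewhat more careful than the paper's corresponding steps, which leave these points implicit.
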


\begin{proof}
We have proved that if two cell are adjacent, then their chromatic distance is
2. Now we need to prove if $\delta(\zeta_{1},\zeta_{2})=2$, then they must be
adjacent. Suppose $\zeta_{1}$ and $\zeta_{2}$ are with codes $(x_{i}^{a_{1}%
},x_{j}^{a_{2}})\cup(X_{others}^{A})$ and $(x_{i}^{b_{1}},x_{j}^{b_{2}}%
)\cup(X_{others}^{B})$. The only way giving $\delta=2$ is that $|a_{1}%
-b_{1}|=1$, $|a_{2}-b_{2}|=1$, and $|X_{others}^{A}-X_{others}^{B}|=0$. This
leads to solutions $%
\genfrac{\{}{.}{0pt}{}{a_{1}=b_{1}}{a_{1}=b_{2}}%
$, $%
\genfrac{\{}{.}{0pt}{}{a_{2}=b_{1}}{a_{2}=b_{2}}%
$, $%
\genfrac{\{}{.}{0pt}{}{a_{1}=b_{2}+1}{a_{1}=b_{1}-1}%
$ and $%
\genfrac{\{}{.}{0pt}{}{a_{2}=b_{2}+1}{a_{2}=b_{2}-1}%
$, then we can obtain the allowed two solutions that (1) $a_{1}=b_{2}$,
$a_{2}=b_{1}$, $a_{1}=b_{1}-1$, $a_{2}=b_{2}+1$; and (2) $a_{1}=b_{2}$,
$a_{2}=b_{1}$, $a_{1}=b_{1}-1$, $a_{2}=b_{2}-1$. Substituting the two
solutions back to chromatic codes of $\zeta_{1}$ and $\zeta_{2}$, then they
turn to such as $\zeta_{1}=(x_{i}^{a_{1}},x_{j}^{a_{1}+1})$, $\zeta_{2}%
=(x_{i}^{a_{1}+1},x_{j}^{a_{1}})$, or $\zeta_{1}=(x_{i}^{b_{1}-1},x_{j}%
^{b_{1}})$, $\zeta_{2}=(x_{i}^{b_{1}},x_{j}^{b_{1}-1})$, or similar codes.
Using $C2E(\zeta)$ we know that they both joint with an edge $(x_{i}%
^{a_{1}+\frac{1}{2}},x_{j}^{a_{1}+\frac{1}{2}})$ or $(x_{i}^{b_{1}-\frac{1}%
{2}},x_{j}^{b_{1}-\frac{1}{2}})$, or similar codes.
\end{proof}

\begin{proposition}
\label{Proposition_CC_joint}Given two cells $\zeta_{1}$ and $\zeta_{2}$,%
\begin{equation}
\cap(\zeta_{1},\zeta_{2})=\varphi\Leftrightarrow\delta(\zeta_{1},\zeta_{2})=4
\end{equation}

\end{proposition}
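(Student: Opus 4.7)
The plan is to prove the two implications separately, in parallel with the proof of Proposition \ref{Proposition_CC_touch}. The forward direction is a direct consequence of Table \ref{Table1} (Property \ref{P_2I3IUnits_Distance}): within a 2-I unit the only vertex-sharing C-C relation is opposite, giving $\delta=4$; within a 3-I unit the two vertex-sharing C-C relations, opposite and interval, both give $\delta=4$. These three configurations exhaust the vertex-sharing cell pairs enumerated by Properties \ref{P_2I_Codes} and \ref{P_3I_Codes}.

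For the reverse direction, I would assume $\delta(\zeta_1,\zeta_2)=4$ and exploit the fact that, by the cellular-base property, both $\zeta_1$ and $\zeta_2$ are permutations of the integer base $\mathbb{N}[0,n-1]$. The position-wise difference between the two codes therefore decomposes into disjoint permutation cycles, and $\delta$ is the sum of the $\ell^1$-costs of those cycles. A $k$-cycle on distinct values $v_1<\cdots<v_k$ contributes at least $2(v_k-v_1)\geq 2(k-1)$, so any cycle of length $\geq 4$ already costs $\geq 6$, and any product of two or more non-trivial cycles costs $\geq 4$ only when every factor is a 2-cycle on consecutive values. This reduces the possibilities to three cases.

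I would then match each case to a vertex-sharing configuration: (i) a single 2-cycle swapping $v$ and $v+2$, with pattern $(2,2)$, recovers the opposite cells $\zeta_{ijk},\zeta_{kji}$ of Property \ref{P_3I_Codes} across a 3-I vertex; (ii) a single 3-cycle on three consecutive values, with pattern $(2,1,1)$, recovers the interval cells $\zeta_{ijk},\zeta_{jki}$ in a 3-I unit; (iii) two disjoint 2-cycles each on consecutive values, with pattern $(1,1,1,1)$, recovers the opposite cells $\zeta_{iu},\zeta_{jv}$ across a 2-I vertex of Property \ref{P_2I_Codes}. In each situation the relevant property exhibits the shared vertex $\varphi$ explicitly, so $\cap(\zeta_1,\zeta_2)=\varphi$ as required.

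The main obstacle is case (ii): the pattern $(2,1,1)$ on three positions is algebraically consistent with various hypothetical cell pairs, and one must verify that the three positions involved really correspond to three bisectors $pb\langle i,j\rangle$, $pb\langle j,k\rangle$, $pb\langle k,i\rangle$ that genuinely concur at a 3-I vertex, rather than merely producing the same cycle by accident. This step leans on the base constraints already established in Properties \ref{P_3I_Base} and \ref{P_3I_Codes}, and inherits the same caveat noted in Propositions \ref{Proposition_EE_joint} and \ref{Proposition_EC_joint}, namely that the proposition is most cleanly stated for $OACD(n,\mathbb{R}^{n-1})$ since some joint vertexes may be hidden in higher-dimensional space and fail to emerge in the $\mathbb{R}^2$ plane.
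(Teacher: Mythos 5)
Your proposal is correct and takes essentially the same route as the paper: the paper likewise splits the two codes into identical and differing components and case-analyzes on the number $m$ of differing positions ($m=2,3,4,\geq 5$), which is exactly your cycle-type analysis (single 2-cycle on values two apart, single 3-cycle on consecutive values, two adjacent transpositions), and then matches the three surviving patterns to the opposite/interval configurations of the 3-I unit and the opposite configuration of the 2-I unit using $C2E$/$E2V$ to exhibit the shared vertex. Your cycle-cost bound $2(v_k-v_1)\geq 2(k-1)$ is just a cleaner packaging of the paper's computations such as $\delta=2(\Delta_1+\Delta_2)$, and the hidden-vertex caveat you flag is the same one the paper acknowledges in the surrounding discussion.
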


\begin{proof}
The properties \ref{P_2I3IUnits_Distance} shows that if two cells are joint to
a vertex, then their chromatic distance is 4. Now let us explore all possible
cases that make chromatic distance is 4.

First we rewrite $\zeta_{1}\ $and $\zeta_{2}$ to $(X_{diff_{1}}^{D_{1}}%
)\cup(X_{same}^{S_{1}})$ and $(X_{diff_{2}}^{D_{2}})\cup(X_{same}^{S_{2}})$,
where $X_{same}^{S_{1}}=X_{same}^{S_{2}}$. This indicates that $\delta
(\zeta_{1},\zeta_{2})$ is only contributed by $X_{diff}$, where for each
component, $x_{diff_{1}}^{d_{1}}\neq x_{diff_{2}}^{d_{2}}$. Suppose $X_{diff}
$ contains $m$ components, then

Case (1) $m=2$, namely, $\gamma(\zeta_{1},\zeta_{2})=2$.

Let $X_{diff_{1}}^{D_{1}}=$ $(x_{i}^{a_{1}},x_{j}^{a_{2}})$ and $X_{diff_{2}%
}^{D_{2}}=$ $(x_{i}^{b_{1}},x_{j}^{b_{2}})$, then we have $a_{1}\neq b_{1}$,
$a_{2}\neq b_{2}$. Also all cells are equi-base, $\zeta_{1}\cong\zeta_{2}$,
and $X_{same}^{S_{1}}=X_{same}^{S_{2}}\Rightarrow S_{1}\cong S_{2}$, therefore
we always have $X_{diff_{1}}^{D_{1}}\cong X_{diff_{2}}^{D_{2}}\Rightarrow%
\genfrac{\{}{.}{0pt}{}{a_{1}=b_{1}}{a_{1}=b_{2}}%
$, $%
\genfrac{\{}{.}{0pt}{}{a_{2}=b_{1}}{a_{2}=b_{2}}%
$. Because $x_{diff_{1}}^{d_{1}}\neq x_{diff_{2}}^{d_{2}}$, therefore we have
$a_{1}=b_{2}$ and $a_{2}=b_{1}$, and then $\delta=2|a_{1}-a_{2}|=4\Rightarrow
a_{1}=a_{2}+2$ or $a_{1}=a_{2}-2$. We thus further rewrite $X_{diff_{1}%
}^{D_{1}}$ and $X_{diff_{2}}^{D_{2}}$ to $(x_{i}^{a_{2}+2},x_{j}^{a_{2}}%
)\cup(x_{k}^{a_{2}+1})$ and $(x_{i}^{a_{2}},x_{j}^{a_{2}+2})\cup(x_{k}%
^{a_{2}+1}) $, because $a_{2}+1$ must be somewhere in $X_{same}$, assuming it
is $k$. Then $\zeta_{1}$ should be bounded by two edges $\eta_{_{11}}%
=(x_{i}^{a_{2}+2}$, $x_{j}^{a_{2}+\frac{1}{2}}$, $x_{k}^{a_{2}+\frac{1}{2}})$,
$\eta_{_{12}}=(x_{i}^{a_{2}+\frac{3}{2}}$, $x_{j}^{a_{2}}$, $x_{k}%
^{a_{2}+\frac{3}{2}})$, and $\zeta_{2}$ should be bounded by two edges
$\eta_{_{21}}=(x_{i}^{a_{2}+\frac{1}{2}}$, $x_{j}^{a_{2}+2}$, $x_{k}%
^{a_{2}+\frac{1}{2}})$, $\eta_{_{22}}=(x_{i}^{a_{2}}$, $x_{j}^{a_{2}+\frac
{3}{2}}$, $x_{k}^{a_{2}+\frac{3}{2}})$. From the properties of 3-I unit we
know that $\eta_{_{11}}$ and $\eta_{_{22}}$ are joint to a $\varphi^{3I}$ and
both in bisector $pb\left\langle j,k\right\rangle $, and in the same way,
$\eta_{_{12}}$ and $\eta_{_{21}}$ are joint to the same $\varphi^{3I}$ and
also both in bisector $pb\left\langle i,k\right\rangle $ -- this is just the
case that the two cells are in a 3-I units and with an opposite relation.

Case (2) $m=3$, namely, $\gamma(\zeta_{1},\zeta_{2})=3$.

We rewrite $X_{diff_{1}}^{D_{1}}$ and $X_{diff_{2}}^{D_{2}}$ to $(x_{i}%
^{a_{1}},x_{j}^{a_{2}},x_{k}^{a_{3}})$ and $(x_{i}^{b_{1}},x_{j}^{b_{2}}%
,x_{k}^{b_{3}})$. Because $X_{diff_{1}}^{D_{1}}\cong X_{diff_{2}}^{D_{2}}$ and
$x_{diff_{1}}^{d_{1}}\neq x_{diff_{2}}^{d_{2}}$, then we have $a_{1}=b_{3}$,
$a_{2}=b_{1}$, $a_{3}=b_{2}$, or $a_{1}=b_{2}$, $a_{2}=b_{3}$, $a_{3}=b_{1}$.
Suppose $a_{2}=a_{1}+\Delta_{1}$ and $a_{3}=a_{1}+\Delta_{1}+\Delta_{2}$, with
$\Delta_{1}\geq1$ and $\Delta_{2}\geq1$, then easy to obtain that
$\delta=2\Delta_{1}+2\Delta_{2}$. Therefore $\delta=4$, only if $\Delta_{1}=$
$\Delta_{2}=1$. The $X_{diff_{1}}^{D_{1}}=(x_{i}^{a_{1}},x_{j}^{a_{1}+1}%
,x_{k}^{a_{1}+2})$ and $X_{diff_{2}}^{D_{21}}=(x_{i}^{a_{1}+1},x_{j}^{a_{1}%
+2},x_{k}^{a_{1}})$ or $X_{diff_{2}}^{D_{22}}=(x_{i}^{a_{1}+2},x_{j}^{a_{1}%
},x_{k}^{a_{1}+1})$, or their corresponding permutations. Using $C2E(\zeta)$,
we can get $\zeta_{1} $ has two edges $\eta_{11}=$ $(x_{i}^{a_{1}+\frac{1}{2}%
},x_{j}^{a_{1}+\frac{1}{2}})$, $\eta_{12}=$ $(x_{j}^{a_{1}+\frac{3}{2}}%
,x_{k}^{a_{1}+\frac{3}{2}})$, and for case $X_{diff_{1}}^{D_{21}}$,
$\zeta_{21}$ has two edges $\eta_{211}=$ $(x_{i}^{a_{1}+\frac{3}{2}}%
,x_{j}^{a_{1}+\frac{3}{2}})$, $\eta_{212}=$ $(x_{i}^{a_{1}+\frac{1}{2}}%
,x_{k}^{a_{1}+\frac{1}{2}})$; and for case $X_{diff_{2}}^{D_{22}}$,
$\zeta_{22}$ has two edges $\eta_{221}=$ $(x_{j}^{a_{1}+\frac{1}{2}}%
,x_{k}^{a_{1}+\frac{1}{2}})$, $\eta_{222}=$ $(x_{i}^{a_{1}+\frac{3}{2}}%
,x_{k}^{a_{1}+\frac{3}{2}})$; Using $E2V(\eta)$ we know that all the six edges
are joint to the same $\varphi^{3I}=(x_{i}^{a_{1}+1},x_{j}^{a_{1}+1}%
,x_{k}^{a_{1}+1})$. Also according to the Proposition
\ref{Proposition_EE_collinear}, the edges in pairs $(\eta_{11},\eta_{211})$,
$(\eta_{12},\eta_{221})$, $(\eta_{212},\eta_{222})$ are both in the same
bisector. These cell-edge-vertex relations are just the case that two cells
$\zeta_{21}$ and $\zeta_{22}$ are in interval relations to the cell $\zeta
_{1}$.

Case (3) $m=4$, namely, $\gamma(\zeta_{1},\zeta_{2})=4$.

We rewrite $X_{diff_{1}}^{D_{1}}$ and $X_{diff_{2}}^{D_{2}}$ to $(x_{i}%
^{a_{1}},x_{j}^{a_{2}},x_{u}^{a_{3}},x_{v}^{a_{4}})$ and $(x_{i}^{b_{1}}%
,x_{j}^{b_{2}},x_{u}^{b_{3}},x_{v}^{b_{4}})$. Because each component in
$X_{diff}$ at least contributes $1$ chromatic distance to $\delta$, therefore
to make $\delta=4$, it must be $\delta(x_{i})=\delta(x_{j})=\delta
(x_{u})=\delta(x_{v})=1$. Then the solutions are $%
\genfrac{\{}{.}{0pt}{}{a_{1}=b_{1}+1}{a_{1}=b_{1}-1}%
$, $%
\genfrac{\{}{.}{0pt}{}{a_{2}=b_{2}+1}{a_{2}=b_{2}-1}%
$, $%
\genfrac{\{}{.}{0pt}{}{a_{3}=b_{3}+1}{a_{3}=b_{3}-1}%
$, $%
\genfrac{\{}{.}{0pt}{}{a_{4}=b_{4}+1}{a_{4}=b_{4}-1}%
$, and also it is needed that $X_{diff_{1}}^{D_{1}}\cong X_{diff_{2}}^{D_{2}}%
$. Assume $a_{2}=a_{1}+\Delta_{1}$, $a_{3}=a_{1}+\Delta_{1}+\Delta_{2}$, and
$a_{4}=a_{1}+\Delta_{1}+\Delta_{2}+\Delta_{3}$, then the only solution for
$X_{diff_{2}}^{D_{2}}$ is ($a_{1}+\Delta_{1}$, $a_{1}$, $a_{1}+\Delta
_{1}+\Delta_{2}+\Delta_{3}$, $a_{1}+\Delta_{1}+\Delta_{2})$ with $\Delta
_{1}=\Delta_{3}=1$, corresponding to the opposite cell in 2-I unit.

Case (4) $m\geq5$.

Because each components in $X_{diff}$ at least contribute $1$ chromatic
distance to $\delta$, therefore $\delta\geq5$.
\end{proof}

Another approach to reason C-C relations is using $C2E(\zeta)$ function to
calculate all edges of two cells, and if among them two edges are equal, then
the two cells must be connected. We can also use $C2V(\zeta)$ to calculate all
vertexes contained by two cells and hence determine if they are joint with a
vertex, but here we will encounter the same problem that the joint vertex is hidden.

\subsection{Spatial topology between complexes}

Real objects or geographical entities in space usually occupy massive spatial
particles. Topological relations and computations among complexes are hence
much more complicated than those among particles. As the union set of
particles, a complex may contain different types of particles, for example,
containing two cells, two edges, and three vertexes, such complexes are called
\emph{mixed complexes}; or it contains only a single type of particles, for
example, containing only vertexes, edges, or cells, such complexes are called
\emph{uniform complexes}. In addition, there are also two information
scenarios: for a given complex, (1) we know its code as well as its all
elemental particles, and (2) we only know its code but do not know its
elemental particles. This section demonstrates an tentative study of spatial
complex topology, particularly focusing on the most important uniform complex
-- cluster, as well as the scenario (1) that we know each elemental cell.

\subsubsection{Spatial connectivity of a cluster}

The spatial connectivity is an important issue for analyzing complexes and
clusters. In a general sense, the connectivity of clusters in OACD and SCM is
similar to those in graph theory, complex network, algebraic geometry, and
point set topology. A disconnected cluster is usually treated as a number of
connected clusters rather than a single cluster.

Let us define the connectivity of a cluster. If a cluster $\xi$ contains two
cells $\zeta_{1}$ and $\zeta_{2}$ and they are connected as in Proposition
\ref{Proposition_CC_touch}, namely, $\delta(\zeta_{1},\zeta_{2})=2$, then
there is a \emph{path} linking them, denoted by $\rho(\zeta_{1},\zeta_{2})$.
If a cluster contains three cells $\zeta_{1}$, $\zeta_{2}$, and $\zeta_{3}%
,$and there is a path $\rho(\zeta_{1},\zeta_{2})$, and another path
$\rho(\zeta_{2},\zeta_{3})$, then we define a path $\rho(\zeta_{1},\zeta_{3})$
between $\zeta_{1}$ and $\zeta_{3}$, and call them \emph{path-connected} by a
\emph{path-cell} $\zeta_{2}$, that is, $\rho(\zeta_{1},\zeta_{3})=(\zeta_{2}%
)$. Similarly, any two cells are path-connected if they are linked by a series
of path-cells.

\begin{definition}
Given a cluster $\xi\{\zeta_{1},\zeta_{2},\ldots,\zeta_{n}\}$, it is
\emph{connected} if it meets two conditions: (1) any two cells are
path-connected, and (2) all path-cells are the elements of the cluster.
\end{definition}

\begin{notation}
The function $Conn(\xi)$ returns the connectivity of $\xi$. It can be carried
out by steps (1) select any one cell from $\xi$ as the seed of the connected
set $\boldsymbol{C}_{c}$ and the other cells remain as the waiting-list set
$\boldsymbol{C}_{w}$, (2) search cells in $\boldsymbol{C}_{w}$ to find out the
cell $\zeta_{w}$ which is connected to any cell $\zeta_{c}$ in $\boldsymbol{C}%
_{c}$, that is,\thinspace$\delta(\zeta_{c},\zeta_{w})=2$. (3) If found, then
move $\zeta_{w}$ from $\boldsymbol{C}_{w}$ to $\boldsymbol{C}_{c}$, and repeat
step (2) until $\boldsymbol{C}_{w} $ becomes empty, and then return
$Conn(\xi)=1$, meaning $\xi$ is connected; If not found, return $Conn(\xi)=0$,
meaning $\xi$ is disconnected.
\end{notation}

\subsubsection{Types and reasoning of cluster-cluster topological relations}

Given two clusters $\xi_{1}$ and $\xi_{2}$, their cluster-cluster (Cs-Cs)
topological relations are demonstrated in Fig.7, such as equal, contain,
touch, and overlap. Because clusters are union set of cells and if their
elemental cells are known, say, $\xi_{1}=\{\zeta_{11},\zeta_{12},\cdots
,\zeta_{1n}\}$ and $\xi_{2}=\{\zeta_{21},\zeta_{22},\cdots,\zeta_{2m}\} $,
then some of Cs-Cs relations are easy to determine by using below set
operations.%
\begin{align}
\xi_{1}\text{ equals }\xi_{2}  & \Leftrightarrow\xi_{1}\cap\xi_{2}=\xi_{1}%
=\xi_{2}\\
\xi_{1}\text{ contains }\xi_{2}  & \Leftrightarrow\xi_{1}\supset\xi_{2}\text{
}\nonumber\\
\xi_{1}\text{ disjoints }\xi_{2}  & \Leftrightarrow\xi_{1}\cap\xi
_{2}=\varnothing\nonumber\\
\xi_{1}\text{ overlaps }\xi_{2}  & \Leftrightarrow\xi_{1}\cap\xi_{2}%
\neq\varnothing\neq\xi_{1}\neq\xi_{2}\nonumber
\end{align}

\begin{figure*}
\centerline{\includegraphics[width=4in]{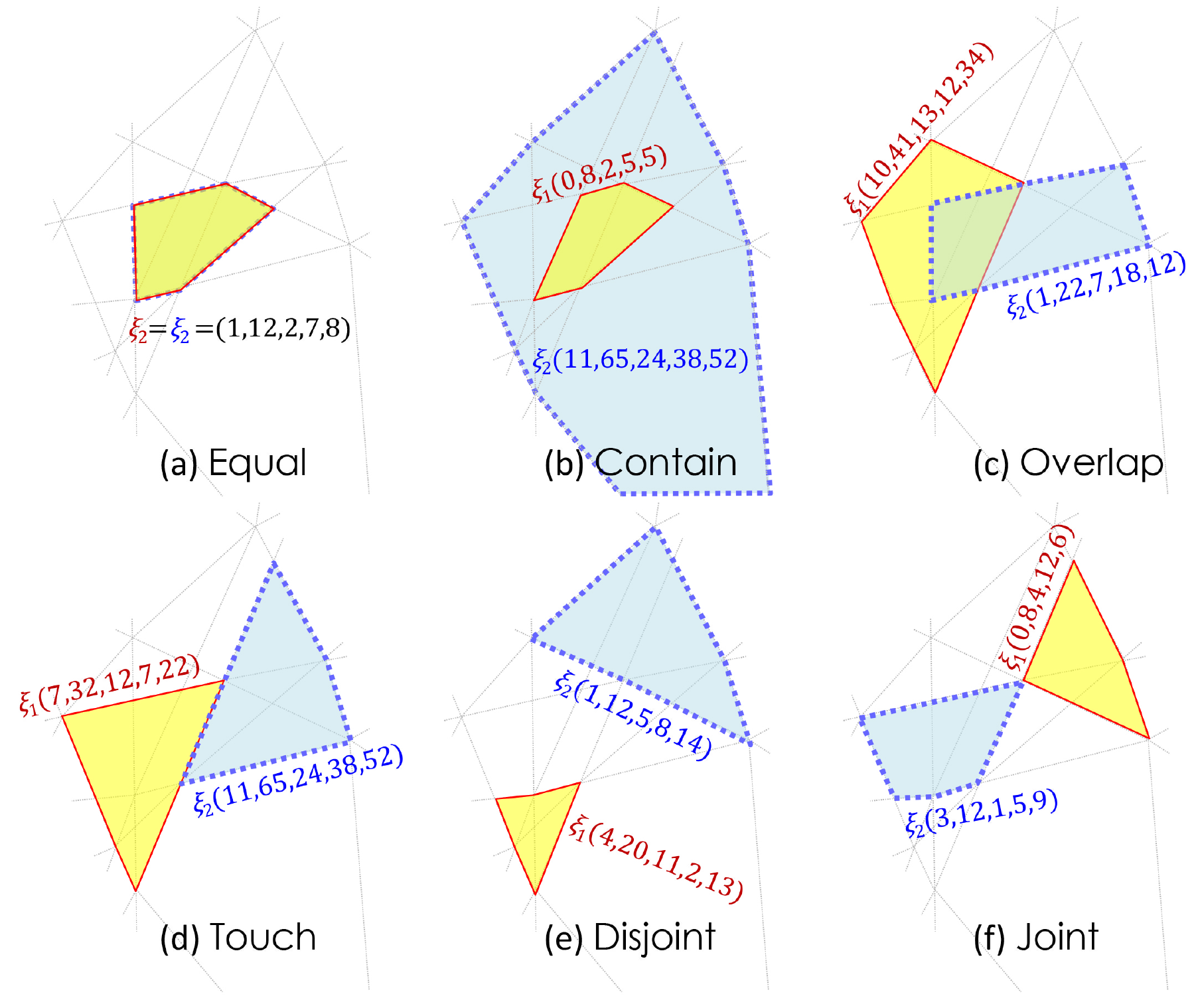}}
\caption{Six types of complex topological relations in full-OACD.}
\label{Fig7}
\end{figure*}

Another usual topological relation between two clusters is adjacency (Fig.7d),
which can be determined by%
\begin{equation}
\xi_{1}\text{ touch }\xi_{2}\Leftrightarrow\xi_{1}\cap\xi_{2}=\varnothing
\wedge Conn(\xi_{1}\cup\xi_{2})=1
\end{equation}
Or we can use $C2E$ and $C2V$ function to compare their edges and vertexes,
that is,%
\begin{align}
\xi_{1}\text{ touch }\xi_{2}  & \Leftrightarrow\xi_{1}\cap\xi_{2}%
=\varnothing\wedge C2E(\xi_{1})\cap C2E(\xi_{2})\neq\varnothing\\
\xi_{1}\text{ joint }\xi_{2}  & \Leftrightarrow C2E(\xi_{1})\cap C2E(\xi
_{2})=\varnothing\wedge C2V(\xi_{1})\cap C2V(\xi_{2})\neq\varnothing\nonumber
\end{align}
where $C2E(\xi)=\underset{\zeta\in\xi}{\cup}C2E(\zeta)$ and $C2V(\xi
)=\underset{\zeta\in\xi}{\cup}C2V(\zeta)$.

A more comprehensive but perhaps more complicated method to explore Cs-Cs
relations is examining all C-C relations among their elemental cells. Given
two complexes $\Theta_{1}(\Omega_{11},\Omega_{12},\ldots,\Omega_{1n})$ and
$\Theta_{2}(\Omega_{21},\Omega_{22},\ldots,\Omega_{2m})$, their
\emph{chromatic-distance matrix} is defined by $dM(\Theta_{1},\Theta
_{2})=[\delta_{ij}]_{n\times m}$, where $\delta_{ij}=\delta(\Omega_{1i}%
,\Omega_{2j})$, that is,
\begin{equation}
dM(\Theta_{1},\Theta_{2})=\left[
\begin{array}
[c]{cccc}%
\delta(\Omega_{11},\Omega_{21}) & \delta(\Omega_{11},\Omega_{22}) & \cdots &
\delta(\Omega_{11},\Omega_{2m})\\
\delta(\Omega_{12},\Omega_{21}) & \delta(\Omega_{12},\Omega_{22}) & \cdots &
\delta(\Omega_{12},\Omega_{2m})\\
\vdots & \vdots & \ddots & \vdots\\
\delta(\Omega_{1n},\Omega_{21}) & \delta(\Omega_{1n},\Omega_{22}) & \cdots &
\delta(\Omega_{1n},\Omega_{2m})
\end{array}
\right]
\end{equation}

Given a complex $\Theta$, $dM(\Theta,\Theta)$ is also called the
\emph{internal matrix} of $\Theta$, and denoted by $iM(\Theta)$, which can be
used to determine the connectivity of a cluster. Replacing all $\delta=2$ in
$iM(\xi)$ by 1 and all others by 0, then we get an adjacency matrix $aM(\xi)$,
the same one used in graph theory. The $aM(\xi)$ can be transferred to a
reachability matrix $rM(\xi)=$ $aM(\xi)+$ $aM(\xi)^{2}+\cdots+aM(\xi)^{n}$, or
by such as Floyd-Warshall, Thorup, or Kameda's algorithms \cite{Cormen2001},
\cite{Thorup2004}, \cite{Kameda1975}. If $rM(\xi)=1$, then $\xi$ is an
connected cluster.

By using Proposition \ref{Proposition_CC_touch} and \ref{Proposition_CC_joint}%
, we can determine Cs-Cs relations by whether some particular chromatic
distances are found in $dM$. For example, if we found 0 or 2 in $dM$, then it
means a cell in one cluster is equal or connected to a cell in the other cluster.

\begin{notation}
Function $cdn(dM,k)$ returns the number of $\delta(\Omega_{1},\Omega_{2})=k$
in a chromatic-distance matrix $dM(\Theta_{1},\Theta_{2})$. $k$ also can be
some conditions such as $>0$, or $\neq2$.
\end{notation}

We then can determine Cs-Cs relations by using $dM$, $cdn$ function, and the
below rules.%
\begin{align}
\xi_{1}\text{ equals }\xi_{2}  & \Leftrightarrow|\xi_{2}|=\tfrac{1}%
{2}cdn(dM,0)=|\xi_{1}|\\
\xi_{1}\text{ contains }\xi_{2}  & \Leftrightarrow|\xi_{2}|=\tfrac{1}%
{2}cdn(dM,0)<|\xi_{1}|\text{ }\nonumber\\
\xi_{1}\text{ overlaps }\xi_{2}  & \Leftrightarrow1\leq\tfrac{1}%
{2}cdn(dM,0)<\min(|\xi_{1}|,|\xi_{2}|)\nonumber\\
\xi_{1}\text{ joint }\xi_{2}  & \Leftrightarrow cdn(dM,\leq2)=0\wedge
cdn(dM,4)>0\nonumber\\
\xi_{1}\text{ touch }\xi_{2}  & \Leftrightarrow cdn(dM,0)=0\wedge
cdn(dM,2)>0\nonumber\\
\xi_{1}\text{ disjoints }\xi_{2}  & \Leftrightarrow cdn(dM,\leq4)=0\nonumber
\end{align}

where $dM=dM(\xi_{1},\xi_{2})$, and $|\xi|$ is the cardinal number of $\xi$,
that is, if $\xi$ is a $m$-cell cluster, then $|\xi|=m$.

The third methods to determine Cs-Cs relations is directly using their
chromatic codes and distance $\delta(\xi_{1},\xi_{2})$, for example, using the
codes in Fig.7. Through tentative studies we found the general rule that `the
closer the chromatic distance, the closer the spatial topology' \cite{Zhu2010}%
. However, the full investigation and more rigorous mathematics remain for
future work.

\section{Discussions and summary}

As one type of spatial chromatic tessellations, full-OACDs provide a scheme to
partition and encode space. Technically a full-OACD is an irregular discrete
spatial data model based necessarily on the given object sets. If we have
enough perception, we should see that the similar schemes provide a new
approach to study discrete geometry. In SCT, as the generator number
increases, the cell number and neighborhood number of each cell will become
larger and larger, and the size of each cell will become smaller and smaller.
This property is different from other discrete tessellation models, such as
raster model and Voronoi diagrams. For example, the neighborhood number of a
pixel in a raster model is always 4 or 8, even though its spatial resolution
may be very high. When the generator number turns to be the infinite, the
space represented by SCT turns to be a continuous space, and its topology
turns to be the classic point-set topology.

Chromatic codes are the keys for characterization, computation and analysis of
spatial particles and their complexes. Spatial coding is a new topic in GIS.
As a scheme of spatial coding, full-OACD is still in its immature stage, where
many problems and directions remain unanswered and unexplored. Below we
discuss some issues that might be worth further investigation.

\textbf{Vertex types}. There are two types of vertexes in full-OACD: 2-I and
3-I vertexes, with quite different code bases. The 2-I's codes contain
half-integers but the 3-I's do not. 3-I vertexes actually are those
degenerated cells, also called singular cells in pervious studies
\cite{Zhu2010}. Instead of using perpendicular bisectors, if we use weighted
perpendicular bisectors, then 3-I vertexes will change their faces to real
cells, accompanying some new edges and 2-I vertexes, see an example in Fig.8a.
In such diagrams, particle bases are quite different from those in full-OACDs.

\begin{figure*}
\centerline{\includegraphics[width=5in]{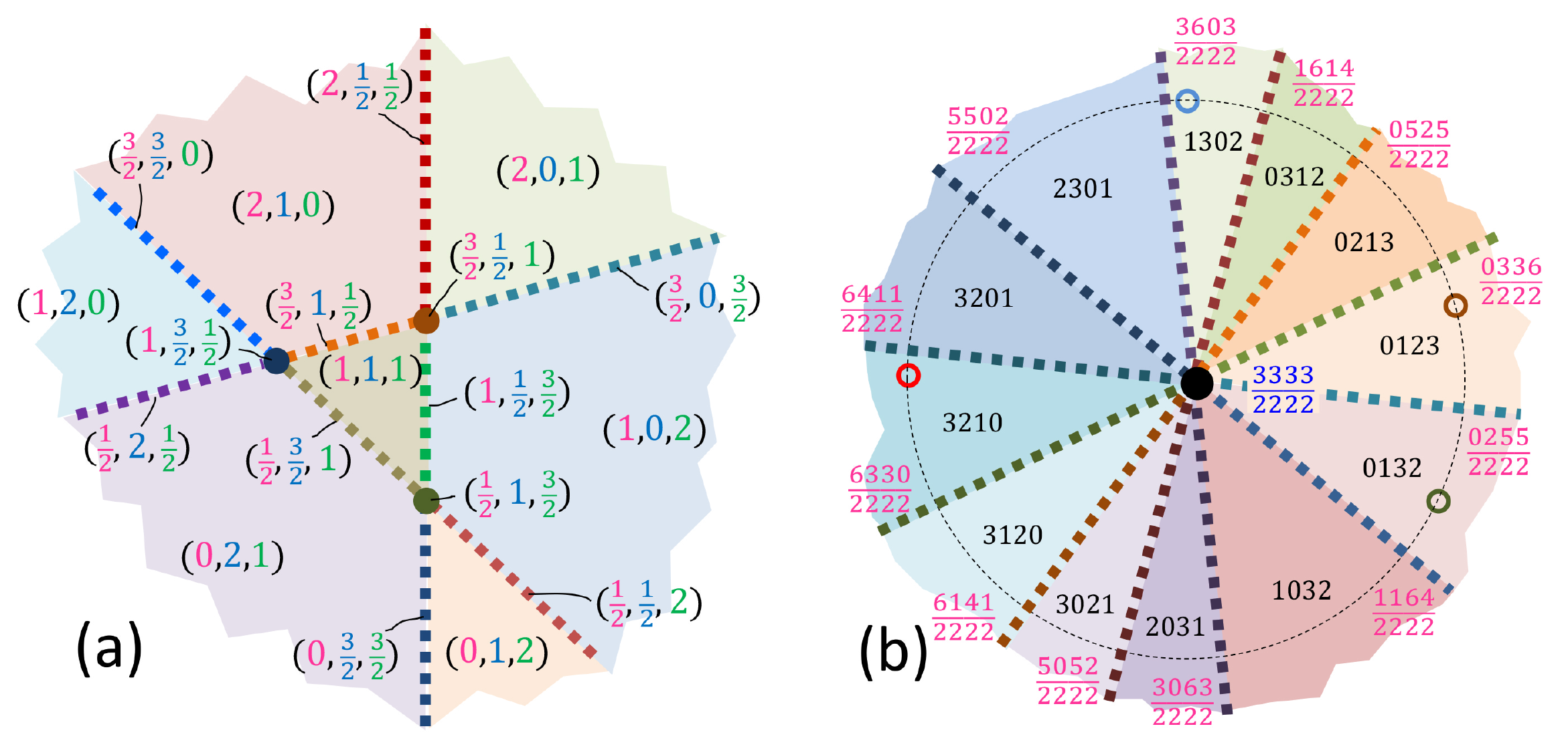}}
\caption{Other types of spatial chromatic tessellations. (a)
Diagrams generated by weighted bisectors; (b) Four generators are concyclic.}
\label{Fig8}
\end{figure*}

In addition, there are no more other types of vertexes in full-OACDs, such as
4-I or 5-I, since we have excluded them by assuming the generators being in
general cases. For non-general cases, for example, if 4 points are concyclic,
then their six perpendicular bisectors will intersect at the center of a
circle, and hence generate a 6-I vertex with code such as $(\frac{3}{2}%
,\frac{3}{2},\frac{3}{2},\frac{3}{2})$, see Fig.8b. Although we can exclude 4
concyclic points from the plane, in 3d space, 4 points are generally always on
a sphere, expect they are all in a plane, implying that many 6-I vertexes will
be found in $OACD(n,%
\mathbb{R}
^{3})$. This is similar to that 3 points are generally always concyclic in a
plane, except they are all in a line.

\textbf{Hidden spatial particles}. In terms of the Propositions
\ref{Proposition_EE_joint} and \ref{Proposition_CC_joint}, as well as the
results of functions $C2E$ and $E2V$, many spatial particles should exist in
full-OACDs but in one real space they are hidden. Checking the codes of the
hidden particles, we could not find any structural and permutation differences
from those emerged particles. However, the hidden particles will indeed emerge
in full-OACDs if we change point patterns of the generator set, but if we did
so, some previous emerged particles will be hidden again. Therefore a big
challenge of full-OACD is to determine what cells are hidden by given a kind
of generator pattern.

Hidden particles and complex codes can be also applied to analyze generator
patterns. For example, the complex of all 3-I vertexes in Fig.3b has a code
$(t_{1}^{21},t_{2}^{36},t_{3}^{43},t_{4}^{6},t_{5}^{19},t_{6}^{40})$,
indicating this complex is much closer to the generator $t_{3}=43$ than to
$t_{4}=6$. Note that in a SCM or full-OACD$(n,%
\mathbb{R}
^{n-1})$, all particles are emerged, but when it is mapped into the lower and
lower dimensional spaces, more and more particles will be hidden.

\textbf{From coded space to real space}. If we compare the belonging
relationships of different models proposed in SCM, the result should be that
OACD $\subseteq$ full-OACD $\subseteq$ SCT $\subseteq$ SCM. Full-OACD is
generated from half-space partitions, so it is still a type of spatial
chromatic tessellations (SCT), namely, a mapping from SCM spaces, which are
usually in higher dimensions, to real spaces, which are usually in lower
dimensions. From the perspective of SCM, there is a question that how to
understand the half-integers in edge codes. If we are allowed to use
half-integers to make space, then how about if we use other numbers such as
$\frac{1}{4}$-integers. It seems that edges and vertexes are not real spaces,
just as we often say that lines are only with lengths but no areas. When we
use a pen to draw a line to partition a piece of paper, it appears that we get
three parts: half at right, half at left, and the middle line that cannot be
assigned to any half. When we use a scissors to cut a piece of paper, however,
we can only get two pieces of paper but never three parts. Therefore, we would
like to emphasize that in full-OACD and SCM, the cell is the elementary
subspace, whereas edges, vertexes and other lower dimensional subspaces are
only boundaries. Therefore, in order to use coded spaces to represent the
real-world spaces, we suggest only use cells. It is like in raster model, any
spatial entities and objects are always represented by pixels, no matter it is
a point, line, or area.

\end{document}